\lstdefinestyle{code} {basicstyle = \linespread{1}\ttfamily\small, showstringspaces = false}
\newtheorem{conj}{Conjecture}
\newtheorem{theorem}[conj]{Theorem}
\newtheorem{prop}[conj]{Proposition}
\newtheorem{assumption}{Assumption} 
\newtheorem{lemma}{Lemma} 
\newtheorem{corollary}{Corollary}
\theoremstyle{definition}
\newtheorem{remark}{Remark}
\renewcommand{\b}[1]{\ensuremath{\bm{\mathrm{#1}}}}
\DeclareMathOperator*{\argmin}{arg\,min}
\DeclareMathOperator*{\argmax}{arg\,max}
\newcommand{\be}{\bm{e}}
\newcommand{\bu}{\bm{u}}
\newcommand{\bv}{\bm{v}}
\newcommand{\by}{\bm{y}}
\newcommand{\bA}{\bm{A}}
\newcommand{\bB}{\bm{B}}
\newcommand{\bD}{\bm{D}}
\newcommand{\bE}{\bm{E}}
\newcommand{\bF}{\bm{F}}
\newcommand{\bG}{\bm{G}}
\newcommand{\bH}{\bm{H}}
\newcommand{\bI}{\bm{I}}
\newcommand{\bL}{\bm{L}}
\newcommand{\bM}{\bm{M}}
\newcommand{\bQ}{\bm{Q}}
\newcommand{\bU}{\bm{U}}
\newcommand{\bV}{\bm{V}}
\newcommand{\bW}{\bm{W}}
\newcommand{\bX}{\bm{X}}
\newcommand{\bY}{\bm{Y}}
\newcommand{\bZ}{\bm{Z}}
\newcommand{\cC}{\mathcal{C}}
\newcommand{\cE}{\mathcal{E}}
\newcommand{\cM}{\mathcal{M}}
\newcommand{\cO}{\mathcal{O}}
\newcommand{\cR}{\mathcal{R}}
\newcommand{\cS}{{\mathcal{S}}}
\newcommand{\EE}{\mathbb{E}}
\newcommand{\PP}{\mathbb{P}}
\newcommand{\RR}{\mathbb{R}}
\newcommand{\bbeta}{\bm{\beta}}
\newcommand{\bgamma}{\bm{\gamma}}
\newcommand{\bepsilon}{\bm{\epsilon}}
\newcommand{\bdelta}{\bm{\delta}}
\newcommand{\btheta}{\bm{\theta}}
\newcommand{\bomega}{\bm{\omega}}
\newcommand{\bGamma}{\bm{\Gamma}}
\newcommand{\bDelta}{\bm{\Delta}}
\newcommand{\bTheta}{\bm{\Theta}}
\newcommand{\bOmega}{\bm{\Omega}}
\DeclarePairedDelimiter{\floor}{\lfloor}{\rfloor}
\def\beq{\begin{equation}\begin{aligned}[b]}
\def\eeq{\end{aligned}\end{equation}}
\def\Cov{\text{Cov}}   
\def\PP{\mathbb{P}}
\def\EE{\mathbb{E}}
\def\RR{\mathbb{R}}
\def\wh{\widehat}
\def\bI{{\bm I}}
\def\wt{\widetilde}
\def\rank{\textrm{rank}}
\def\i{\infty}
\def\diag{\textrm{diag}}
\def\g{\gamma}
\def\l12{\ell_1/\ell_2}
\def\op{{\rm op}}
\def\tr{{\rm tr}}
\def\sw{\Sigma_W}
\def\se{\Sigma_E}
\def\lbdIj{\lambda_1^{(j)}}
\def\lbdj{\lambda_2^{(j)}}
\def\seps{\Sigma_{\epsilon}}
\providecommand{\norm}[1]{\|#1\|}
\providecommand{\bnorm}[1]{\big\|#1\big\|}
\newcommand*{\supp}{\mathrm{supp}}
\title{Inference in  High-dimensional Multivariate Response Regression with Hidden Variables}
\author{Xin Bing \thanks{Department of 
Pure Mathematics and Mathematical Statistics, University of Cambridge. E-mail: \texttt{xb228@cam.ac.uk}}
~~~~~ Wei Cheng \thanks{Center for Computational Molecular Biology, Brown University. E-mail: \texttt{wei\_cheng1@brown.edu}}
~~~~~ Huijie Feng \thanks{Microsoft.  E-mail: \texttt{huijiefeng@microsoft.com}.}
~~~~~Yang Ning \thanks{Department of Statistics and Data Science, Cornell University.  E-mail: \texttt{yn265@cornell.edu}.}}
\date{}
\begin{document}

\maketitle

\begin{abstract}
    This paper studies the inference of the regression coefficient matrix under multivariate response linear regressions in the presence of hidden variables. 
    A novel procedure for constructing confidence intervals of entries of the coefficient matrix is proposed. Our method first utilizes the multivariate nature of the responses by estimating and adjusting the hidden effect to construct an initial estimator of the coefficient matrix. By further deploying a low-dimensional projection procedure to reduce the bias introduced by the regularization in the previous step, a refined estimator is proposed and shown to be asymptotically normal. The asymptotic variance of the resulting estimator is derived with closed-form expression and can be consistently estimated. In addition, we propose a testing procedure for the existence of hidden effects and provide its theoretical justification. Both our procedures and their analyses are valid even when the feature dimension and the number of responses exceed the sample size. Our results are further backed up via extensive simulations and a real data analysis. 
\end{abstract}
{\em Keywords:}  High-dimensional regression, multivariate response regression, hidden variables, confounding, confidence intervals, hypothesis testing, surrogate variable analysis.

\section{Introduction}\label{sec_intro}

Multivariate response linear regression is a widely used approach of discovering the association between a response vector $Y$ and a feature vector $X$ in a variety of applications \citep{anderson_book}. Oftentimes, there may exist some unobservable, hidden, variables $Z$ that correlate with both the response $Y$ and the feature $X$. For example, in genomics studies, $Y$ typically represents different gene expressions, $X$ contains a set of exposures (e.g. levels of treatment), and $Z$ corresponds to the unobserved batch effect  \citep{Leek2008,Gagnon2012}. In causal inference, one can interpret $X$ as the multiple causes of $Y$ and treat $Z$ as confounders, which are unobserved due to cost constraint or ethical issue \citep{silva2006learning,janzing2018detecting,wang2019blessings}. Since $X$ and $Z$ are often correlated, ignoring the hidden variables $Z$ in the regression model may lead to spurious association between $X$ and $Y$. Therefore, accounting for the existence of such hidden variables is critical to draw valid scientific conclusions.

This paper studies the following multivariate response linear regression with hidden variables,
\begin{equation}\label{model}
    Y = \bTheta^T X + \bB^T Z + E,
\end{equation}
where $Y\in \RR^m$ is the multivariate response, $X\in \RR^p$ is the random vector of $p$ observable features while $Z\in \RR^K$ is the random vector of $K$ unobservable, hidden, variables, that are possibly correlated with $X$. The number of hidden variables $K$ is unknown and is assumed to be no greater than the number of responses $m$. The random vector $E\in \RR^m$ is the additive noise independent of $X$ and $Z$. Assume the observed data $(\bY, \bX) \in (\RR^{n\times m}, \RR^{n\times p})$ consist of $n$ i.i.d. samples $(\bY_i, \bX_i)$, for $i\in [n] := \{1,\ldots,n\}$, from model (\ref{model}). 
Throughout the paper,  we focus on the high-dimensional setting, that is both $m$ and $p$ can grow with the sample size $n$. Without loss of generality, we assume $\EE(X)=\b0$ and $\EE(Z)=\b0$ as we can always center the data $\bY$ and $\bX$.

In model (\ref{model}), the coefficient matrix $\bTheta \in \RR^{p\times m}$ encodes the association between $X$ and $Y$ after adjusting the hidden variables $Z$, and is of our primary interest. More precisely, for any given $i\in[p]$ and $j\in[m]$, we are interested in constructing confidence intervals for $\Theta_{ij}$, or equivalently, testing the following hypothesis:
\begin{equation}\label{def_target}
    H_{0,\Theta_{ij}}: ~  \Theta_{ij} = 0, \qquad \textrm{versus} \qquad H_{1,\Theta_{ij}}:~ \Theta_{ij}\ne 0.
\end{equation}
Our secondary interest is to answer the question that whether the $j$th response $Y_j$ is affected by any of the hidden variables. Since each column $\bB_j\in \RR^K$ of the matrix $\bB = (\bB_1, \ldots, \bB_m)$ corresponds to the coefficient of the hidden effects of $Z$ on $Y_j$, we can answer the above question by testing the hypothesis:
\beq\label{def_target_B}
    H_{0, B_j}: \bB_{j} = \b0,\qquad \textrm{versus} \qquad H_{1,B_j}: \bB_j \ne \b0.
\eeq
In particular, if the null hypothesis $H_{0, B_j}$ is rejected, then the effect of the hidden variables $Z$ on $Y_j$ is significant, suggesting the necessity of adjusting the hidden effects for modelling $Y_j$.

Since we allow $X$ and $Z$ to be correlated in (\ref{model}), we can decouple their dependence via the $L_2$ projection of $Z$ onto the linear space of $X$:
\beq\label{model_ZX}
    Z = \bA^T X + (Z - \bA^T X) := \bA^T X + W,
\eeq
where $\bA = (\EE[XX^T])^{-1}\EE[XZ^T]\in \RR^{p\times K}$ and $W=Z - \bA^T X$ satisfies $\EE[WX^T]=\b0$. While $W$ and $X$ are uncorrelated, we do not require them to be independent. In other words, (\ref{model_ZX}) does not imply that $X$ and $Z$ follow a linear regression model. Indeed, our framework allows any nonlinear dependence structure between $X$ and $Z$ and is therefore model free for the joint distribution of $(X,Z)$. Under such decomposition, 
the original model (\ref{model}) can be rewritten as 
\beq\label{model_linear}
    Y = (\bTheta + \bA\bB)^T X + \epsilon
\eeq
where the new error term $\epsilon \coloneqq \bB^T W + E$ has zero mean and is uncorrelated with $X$. Before we elaborate how we make inference on $\Theta_{ij}$ and $\bB_j$, we start with a brief review of the related literature.



\subsection{Related literature}

Surrogate variable analysis (SVA) has been widely used to estimate and make inference on $\bTheta$ under model (\ref{model}) for genomics data \citep{Leek2008,Gagnon2012}. Recent progress has been made in \cite{Lee2017,wang2017,McKennan19} towards both developing new methodologies and understanding the theoretical properties of the existing approaches.  However, all existing SVA-related approaches rely on the ordinary least squares (OLS) between $\bY$ and $\bX$ to estimate $\bTheta + \bA\bB$ in (\ref{model_linear}), hence are only feasible when the feature dimension, $p$, is small comparing to the sample size $n$. As researchers tend to collect far more features than before due to advances of modern technology, there is a need of developing new method which allows the feature dimension $p$ to grow with, or even exceed, the sample size $n$. 

More recently, \cite{bing2020adaptive} studied the estimation of $\bTheta$ under model (\ref{model}). Their proposed procedure assumes a row-wise sparsity structure on $\bTheta$ and is suitable for $p$ that is potentially greater than $n$. Despite the advance on the estimation aspect, conducting inference on $\bTheta$ remains an open problem when $p$ is larger than $n$. The extra difficulty of making inference comparing to estimation in the high-dimensional regime is already visible in the ideal scenario, the sparse linear regression models, without any hidden variable, see  \cite{zhangzhang2014,vandegeer2014,belloni2015uniform,Javanmard2018,ning2017general}, among many others. 
Inference of the linear coefficient in the presence of hidden variables, to the best of our knowledge,  is only studied in  \cite{guo2020doubly} for the univariate case $\by = \bX\btheta + \bZ\bbeta +\bepsilon$ where $\by\in \RR^n$ is the univariate response, $\bX\in \RR^{n\times p}$ consists of the high-dimensional feature and $\bZ\in \RR^{n\times K}$ represents the hidden confounders. By further assuming $\bX = \bZ\bGamma^T + \bW'$ for some loading matrix $\bGamma$ and additive error $\bW'$ 
independent of $\bZ$, \cite{guo2020doubly} proposed a doubly debiased lasso procedure for making inference on entries of $\btheta$. Our situation differs from theirs in that we have multiple responses. By borrowing strength across multivariate responses, we are able to remove the hidden effects without assuming any model between $\bX$ and $\bZ$. Moreover, combining multiple responses provides additional information on the coefficient matrix, $\bB$, of the hidden variable, which not only helps to remove the hidden effects in our estimation procedure for $\bTheta$, but also enables us to test and quantify the hidden effects for each response.

In model  (\ref{model_linear}), when $\bTheta$ is sparse and the matrix $\bL \coloneqq \bA\bB$ has a small rank $K$, our problem is related to the recovery of an additive decomposition of a sparse matrix and a low-rank matrix, as studied by \cite{chandrasekaran2012latent,Candes,Hsu2011}, just to name a few. In order to identify and estimate  $\bTheta$, \cite{chandrasekaran2012latent} proposed a penalized $M$-estimator under  certain incoherence conditions between $\bTheta$
 and $\bL$. By contrast, our identifiability conditions (see, Section \ref{sec_id}) differ significantly from theirs, hence leading to a completely different procedure for estimation. Furthermore, this strand of works only focus on estimation while our interest in this paper is about inference.

\subsection{Main contributions}

   Our first contribution is in establishing an identifiability result of $\bTheta$ in Theorem \ref{thm_ident} of Section \ref{sec_id} under model (\ref{model}) when the entries of $E$ in (\ref{model}) are allowed to be correlated, that is, $\se:=\Cov(E)$ is non-diagonal. To the best of our knowledge, the existing literature only studies the identifiability of $\bTheta$ when $\se$ is diagonal, see, for instance, \cite{Lee2017,wang2017,McKennan19,bing2020adaptive}. In Section \ref{sec_id} we also discuss different sets of conditions under which $\bTheta$ can be identified asymptotically as $m\to \infty$ when $\se$ is non-diagonal. 
   
   Our second contribution is to propose a new procedure in Section \ref{sec_method} for constructing confidence intervals of $\Theta_{ij}$ that is suitable even when $p$ is larger than $n$. Our procedure consists of four steps: the first step in Section \ref{sec_pred} estimates the coefficient matrix $(\bTheta + \bA\bB)$ in (\ref{model_linear}); the second step in Section \ref{sec_est_B} estimates $\bB$, the coefficient matrix of the hidden variables, using the residual matrix from the first step; the third step uses the estimate of $\bB$ to remove the hidden effect and construct an initial estimator $\wh\Theta_{ij}$ of $\Theta_{ij}$, while our final step constructs the refined estimator $\wt\Theta_{ij}$ of $\Theta_{ij}$ by removing the bias of $\wh\Theta_{ij}$ due to the high-dimensional regularization (see, Section \ref{sec_est_Theta}). The resulting estimate $\wt\Theta_{ij}$ is further used  to construct confidence intervals of $\Theta_{ij}$ and to test the hypothesis  (\ref{def_target}) in Section \ref{sec_est_Theta}. Finally, in Section \ref{sec_method_infer_B}, we further propose a $\chi^2$-based statistic for testing the null hypothesis $\bB_j = \b0$ for any given $j$. 
   
   Our third contribution is to provide statistical guarantees for the aforementioned procedure. Our main result, stated in Theorem \ref{thm_asymp_normal} of Section \ref{sec_theory_ASN}, shows that our estimator $\wt\Theta_{ij}$ of $\Theta_{ij}$ satisfies $\sqrt{n}(\wt \Theta_{ij}-\Theta_{ij}) = \xi + \Delta$ where $\xi$ is normally distributed, conditioning on the design matrix, and $\Delta$ is asymptotically negligible as $n\to \i$. In Section \ref{sec_effciency}, we further show that $\wt\Theta_{ij}$ is asymptotically efficient in the Gauss-Markov sense, and its asymptotic variance can be consistently estimated. Combining these results justifies the usage of our proposed procedure in Section \ref{sec_est_Theta} for making inference on $\Theta_{ij}$.  In the proof of Theorem \ref{thm_asymp_normal}, an important intermediate result we derived is the (column-wise) uniform $\ell_2$ convergence rate of our estimator $\wh \bB$, which is stated in Theorem \ref{thm_rates_B}. On top of this result, we further establish the asymptotic normality of $\wh\bB_j$ for any $j\in[m]$ with explicit expression of the asymptotic variance in Theorem \ref{thm_B_asn}. The result provides theoretical guarantees for the $\chi^2$-based statistic in Section \ref{sec_method_infer_B} for testing $\bB_j = \b0$.  
   

   The remainder of this paper is organized as follows. In Section \ref{sec_id} we establish the identifiability result of $\bTheta$. Section \ref{sec_method} contains the methodology of making inference on $\Theta_{ij}$ and $\bB_j$. Statistical guarantees are provided in Section \ref{sec_theory}. 
   Simulation studies are presented in Section \ref{sec_sim} while the real data analysis is shown in Section \ref{sec_real_data}.

    \paragraph{Notation.}
	For any set $S$, we write $|S|$ for its cardinality. For any positive integer $d$, we write $[d] = \{1,2,\ldots,d\}$.
	For any vector $v\in \RR^d$ and some real number $q\ge 0$, we define its $\ell_q$ norm as $\|v\|_q = (\sum_{j=1}^d |v_j|^q)^{1/q}$. For any matrix $M \in \RR^{d_1 \times d_2}$, $I \subseteq [d_1]$ and $J\subseteq [d_2]$, we write $M_{IJ}$ as the $|I| \times |J|$ submatrix of $M$ with row and column indices corresponding to $I$ and $J$, respectively. In particular, $M_{I\cdot}$ denotes the $|I|\times d_2$ submatrix and $M_J$ denotes the $d_1\times |J|$ submatrix. Further write $\|M\|_{p,q} = (\sum_{j=1}^{d_1} \|M_{j\cdot}\|_q^p)^{1/p}$ and denote by $\|M\|_{\op}$, $\|M\|_F$ and $\|M\|_\infty$, respectively, the operator norm, the Frobenius norm and the element-wise sup-norm of $M$. For any matrix $M$, we write $\lambda_{k}(M)$ for its $k$th largest singular value. We use $\bI_d$ to denote the $d\times d$ identity matrix and $\b0$ to denote the vectors with entries all equal to zero. We use $\be_1,\ldots,\be_d$ to denote the canonical basis in $\RR^d$. For any two sequences $a_n$ and $b_n$, we write $a_n \lesssim b_n$  if there exists some positive constant $C$ such that $a_n \le Cb_n$ for any $n$. We let $a_n \asymp b_n$ stand for $a_n\lesssim b_n$ and $b_n \lesssim a_n$. Denote $a\vee b=\max (a,b)$ and $a\wedge b=\min(a,b)$. 

\section{Identifiability of $\Theta$}\label{sec_id}

In this section, we establish conditions under which $\bTheta$ in model (\ref{model}) is identifiable when $Z$ is correlated with $X$ and the entries of $E$ are possibly correlated. 

Recall that model (\ref{model}) can be rewritten as (\ref{model_linear}). By regressing $Y$ onto $X$, one can identify 
\beq\label{def_F}
    \bF = \bTheta + \bA\bB. 
\eeq
The main challenge in identifying $\bTheta$ is that we need to further separate $\bTheta$ and $\bA\bB$ in the matrix $\bF$. 
The existing literature \citep{wang2017,Lee2017,McKennan19,bing2020adaptive} leverages the following decomposition of the residual covariance matrix of $\epsilon = \bB^TW + E$ from (\ref{model_linear})
\beq\label{def_Sigma_eps}
    \seps  = \bB^T \sw \bB + \se,
\eeq
to recover the row space of $\bB\in \RR^{K\times m}$. Here we write $\sw=\Cov(W)$ and $\se=\Cov(E)$. The decomposition (\ref{def_Sigma_eps}) is ensured by the independence assumption between $E$ and $W$. 
When $\se$ is diagonal and under suitable conditions on $\bB$ and $\sw$, the row space of $\bB$ can be identified from  (\ref{def_Sigma_eps}) either via PCA or the heteroscedastic PCA \citep{bing2020adaptive}, or via maximizing the quasi-likelihood under a factor model \citep{wang2017}. The recovered row space of $\bB$ is further used towards identifying $\bTheta$.

Our model differs from the existing literature in that we allow $\se$ to be non-diagonal, in which case the identifiability conditions in \cite{wang2017} and \cite{bing2020adaptive} are no longer applicable. For non-diagonal $\se$, we adopt the following conditions,
\begin{equation}\label{ident_conds}
    \lambda_K\left({1\over m}\bB^T \sw  \bB\right) \ge c,\qquad \|\se\|_{\op} = o(m), \qquad \textrm{as }m\to \i,
\end{equation}
where $c$ is a positive constant and $\lambda_K(M)$ denotes the $K$th largest eigenvalue of a symmetric matrix $M$. Under (\ref{ident_conds}), the space spanned by the first $K$ eigenvectors of $\seps$ recovers the row space of $\bB$ asymptotically as $m\to \i$.  This is an immediate result of the Davis-Kahan Theorem \citep{DavisKahan}, and has been widely used in the literature of factor models, see, for instance, \cite{fan2013large}. 

Given the row space of $\bB$, we can identify the projection matrices $P_B = \bB^T(\bB\bB^T)^{-1}\bB$ and $P_{B}^{\perp} = \bI_m - P_B$. Multiplying $P_{B}^{\perp}$ on both sides of equation (\ref{model}), we have
\begin{equation}\label{eq_pby}
    P_{B}^{\perp} Y = (\bTheta P_{B}^{\perp})^T X + P_{B}^{\perp}E,
\end{equation}
from which we recover $\bTheta P_{B}^{\perp}$ by 
\begin{equation}\label{def_Theta_PB_comp}
    \bTheta P_{B}^{\perp} = [\Cov(X)]^{-1}\Cov(X,  P_{B}^{\perp} Y).
\end{equation}
From $\bTheta P_{B}^{\perp}  = \bTheta - \bTheta P_{B}$, we have that $\bTheta$ can be recovered if $\bTheta P_B$ becomes negligible as $m\to \i$. Requiring $\bTheta P_B$ being small is common in the existing literature \citep{Lee2017,wang2017,bing2020adaptive}. We adopt the condition of assuming $\bTheta P_B$ small in terms of row-wise $\ell_1$ norm. The following theorem formally establishes the identifiability of $\bTheta$. As revealed in the proof of Theorem \ref{thm_ident}, $\|\bTheta_{i\cdot}\|_1=o(m)$ together with the other conditions therein ensures $(\bTheta P_B)_{ij} = o(1)$.


\begin{theorem}\label{thm_ident}
    Under model (\ref{model}), assume (\ref{ident_conds}) and 
    \beq\label{cond_ident_Theta}
        \max_{1\le j\le m}\bB_j^T\sw\bB_j = O(1),\qquad \max_{1\le i\le p}\|\bTheta_{i\cdot}\|_1 = o(m),\qquad \textrm{as }m\to \i.
    \eeq 
    Then $\bTheta$ can be recovered from the first two moments of $(X, Y)$ asymptotically as $m\to \i$.
\end{theorem}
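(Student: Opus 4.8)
The plan is to follow the population-level identification route sketched just before the statement: recover the regression matrix $\bF=\bTheta+\bA\bB$ and the residual covariance $\seps$ from the first two moments of $(X,Y)$, use the eigenstructure of $\seps$ to recover the row space of $\bB$, project onto its orthogonal complement to strip off the low-rank hidden component $\bA\bB$, and finally show that the part of $\bTheta$ discarded by this projection is asymptotically negligible entrywise. The main obstacle will be this last entrywise bound, since the naive operator-norm estimate is worthless for a projection.

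First I would record what the first two moments deliver. Regressing $Y$ on $X$ identifies $\bF=\bTheta+\bA\bB$ via $\bF=[\Cov(X)]^{-1}\Cov(X,Y)$, and the residual covariance obeys the decomposition (\ref{def_Sigma_eps}), $\seps=\bB^T\sw\bB+\se$. Under (\ref{ident_conds}) the matrix $\bB^T\sw\bB$ is positive semidefinite of rank $K$ with $\lambda_K(\bB^T\sw\bB)\ge cm$ and top-$K$ eigenspace equal to the row space of $\bB$, while $\|\se\|_{\op}=o(m)$; since the eigengap is then at least $cm$, the Davis--Kahan theorem shows that the span of the top-$K$ eigenvectors of $\seps$ recovers the row space of $\bB$ with error of order $\|\se\|_{\op}/(cm)=o(1)$. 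Hence $P_B$ and $P_B^\perp$ are identified asymptotically. Because $\bB$ has full row rank, $\bB P_B=\bB$ and therefore $\bB P_B^\perp=\b0$, so multiplying (\ref{model}) by $P_B^\perp$ annihilates $\bA\bB$ and yields (\ref{eq_pby}); taking covariances with $X$ and using that $E$ is independent of $X$ gives (\ref{def_Theta_PB_comp}), so $\bTheta P_B^\perp$ is a function of the first two moments.

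The crux is to show $(\bTheta P_B)_{ij}=o(1)$ for every fixed $(i,j)$, for then $\Theta_{ij}=(\bTheta P_B^\perp)_{ij}+o(1)$. The difficulty is that $P_B$ is a projection with $\|P_B\|_{\op}=1$, so no operator-norm control makes $\bTheta P_B$ small; one must instead exploit that $P_B$ has small entries. I would prove $\max_{k,j}|(P_B)_{kj}|=O(1/m)$ by whitening: set $\bC=\sw^{1/2}\bB$, which is rank $K$ and whose rows are linear combinations of the rows of $\bB$, hence has the same row space as $\bB$, so $P_B=\bC^T(\bC\bC^T)^{-1}\bC$ and $(P_B)_{kj}=\bC_k^T(\bC\bC^T)^{-1}\bC_j$. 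Cauchy--Schwarz gives $|(P_B)_{kj}|\le\|(\bC\bC^T)^{-1}\|_{\op}\,\|\bC_k\|_2\,\|\bC_j\|_2$, where the first part of (\ref{cond_ident_Theta}) yields $\|\bC_j\|_2^2=\bB_j^T\sw\bB_j=O(1)$ uniformly in $j$, and (\ref{ident_conds}) yields $\lambda_K(\bC\bC^T)=\lambda_K(\bB^T\sw\bB)\ge cm$, so $\|(\bC\bC^T)^{-1}\|_{\op}\le 1/(cm)$. This gives $|(P_B)_{kj}|=O(1/m)$, and then H\"older's inequality together with the second part of (\ref{cond_ident_Theta}) gives $|(\bTheta P_B)_{ij}|\le\|\bTheta_{i\cdot}\|_1\max_k|(P_B)_{kj}|=o(m)\cdot O(1/m)=o(1)$.

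Putting the pieces together, $\Theta_{ij}$ equals the identifiable quantity $(\bTheta P_B^\perp)_{ij}$ up to an $o(1)$ term, which is precisely the claim that $\bTheta$ is recovered from the first two moments asymptotically as $m\to\i$. One point to handle with care is that $P_B^\perp$ is itself recovered only up to an $o(1)$ perturbation $\wt P^\perp$ from the eigenstructure of $\seps$: replacing $P_B^\perp$ by $\wt P^\perp$ perturbs the target by $\bTheta(\wt P^\perp-P_B^\perp)$ and revives the hidden term as $\bA\bB(\wt P^\perp-P_B^\perp)=-\bA\bB(\wt P-P_B)$, and one should verify that both are $o(1)$ by combining the Davis--Kahan rate with the entrywise control above.
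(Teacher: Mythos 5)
Your proposal is correct and follows essentially the same route as the paper's proof: identify $\bF$ and $\seps$ from the first two moments, apply Davis--Kahan under (\ref{ident_conds}) to recover $P_B$, and then bound $|(\bTheta P_B)_{ij}|$ by whitening $\bB$ with $\sw^{1/2}$ and combining the $\ell_1$ bound on $\bTheta_{i\cdot}$ with the $O(1/m)$ entrywise control of $P_B$ coming from $\max_j \bB_j^T\sw\bB_j = O(1)$ and $\lambda_K(\bB^T\sw\bB)\ge cm$ — which is exactly the paper's display bounding $|\bTheta_{i\cdot}^TP_B\be_j|$. Your closing remark about the perturbation $\bTheta(\wt P^\perp - P_B^\perp)$ is a fair point, but the paper treats this at the same informal level, so no gap relative to the reference proof.
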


The first requirement of (\ref{cond_ident_Theta}) is a regularity condition which holds, for instance, if $\sw\in\RR^{K\times K}$ has bounded eigenvalues and each column $\bB_j\in \RR^K$ of $\bB$ is bounded in $\ell_2$-norm. The second condition in (\ref{cond_ident_Theta}) requires the $\ell_1$-norm of each row of $\bTheta \in \RR^{p\times m}$ is of smaller order of $m$. This is the case if $\bTheta$ has bounded entries and each row of $\bTheta$ is sufficiently sparse. Such a sparsity assumption is reasonable in many applications, for instance, in genomics \citep{wang2017,McKennan19}.

\begin{remark}[Alternative identifiability conditions of $P_B$]\label{rem_ident_B}
   Condition (\ref{ident_conds}) assumes the spiked eigenvalue structure of $\seps$ in (\ref{def_Sigma_eps}) and is a common identifiability condition in the factor model when $m$ is large (see, \cite{fan2013large,Bai-factor-model-03}). We refer to Remark \ref{rem_cond_B} for more discussions on (\ref{ident_conds}). 
   Alternatively, another line of work studies the unique decomposition of the low rank and sparse decomposition under the so-called rank-sparsity incoherence conditions, \cite{Candes,Chandrasekaran,Hsu2011}, just to name a few. For instance, \citet[Theorem 1]{Hsu2011} showed that $\bB^T \sw \bB$ and $\se$ are identifiable from $\seps$ if 
   \beq\label{ident_cond_sparse_rank}
        \|\se\|_{\i,0} \|\bU_B\|_{\i,2}^2 \le c 
   \eeq
   for some small constant $0< c<1$. Here $\bU_B$ contains the right $K$ singular vectors of $\bB \in \RR^{K\times m}$. Once $\bB^T \sw \bB$ is identified, we can recover $P_B$ via PCA. Our identifiability results in Theorem \ref{thm_ident} still hold if (\ref{ident_conds}) is replaced by (\ref{ident_cond_sparse_rank}). 
\end{remark}

\begin{remark}[Other identifiability conditions of $\bTheta$]\label{rem_ident_cond_Theta}
    In the SVA literature, provided that $P_B$ is known, there are other sufficient conditions under which $\bTheta$ is identifiable. One type of such condition is called {\em negative controls} which assumes that, for a known set $S\subseteq[m]$ with $|S| \ge K$, 
        $$
        \bTheta_S = \b0\quad \text{and} \quad \text{rank}(\bB_{S}) =K.
        $$
   In words, there is a known set of responses that are not associated with any of the features in the multivariate response model (\ref{model}). Another condition considered in \cite{wang2017} requires the sparsity of $\bTheta$ in a similar spirit to (\ref{cond_ident_Theta}). It is assumed that, for some integer $K\le r\le m$, 
        \[
            \max_{j\in[p]}\left\|\bTheta_{j\cdot}\right\|_0 \le \floor{(m-r)/2},\qquad \rank(\bB_S) = K, \quad \forall~ S\subseteq [m] \textrm{ with }|S| = r.
        \]
Intuitively, the above condition also puts restrictions on the sparsity of $\bB$, as the submatrix of $\bB$ may have rank smaller than $K$ if $\bB$ is too sparse.
    Our identifiability results in Theorem \ref{thm_ident} still hold if condition (\ref{cond_ident_Theta}) is replaced by any of these conditions. 
\end{remark}

\section{Methodology}\label{sec_method}

In this section we describe our procedure of making inference on $\Theta_{ij}$ and $\bB_j$ for a given $i\in[p]$ and $j\in[m]$.  Recall that $(\bY_{i\cdot}, \bX_{i\cdot})$, for $1\le i\le n$, are i.i.d. copies of $(Y,X)$ from model (\ref{model}). Let $(\bY, \bX)$ denote the data matrix. 
For constructing confidence intervals of $\Theta_{ij}$ and testing the hypothesis (\ref{def_target}), our procedure consists of three main steps: (1) estimate the best linear predictor $\bX\bF$ in Section \ref{sec_pred} with $\bF$ defined in (\ref{def_F}), (2) estimate the residual $\bepsilon = \bY -\bX\bF$ and the matrix $\bB$ in Section \ref{sec_est_B}, (3) estimate $\bTheta_j$ and construct the final estimator of $\Theta_{ij}$ in Section \ref{sec_est_Theta}. Finally, we discuss how to make inference on $\bB_j$ in Section \ref{sec_method_infer_B}.

\subsection{Estimation of $XF$}\label{sec_pred}

Recall from (\ref{def_F}) that  $\bF$ has the additive decomposition of $\bTheta$ and $\bA\bB$. Estimating $\bF$ is challenging when the number of features $p$ exceeds the sample size $n$ without additional structure on $\bTheta$.  We thus consider the following parameter space of $\bTheta$
\begin{equation}\label{def_space_Theta}
     \cM(s_n, M_n) := 
    \left\{
        \bM\in\RR^{p\times m}:  \sum_{j=1}^p 1\{\|\bM_{j\cdot}\|_2\ne 0\} \le s_n,
        \max_{1\le j\le p} \norm{\bM_{j\cdot}}_1 \leq M_n
    \right\}
\end{equation}
for some integer $1\le s_n \le p$ and some sequence $M_n > 0$ that both possibly grow with $n$. As a result, any $\bTheta \in \cM(s_n, M_n)$  has at most $s_n$ non-zero rows and, for each of these non-zero rows, its $\ell_1$-norm is controlled by the sequence $M_n$. Existence of zero rows
is a popular sparsity structure in multivariate response regression \citep{yuanlin} and is also appealing for feature selection, while the structure of row-wise $\ell_1$ norm is needed in view of the identifiability condition (\ref{cond_ident_Theta}). 

Since the submatrix of $\bTheta \in\cM(s_n, M_n)$ corresponding to the non-zero rows may further have different sparsity patterns, we propose to estimate each column of $\bF$ separately. Specifically, we estimate $\bF$ by $\wh{\bF} = (\wh \bF_1,\dotso,\wh \bF_m) \in \RR^{p\times m}$ where, for each $j \in [m]$, $\wh\bF_j = \wh{\btheta}^{(j)} + \wh{\bdelta}^{(j)}$ is obtained by solving 
	\beq\label{def_est_F_j}
    \wh{\btheta}^{(j)},~\wh{\bdelta}^{(j)} = \argmin_{\btheta,\bdelta \in \RR^p}\frac{1}{n}\norm{\bY_{j} - \bX(\btheta + \bdelta)}_2^2 + \lambda_1^{(j)}\norm{\btheta}_1 + \lambda_2^{(j)}\norm{\bdelta}_2^2.
	\eeq
for some tuning parameters $\lambda_1^{(j)}, \lambda_2^{(j)} \ge 0$. Computationally, for any given $\lbdIj$ and $\lbdj$, solving (\ref{def_est_F_j}) is as efficient as solving a lasso problem (see, \cite{chernozhukov2017} or Lemma \ref{lem_solution} in Appendix \ref{app_theory_fit}). We discuss in details practical ways of selecting $\lambda_1^{(j)}$ and $\lambda_2^{(j)}$ in Section \ref{sec_cv}.

 Procedure (\ref{def_est_F_j}) is known as lava \citep{chernozhukov2017} and is designed to capture both the sparse signal $\bTheta_j$ and the dense signal $\bA\bB_j$ via respectively the lasso penalty and the ridge penalty. When columns of $\bTheta$ share the same sparsity pattern, \cite{bing2020adaptive} proposed a variant of (\ref{def_est_F_j}) to estimate $\bF$ jointly via the group lasso penalty together with the multivariate ridge penalty. To allow different sparsity patterns in columns of $\bTheta$ and, more importantly, to provide a sharp column-wise control of $\bX\wh \bF_j - \bX\bF_j$ for our subsequent inference on $\Theta_{ij}$, we opt for estimating $\bF$ column-by-column.



\subsection{Estimation of $B$}\label{sec_est_B}
    In this section, we discuss the estimation of $\bB$.  
    Our procedure first estimates the residual matrix $\bepsilon \coloneqq \bY - \bX\bF\in \RR^{n\times m}$ by 
    \begin{equation}\label{def_est_epsilon}
        \wh\bepsilon = \bY - \bX \wh \bF 
    \end{equation}
    with $\wh\bF$ obtained from (\ref{def_est_F_j}). To estimate $\bB$, notice that $\bepsilon = \bW \bB + \bE$ follows a factor model with $\bB$ being the loading matrix and $\bW$ being the latent factor matrix, should we observe $\bepsilon$. We therefore propose to estimate $\bB$ by the following approach commonly used in the factor analysis \citep{SW2002,Bai-factor-model-03,fan2013large} via the plug-in estimate $\wh\bepsilon$. Specifically, write the SVD of the normalized $\wh\bepsilon$ as
    \begin{equation}\label{def_svd_epsilon}
        {1 \over \sqrt{nm}}\wh\bepsilon ~ = ~\sum_{k=1}^{m} d_k \bu_k \bv_k^T,
    \end{equation}
    where $\bU_K = (\bu_1,\ldots, \bu_K)\in \RR^{n\times K}$ and $\bV_K = (\bv_1,\ldots,\bv_K)\in \RR^{m\times K}$ denote, respectively, the left and right singular vectors corresponding to $d_1 \ge d_2 \ge \cdots \ge d_K$. Further write $\bD_K = \diag(d_1,\ldots, d_K)$. We propose to estimate $\bB$ and $\bW$ by 
    \begin{align*}
        &(\wh\bB, \wh\bW) = \argmin_{\bB,\bW} {1\over nm}\left\|
         \wh\bepsilon - \bW\bB 
        \right\|_F^2,\\\nonumber
        &\textrm{subject to}\quad {1\over n}\bW^T\bW = \bI_K,\quad {1\over m}\bB\bB^T\textrm{ is diagonal}.
    \end{align*}
    It is well known (see, for instance, \cite{Bai-factor-model-03}) that the above problem leads to the following solution
    \begin{equation}\label{def_est_BW}
        \wh\bB^T = \sqrt{m}~ \bV_K\bD_K,\qquad \wh\bW = \sqrt{n}~ \bU_K.
    \end{equation}
    We assume $K$ is known for now and defer its selection to Section \ref{sec_K}. 

\subsection{Estimation and inference of $\Theta$}\label{sec_est_Theta}

Without loss of generality, we let $\Theta_{11}$ be the parameter of our interest.
To make inference of $\Theta_{11}$, we first construct an initial estimator of $\bTheta_1\in\RR^p$ via $\ell_1$ regularization after removing the hidden effects, and then obtain our final estimator of $\Theta_{11}$ by removing the bias due to the $\ell_1$-regularization in the first step. For this reason, our final estimator of $\Theta_{11}$ is doubly debiased. 

Write $\wt \by = \bY\wh{P}_B^\perp \be_1$ with $\wh P_B^{\perp} := \bI_m - \wh\bB^T(\wh\bB\wh\bB^T)^{-1}\wh\bB = \bI_m - \bV_K\bV_K^T$ from (\ref{def_est_BW}). 
In view of (\ref{eq_pby}), we propose to use the solution of the following lasso problem as the initial estimator of $\bTheta_1$,
	\beq\label{def_Thetaj_hat}
	\wh{\bTheta}_1 = \argmin_{\btheta\in\RR^p}\frac{1}{n}\bnorm{\wt\by - \bX\btheta}_2^2 + \lambda_3\norm{\btheta}_1.
	\eeq 
Here $\lambda_3\ge 0$ is some tuning parameter. As seen in (\ref{eq_pby}), using the projected response $\wt \by = \bY\wh{P}_B^\perp \be_1$ in the above lasso problem removes the bias due to the hidden variables. 

While the $\ell_1$-regularization reduces the variance of the resulting estimator, it  introduces extra bias that needs to be adjusted in order to further make inference of $\Theta_{11}$. To reduce this bias due to the $\ell_1$ regularization, our final estimator of $\Theta_{11}$ is proposed as follows,
	\beq\label{def_Theta_11}
	\wt{\Theta}_{11} = \wh{\Theta}_{11} + \wh{\bomega}_1^T\frac{1}{n}\bX^T(\wt\by - \bX\wh{\bTheta}_1)
	\eeq
where $\wh\bomega_1 \in \RR^p$ is the estimate of the first column $\bOmega_1$ of $\bOmega \coloneqq \Sigma^{-1}$ with $\Sigma=\Cov(X)$. There are several ways of estimating $\bOmega_1$, for instance, \cite{zhangzhang2014,javanmard14,vandegeer2014}. In this paper, we follow the node-wise lasso procedure in \cite{zhangzhang2014} and \cite{vandegeer2014} to obtain $\wh{\bomega}_1$. Specifically, let 
\beq\label{formula_nodewise}
    \wh\bgamma_1 =  \argmin_{\bgamma \in \RR^{p-1}} {1\over n}\left\|
        \bX_1 - \bX_{-1}\bgamma 
    \right\|_2^2 + \wt \lambda\|\bgamma\|_1
\eeq
for some tuning parameter $\wt \lambda\ge 0$, where $\bX_{-1}\in\RR^{n\times(p-1)}$ is the submatrix of $\bX$ with the first column removed. We write 
\beq\label{def_tau_1}
    \wh\tau_1^2 = {1\over n}\bX_1^T(\bX_1 - \bX_{-1}\wh\bgamma_1)
\eeq
and define 
\beq\label{def_est_omega}
    \wh\bomega_1^T = {1\over \wh\tau_1^2}\begin{bmatrix}
     1 & -\wh\bgamma_1^T
    \end{bmatrix},
\eeq
as the estimator of $\bOmega_1$. In Theorem \ref{thm_asymp_normal} of Section \ref{sec_theory_ASN}, we show that, conditioning on the design matrix, $\sqrt{n}(\wt{\Theta}_{11} - \Theta_{11})$ is asymptotically normal with mean zero and variance $\sigma_{E_1}^2 \wh{\bomega}_1^T\wh\Sigma\wh{\bomega}_1$, where $\sigma_{E_1}^2:=[\se]_{11}$ and $\wh\Sigma = n^{-1}\bX^T\bX$.

In light of  this result, we can test the hypothesis $H_{0,\Theta_{11}}: \Theta_{11} = 0$ versus $H_{1,\Theta_{11}}: \Theta_{11} \neq 0$, via the following test statistic
\beq\label{def_U_hat}
\wh U_n^{(11)} = \sqrt{n}~\wt\Theta_{11}/\sqrt{\wh\sigma_{E_1}^2 \wh{\bomega}_1^T\wh\Sigma\wh{\bomega}_1},
\eeq
with $\wh\sigma_{E_1}^2$ being an estimator of $\sigma_{E_1}^2$, defined as 
\begin{equation}\label{def_est_variance}
    \wh \sigma_{E_1}^2 = {1\over n}(\wh\bepsilon_1-\wh\bW\wh\bB_1)^T(\wh\bepsilon_1-\wh\bW\wh\bB_1)
\end{equation}
with $\wh\bepsilon$, $\wh\bB$ and $\wh\bW$ obtained from (\ref{def_est_epsilon}) and (\ref{def_est_BW}).
For any given significance level $\alpha\in(0,1)$, we reject the null hypothesis if $|\wh U_n^{(11)}| > k_{\alpha/2}$, where $k_{\alpha/2}$ is the $(1-\alpha/2)$ quantile of $N(0,1)$. Equivalently, we can also construct a $(1 - \alpha)\times 100\%$ confidence interval for $\Theta_{11}$ as
    \beq \label{CI_def}
    \left(\wt{\Theta}_{11} - k_{\alpha/2}\sqrt{\wh\sigma_{E_1}^2\wh{\bomega}_1^T\wh\Sigma\wh{\bomega}_1 / n},~~
    \wt{\Theta}_{11} + k_{\alpha/2}\sqrt{\wh\sigma_{E_1}^2\wh{\bomega}_1^T\wh\Sigma\wh{\bomega}_1 / n}
    \right).
    \eeq

\subsection{Hypothesis testing of the hidden effect}\label{sec_method_infer_B}
 
In practice, it is also of interest to test whether or not some response $Y_j$, for $1\le j\le m$, is affected by any of the hidden variables $Z$. If the effect of the hidden variables $Z$ is indeed significant, ignoring the hidden variables in the regression analysis may yield biased estimators and incorrect conclusion. In this case, the use of our hidden variable model (\ref{model}) is strongly preferred, as adjusting the hidden effects for modelling $Y_j$ is critical. 

Without loss of generality, we take $j = 1$. The hypothesis testing problem (\ref{def_target_B}) becomes $H_{0,B_1}: \bB_{1} = \b0$ versus $H_{1,B_1}: \bB_1 \ne \b0$. We propose to use the following test statistic
    \begin{equation}\label{def_R_hat}
        \wh R_{n}^{(1)} = n \wh \bB_1^T \wh \bB_1 / \wh\sigma_{E_1}^2
    \end{equation}
with $\wh\bB$  and $\wh\sigma_{E_1}^2$  obtained from (\ref{def_est_BW}) and (\ref{def_est_variance}), respectively. While $\wh \bB$ depends on the regularized estimator lava in (\ref{def_est_F_j}) via the estimated residuals, an interesting phenomenon is that there is no need to further debias the estimator $\wh \bB$ for inference. In Theorem \ref{thm_B_asn}, we show that the estimator $\wh \bB_j$ is asymptotically normal and the test statistic $\wh R_{n}^{(1)}$ converges in distribution to the $\chi^2$ distribution with degrees of freedom equal to $K$ under the null. Thus, given any significance level $\alpha\in(0,1)$, we reject the null hypothesis if $\wh R_{n}^{(1)} > c_{\alpha}$, where $c_{\alpha}$ is the $(1-\alpha)$ quantile of the $\chi^2$ distribution with degrees of freedom equal to $K$.

\section{Theoretical analysis}\label{sec_theory}

In this section, we provide theoretical guarantees for our procedure in Section \ref{sec_method}. Section \ref{sec_ass} contains our main assumptions. The asymptotic normality of $\wt\Theta_{11}$ is established in Section \ref{sec_theory_ASN} while its efficiency and the consistent estimation of its asymptotic variance are discussed in Section \ref{sec_effciency}. The statistical guarantees for  $\wh\bB$ are shown in Sections \ref{sec_theory_B}.

\subsection{Assumptions}\label{sec_ass}
Throughout our analysis, we assume that $m$ and $p$ both grow with $n$ and the number of hidden variables, $K$, is fixed. Our analysis can be extended to the case where $K$ grows with $n$ coupled with more involved conditions. We start with the following blanket distributional assumptions on $W$ and $E$.
\begin{assumption}\label{ass_error}
	Let $\gamma_w$ and $\gamma_e$ denote some finite positive  constants.
	 Assume $\Sigma_W^{-1/2}W$ is a $\g_w$ sub-Gaussian random vector \footnote{A centered random vector $X\in \RR^d$ is $\g$ sub-Gaussian if $
		\EE[\exp(\langle u, X\rangle)] \le \exp(\|u\|_2^2\g^2/2)
		$ for any $u\in\RR^d$.} with $\Sigma_W={\rm Cov}(W)$.
	Assume $\se^{-1/2}E$ is a $\g_e$ sub-Gaussian random vector with $\se = {\rm Cov}(E)$. 
\end{assumption}

Our analysis requires the following regularity conditions on $\bB$, $\sw$ and $\se$. 
    \begin{assumption}\label{ass_B_Sigma}
        Assume there exist some positive finite constants $c_W\le C_W$, $c_B\le C_B$, $C_E$ and $c_\epsilon$  such that 
        \begin{enumerate}
            \item[(a)] $c_W\le \lambda_K(\sw) \le \lambda_1(\sw) \le C_W$;
            \item[(b)] $\max_{1\le j\le m}\|\bB_j\|_2^2\le C_B$, $\lambda_K(\bB\bB^T) \ge c_B m$;
            \item[(c)] $\lambda_1(\se) \le C_E$;
            \item[(d)] $\min_{1\le j\le m} \left(\bB_j^T \sw \bB_j + [\se]_{jj}\right) \ge c_\epsilon$.
        \end{enumerate}
    \end{assumption}

    \begin{remark}\label{rem_cond_B}
        Assumption \ref{ass_B_Sigma} is slightly stronger than the identifiability condition (\ref{ident_conds}) and the first condition in (\ref{cond_ident_Theta}). They are all commonly used regularity conditions in the literature of factor analysis \citep{Bai-Ng-K,Bai-factor-model-03,SW2002,Bai-Ng-forecast,fan2013large,Ahn-2013,fan2017} as well as in the related SVA literature \citep{Lee2017,wang2017}. In particular, condition $\lambda_K(\bB\bB^T) \ge c_B m$ is known as the pervasive assumption which holds, for instance, if a (small) proportion of columns of $\bB$ are i.i.d. realizations of a $K$-dimensional sub-Gaussian random vector whose covariance matrix has bounded eigenvalues \citep{guo2020doubly}. 
    \end{remark}

    We also need conditions on the design matrix $\bX$. Recall that $s_n$ is defined in (\ref{def_space_Theta}).
	
    \begin{assumption}\label{ass_X}
        Assume the rows of $\bX$ are i.i.d. realizations of the random vector $X\in \RR^p$ with $\Sigma:={\rm Cov}(X)$ satisfying 
        $$
            \max_{1\le j\le p}\Sigma_{jj} \le C,\qquad c \le \lambda_{\min}(\Sigma) \le \sup_{S\subseteq[p]: |S|\le s_n}\lambda_{\max}(\Sigma_{SS})\le C
        $$ for some absolute constants $0<c<C<\i$. Further assume $X\sim N_p(0, \Sigma)$. 
    \end{assumption}
    
    Assumption \ref{ass_X} is borrowed from \cite{vandegeer2014} to analyze the theoretical properties of $\wh\bomega_1$ via the node-wise lasso approach in (\ref{def_est_omega}). As commented there, the Gaussianity in Assumption \ref{ass_X} is not essential and can be relaxed to that $X$ is a sub-Gaussian or bounded random vector.

    Since our whole inference procedure for $\Theta_{11}$ starts with the estimation of $\bX\bF$ from (\ref{def_est_F_j}), the estimation error of $\bX\wh\bF$ plays a critical role throughout our analysis. While upper bounds of the rate of convergence of $\|\bX\wh\bF_j - \bX\bF_j\|_2$ have been established in \cite{chernozhukov2017}, we provide  a uniform bound in Appendix \ref{app_theory_fit} by showing that, with probability tending to one, the following holds uniformly over $j\in[m]$, 
    \begin{equation}\label{def_Rem_j}
            {1\over n}\|\bX\wh\bF_j - \bX\bF_j\|_2^2 ~ \lesssim ~  Rem_{1,j} + Rem_{2,j}(\bdelta_j) + Rem_{3,j}(\btheta_j). 
    \end{equation}
    Here we write $\bF_j = \btheta_j + \bdelta_j$ with $\btheta_j \coloneqq \bTheta_j$ and $\bdelta_j \coloneqq \bA\bB_j$. The terms $Rem_{1,j}$, $Rem_{2,j}(\bdelta_j)$ and $Rem_{3,j}(\btheta_j)$ all depend on the design matrix $\bX$ and their exact expressions are stated in Appendix \ref{app_theory_fit}. 
    For ease of presentation, we resort to a deterministic upper bound of the right hand side of (\ref{def_Rem_j}).
    
    \begin{assumption}\label{ass_initial}
        There exists a positive (deterministic) sequence $r_{n} = o(1)$ such that with probability tending to one as $n\to\i$, 
        \[
            \max_{1\le j\le m}\Bigl[Rem_{1,j} + Rem_{2,j}(\bdelta_j) + Rem_{3,j}(\btheta_j)\Bigr]\le  r_{n}.
        \]
    \end{assumption}
    Our subsequent theoretical results naturally depend on $r_{n}$, for which we provide the explicit rate later in Corollary \ref{cor_ASN} of Section \ref{sec_theory_ASN}.  Notice that Assumption \ref{ass_initial} together with (\ref{def_Rem_j}) readily implies 
    \[
        \lim_{n\to \i}\PP\left\{
        \max_{1\le j\le m}{1\over n}\|\bX\wh \bF_j - \bX\bF_j\|_2^2 \lesssim r_n
        \right\} = 1.
    \]

    \subsection{Asymptotic normality of $\wt\Theta_{11}$}\label{sec_theory_ASN}

    In this section, we establish our main result: the asymptotic normality of our estimator $\wt \Theta_{11}$ from (\ref{def_Theta_11}). To this end, we first study the convergence rate of the initial estimator $\wh\bTheta_1$ defined in (\ref{def_Thetaj_hat}). Recall from (\ref{def_Theta_PB_comp}) that the estimand of $\wh\bTheta_1$ is   $\bar\bTheta_1 :=  \bTheta P_{B}^{\perp} \be_1$ which satisfies 
    \[
        \|\bar\bTheta_1\|_0 = \|\bTheta P_{B}^{\perp} \be_1\|_0  \le s_n,
    \]
    implied by (\ref{def_space_Theta}).
    The following lemma states the $\ell_1$ convergence rate of $\wh\bTheta_1 - \bar\bTheta_1$, whose proof can be found in Appendix \ref{app_proof_thm_Theta}. Recall that $M_n$ is defined in (\ref{def_space_Theta}) and $r_{n}$ is defined in  Assumption \ref{ass_initial}.

	\begin{lemma}\label{thm_Theta_simple_rates}
	    Under Assumptions \ref{ass_error} -- \ref{ass_initial}, assume $M_n = o(m)$, $\|{\rm Cov}(Z)\|_{\rm op} = \cO(1)$, $\log m = o(n)$ and $s_n\log p = o(n)$. By choosing 
	    $$
	        \lambda_3 \gtrsim \sqrt{\max_{1\le j\le p}\wh\Sigma_{jj}}\sqrt{\log p\over n}
	   $$ in (\ref{def_Thetaj_hat}), with probability tending to one as $n\to \infty$,
	    \beq\label{rate_Theta_td_simp}
            \|\wh \bTheta_1 -  \bar\bTheta_1\|_{1} & ~ \lesssim ~ s_n\sqrt{\log p\over n} + \left({s_nM_n\over m} + \sqrt{s_n}\right)\left(\sqrt{\log m \over n\wedge m} + r_n\right).
        \eeq
	\end{lemma}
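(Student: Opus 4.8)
The plan is to treat the initial estimator (\ref{def_Thetaj_hat}) as an ordinary lasso regression of the projected response $\wt\by = \bY\wh{P}_B^{\perp}\be_1$ onto $\bX$, whose true target is the $s_n$-sparse vector $\bar\bTheta_1 = \bTheta P_B^{\perp}\be_1$, and to run the standard basic-inequality argument while carrying an extra ``approximation-error'' term that records the mismatch between the estimated projection $\wh{P}_B^{\perp} = \bI_m - \bV_K\bV_K^T$ and the population projection $P_B^{\perp}$. Writing $\bh = \wh\bTheta_1 - \bar\bTheta_1$ and $S = \supp(\bar\bTheta_1)$ (so $|S|\le s_n$), optimality of $\wh\bTheta_1$ gives the usual inequality $\frac1n\|\bX\bh\|_2^2 + \lambda_3\|\bh\|_1 \le \frac2n\langle\bX^T\br,\bh\rangle + 2\lambda_3\|\bh_S\|_1$, where $\br := \wt\by - \bX\bar\bTheta_1$ is the effective residual. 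Substituting the model $\bY = \bX(\bTheta + \bA\bB) + \bepsilon$ with $\bepsilon = \bW\bB + \bE$ and using $\bB P_B^{\perp} = \b0$, I would decompose
\[ \br = \bE P_B^{\perp}\be_1 + \bE(\wh{P}_B^{\perp} - P_B^{\perp})\be_1 + \bZ\bB\wh{P}_B^{\perp}\be_1 + \bX\bTheta(\wh{P}_B^{\perp} - P_B^{\perp})\be_1, \]
separating the genuine noise (the first term) from the approximation error $\bphi$ (the last three terms).

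First I would handle the genuine-noise score $\frac1n\bX^T\bE P_B^{\perp}\be_1$: since $\bE$ is independent of $\bX$ and $\|P_B^{\perp}\be_1\|_2\le1$ with variance $(P_B^{\perp}\be_1)^T\se(P_B^{\perp}\be_1)\le\|\se\|_{\op}\le C_E$, a sub-Gaussian maximal inequality gives $\|\frac1n\bX^T\bE P_B^{\perp}\be_1\|_\infty\lesssim\sqrt{\log p/n}$ with high probability, justifying the stated choice $\lambda_3\asymp\sqrt{\max_j\wh\Sigma_{jj}}\sqrt{\log p/n}$. The approximation error I would feed back through the quadratic term, using $\frac2n\langle\bphi,\bX\bh\rangle\le\frac1n\|\bX\bh\|_2^2 + \frac1n\|\bphi\|_2^2$ to absorb $\frac1n\|\bX\bh\|_2^2$ together with an $\ell_\infty$ bound on $\frac1n\bX^T\bphi$; this reduces matters to a (relaxed) cone condition for $\bh$ and then the restricted-eigenvalue / compatibility condition, which holds under Assumption \ref{ass_X} and $s_n\log p = o(n)$ by the sample-covariance concentration of \cite{vandegeer2014}. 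The noise part then contributes the $s_n\sqrt{\log p/n}$ term in (\ref{rate_Theta_td_simp}).

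The hard part will be controlling the three approximation-error pieces, all governed by the accuracy of the estimated factor space $\bV_K$. Here I would invoke the subspace rates underlying the factor step (Theorem \ref{thm_rates_B}), which, via a Davis--Kahan argument with eigengap of order $m$ (guaranteed by the pervasiveness $\lambda_K(\bB\bB^T)\ge c_B m$ in Assumption \ref{ass_B_Sigma} against $\|\se\|_{\op}\le C_E$) applied to $\frac1{nm}\wh\bepsilon^T\wh\bepsilon$ versus $\frac1m\seps$, deliver bounds on $\|(\wh{P}_B^{\perp} - P_B^{\perp})\be_1\|_2$ and $\|\bB\wh{P}_B^{\perp}\|_{\op}$ of order $\sqrt{\log m/(n\wedge m)} + r_n$ after rescaling; the $r_n$ enters precisely because $\wh\bepsilon - \bepsilon = \bX(\bF - \wh\bF)$ is controlled by Assumption \ref{ass_initial}. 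The residual-hidden term $\bZ\bB\wh{P}_B^{\perp}\be_1$ is then bounded using $\frac1n\bX^T\bZ\approx\Sigma\bA$ and $\|\bB\wh{P}_B^{\perp}\|_{\op}$; the projected-signal term $\bX\bTheta(\wh{P}_B^{\perp} - P_B^{\perp})\be_1$ is where the factor $s_n M_n/m$ arises, since $\bTheta P_B$ is negligible of order $M_n/m$ by the identifiability reasoning behind Theorem \ref{thm_ident} (using $M_n = o(m)$), propagated across the $s_n$ active rows; and the $\sqrt{s_n}$ factor comes from passing from $\|\bh_S\|_1$ to $\|\bh_S\|_2$ on the support by Cauchy--Schwarz.

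The most delicate obstacle is the term $\bE(\wh{P}_B^{\perp} - P_B^{\perp})\be_1$: because $\wh{P}_B^{\perp}$ is a function of the residuals $\wh\bepsilon$, hence of $\bE$, the noise and the random direction are dependent, so I cannot condition and apply a fixed-direction concentration bound. I would resolve this by crude uniform control, bounding $\|\frac1n\bX^T\bE(\wh{P}_B^{\perp} - P_B^{\perp})\be_1\|_\infty\le(\max_j\frac1n\|\bX_j^T\bE\|_2)\,\|(\wh{P}_B^{\perp} - P_B^{\perp})\be_1\|_2$, where $\frac1n\|\bX_j^T\bE\|_2\lesssim\sqrt{m/n}$ uniformly in $j$ by sub-Gaussian concentration of an $m$-dimensional quadratic form, while the $\sqrt m$ factor is absorbed by the $1/\sqrt m$ normalization built into the factor-model subspace rate for $\|(\wh{P}_B^{\perp} - P_B^{\perp})\be_1\|_2$; this term then folds into the second summand of (\ref{rate_Theta_td_simp}). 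Collecting the noise contribution $s_n\sqrt{\log p/n}$ with the approximation contribution $(s_nM_n/m + \sqrt{s_n})(\sqrt{\log m/(n\wedge m)} + r_n)$ and dividing by the restricted-eigenvalue constant yields the claimed rate.
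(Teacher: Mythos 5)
Your proposal follows essentially the same route as the paper's own proof: the same lasso basic inequality for the projected response $\wt\by=\bY\wh P_B^{\perp}\be_1$ with target $\bar\bTheta_1$, the same decomposition of the effective residual into a genuine noise piece, an $\bE$-dependent projection-error piece, a hidden-variable leakage piece $\bZ\bB\wh P_B^{\perp}\be_1$, and a signal-projection piece $\bX\bTheta(\wh P_B^{\perp}-P_B^{\perp})\be_1$, each handled by the same mechanism the paper uses (sub-Gaussian score bound of order $\sqrt{\log p/n}$ for $\lambda_3$; Cauchy--Schwarz with the $1/\sqrt m$ column bound on the estimated projection for the dependent noise; smallness of $\bB\wh P_B^{\perp}\be_1$ via $\wh\bB\wh P_B^{\perp}=\b0$ and the column-wise factor rates; and the restricted-eigenvalue step). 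The one point to be precise about is that the $s_nM_n/m$ factor requires the \emph{entrywise} control $\|(\wh P_B-P_B)\be_1\|_\infty\lesssim\bigl(\sqrt{\log m/(n\wedge m)}+r_n\bigr)/m$ paired with $\|\bX\bTheta\|_{2,1}\lesssim M_n\sqrt{s_n n}$ --- this follows from the column-wise rates of Theorem \ref{thm_rates_B} (as in Lemma \ref{lemma_PB_error}) but not from a bare Davis--Kahan operator-norm bound, and likewise the quantity you need is $\|\bB\wh P_B^{\perp}\be_1\|_2$ applied to the specific column rather than $\|\bB\wh P_B^{\perp}\|_{\op}$, which is larger by a factor of $\sqrt m$.
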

	Condition $M_n = o(m)$ is needed here to ensure that $\bTheta$ is identifiable (see, Section \ref{sec_id}). It can be replaced by any other identifiability conditions in Remark \ref{rem_ident_cond_Theta}. Recall that $Z\in \RR^K$ and $K$ is fixed, $\|\Cov(Z)\|_{\op} = \cO(1)$ is a mild regularity condition. The requirement $s_n\log p = o(n)$ is also mild as we explained below. 
	
	The first term on the right hand side of (\ref{rate_Theta_td_simp}) is known as the optimal rate of estimating a $s_n$-sparse coefficient vector in standard linear regression. Therefore,  $s_n\sqrt{\log p} = o(\sqrt n)$ is the minimal requirement for consistently estimating $\bar\bTheta_1$ in $\ell_1$-norm.
	The second term  stems from the error of estimating $P_{B}$, or in fact, of estimating $\bB$ (see, Theorem \ref{thm_rates_B} in Section \ref{sec_theory_B}). For instance, when $\bX\bF$ can be estimated with a fast rate, that is, $r_n$ is sufficiently small, then (\ref{rate_Theta_td_simp}) can be simplified to
   \[
        \|\wh \bTheta_1 -  \bar\bTheta_1\|_{1}  \lesssim  s_n\sqrt{\log p\over n} + {s_nM_n\over m}\sqrt{\log m \over n\wedge m} + \sqrt{s_n\log m \over n\wedge m}.
   \]
   The above rate becomes faster as $m$ increases. In particular, when $n = \cO(m)$, we recover the optimal rate (up to a multiplicative logarithmic factor)
   \[
    \|\wh \bTheta_1 -  \bar\bTheta_1\|_{1}  = \cO_\PP\left(s_n\sqrt{\log (p\vee m)\over n} \right).
   \]

    Armed with the guarantees of the initial estimator $\wh\bTheta_1$,  our following main result shows that $\sqrt{n}(\wt \Theta_{11} - \Theta_{11})$ is asymptotically normal with
    a closed-form expression of the asymptotic variance. Its proof can be found in Appendix \ref{app_thm_asymp_normal}. 
    Recall that $\bOmega = \Sigma^{-1}$ is the precision matrix of $X$. Since $\wt\Theta_{11}$ depends on the estimate of $\bOmega_1\in\RR^p$, our analysis requires  $\bOmega_1$ to be  sparse. Let $s_\Omega = \|\bOmega_1\|_0$ denote the sparsity of $\bOmega_1$.

    \begin{theorem}\label{thm_asymp_normal}
        Under Assumptions \ref{ass_error} --  \ref{ass_initial}, assume $E_1\sim N(0,\sigma_{E_1}^2)$, $\|{\rm Cov}(Z)\|_{\rm op} = \cO(1)$, $ (s_n\vee s_\Omega)\log(p)\log(m) = o(n)$ and $s_n \log p = o(\sqrt n)$. 
        Further assume
        \begin{align}\label{cond_Mn}
        & M_n \sqrt{n} = o(m),\\\label{cond_rn}
        &\norm{\bA_{1\cdot}}_2 \sqrt{\log m}+\left(\|\bA_{1\cdot}\|_2 \sqrt n + \sqrt{(s_n\vee s_\Omega)\log p}\right)r_n  = o(1).
        \end{align}
        By choosing $\wt \lambda \asymp \sqrt{\log p/n}$ in (\ref{def_est_omega}), one has
        \[\sqrt{n}(\wt{\Theta}_{11} - \Theta_{11}) = \zeta + \Delta,\]
            where \[
            \zeta \mid \bX \sim N(0, \sigma_{E_1}^2 \wh{\bomega}_1^T\wh\Sigma\wh{\bomega}_1),\qquad |\wh{\bomega}_1^T\wh\Sigma\wh{\bomega}_1 - \Omega_{11}| = o_{\PP}(1),\qquad \Delta = o_{\PP}(1).\]
    \end{theorem}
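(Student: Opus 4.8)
The plan is to turn the definition (\ref{def_Theta_11}) into an explicit bias--variance decomposition and then to show that the leading term is the conditionally Gaussian piece $\zeta$ while every remaining summand is $o_{\PP}(1)$. The starting point is the population identity $\bY P_{B}^{\perp}\be_1 = \bX\bar\bTheta_1 + \bE P_{B}^{\perp}\be_1$, which follows from (\ref{eq_pby}) together with the cancellation $\bB P_{B}^{\perp}=\b0$ (so the hidden effect disappears after projection), and the elementary splitting $\wt\by = \bY\wh P_B^{\perp}\be_1 = \bY P_{B}^{\perp}\be_1 + \bY(P_B-\wh P_B)\be_1$. Writing $\wt\by-\bX\wh\bTheta_1 = \bX(\bar\bTheta_1-\wh\bTheta_1) + \bE P_B^{\perp}\be_1 + \bY(P_B-\wh P_B)\be_1$, substituting into (\ref{def_Theta_11}), using $n^{-1}\bX^T\bX=\wh\Sigma$ and $\Theta_{11}=\be_1^T\bar\bTheta_1 + (\bTheta P_B)_{11}$, and finally splitting $\bE P_B^{\perp}\be_1 = \bE\be_1 - \bE P_B\be_1$, one obtains
\begin{align*}
\sqrt{n}\,(\wt{\Theta}_{11} - \Theta_{11})
&= \underbrace{\wh{\bomega}_1^T \frac{1}{\sqrt n}\bX^T \bE\be_1}_{\zeta}
  \;\underbrace{-\,\wh{\bomega}_1^T \frac{1}{\sqrt n}\bX^T \bE P_B \be_1}_{\Delta_1}
  + \underbrace{\sqrt n\,(\be_1 - \wh\Sigma\wh{\bomega}_1)^T(\wh{\bTheta}_1 - \bar\bTheta_1)}_{\Delta_2} \\
&\quad \underbrace{-\,\sqrt n\,(\bTheta P_B)_{11}}_{\Delta_3}
  + \underbrace{\wh{\bomega}_1^T \frac{1}{\sqrt n}\bX^T \bY (P_B - \wh P_B)\be_1}_{\Delta_4}.
\end{align*}

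Next I would dispatch $\zeta$ and the three easy error terms. Because $\wh\bomega_1$ is a function of $\bX$ only while $\bE\be_1\sim N(\b0,\sigma_{E_1}^2\bI_n)$ is independent of $\bX$, the term $\zeta$ is, conditionally on $\bX$, exactly $N(0,\sigma_{E_1}^2\wh{\bomega}_1^T\wh\Sigma\wh{\bomega}_1)$; the variance consistency $|\wh{\bomega}_1^T\wh\Sigma\wh{\bomega}_1-\Omega_{11}|=o_{\PP}(1)$ follows from the node-wise lasso analysis of \cite{vandegeer2014} under Assumption \ref{ass_X} and $s_\Omega\log p=o(n)$, via the identity $\wh{\bomega}_1^T\wh\Sigma\wh{\bomega}_1 = 1/\wh\tau_1^2$ and $\wh\tau_1^2\to 1/\Omega_{11}$. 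For $\Delta_1$, the pervasiveness bound $\lambda_K(\bB\bB^T)\ge c_Bm$ with $\max_j\|\bB_j\|_2^2\le C_B$ (Assumption \ref{ass_B_Sigma}) gives $\|P_B\be_1\|_2=O(m^{-1/2})$ and $\|P_B\be_1\|_\infty=O(m^{-1})$, so conditionally on $\bX$ the term $\Delta_1$ is mean zero with variance $(\be_1^TP_B\se P_B\be_1)\,\wh{\bomega}_1^T\wh\Sigma\wh{\bomega}_1=O(m^{-1})$, whence $\Delta_1=o_{\PP}(1)$ by Chebyshev. For $\Delta_3$, Hölder gives $|(\bTheta P_B)_{11}|\le\|\bTheta_{1\cdot}\|_1\|P_B\be_1\|_\infty\le M_n\cdot O(m^{-1})$, so $|\Delta_3|=O(\sqrt n\,M_n/m)=o(1)$ by (\ref{cond_Mn}). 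For $\Delta_2$, the KKT conditions of (\ref{formula_nodewise}) yield $\|\be_1-\wh\Sigma\wh{\bomega}_1\|_\infty\lesssim\wt\lambda\asymp\sqrt{\log p/n}$, so Hölder and Lemma \ref{thm_Theta_simple_rates} give $|\Delta_2|\lesssim\sqrt{\log p}\,\|\wh\bTheta_1-\bar\bTheta_1\|_1$, whose leading part $s_n\log p/\sqrt n$ vanishes under $s_n\log p=o(\sqrt n)$ and whose remaining pieces vanish under $(s_n\vee s_\Omega)\log p\log m=o(n)$ and the second summand of (\ref{cond_rn}).

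The crux is $\Delta_4$, the error from replacing $P_B$ by the plug-in projection $\wh P_B=\bV_K\bV_K^T$ onto the row space of $\wh\bB$. Substituting $\bY=\bX\bF+\bepsilon$ splits $\Delta_4$ into a signal part $\sqrt n\,(\wh\Sigma\wh{\bomega}_1)^T\bF(P_B-\wh P_B)\be_1$ and a noise part $\wh{\bomega}_1^T n^{-1/2}\bX^T\bepsilon(P_B-\wh P_B)\be_1$. In the signal part I would replace $\wh\Sigma\wh{\bomega}_1$ by $\be_1$ (up to the $O(\wt\lambda)$ error already used for $\Delta_2$), reducing it to $\sqrt n\,\bF_{1\cdot}(P_B-\wh P_B)\be_1$ with $\bF_{1\cdot}=\bTheta_{1\cdot}+\bA_{1\cdot}\bB$. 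The key algebraic fact $\bB P_B=\bB$ gives $\bB(P_B-\wh P_B)=\bB\wh P_B^{\perp}$, so the dense component becomes $\sqrt n\,\bA_{1\cdot}\bB\wh P_B^{\perp}\be_1$, bounded by $\sqrt n\,\|\bA_{1\cdot}\|_2\,\|\bB\wh P_B^{\perp}\be_1\|_2$; the sparse component is $\sqrt n\,\bTheta_{1\cdot}(P_B-\wh P_B)\be_1\le\sqrt n\,\|\bTheta_{1\cdot}\|_1\|(P_B-\wh P_B)\be_1\|_\infty$. The subspace-alignment quantity $\|\bB\wh P_B^{\perp}\be_1\|_2$ and the entrywise error $\|(P_B-\wh P_B)\be_1\|_\infty$ are both tied to the column-wise estimation error of $\wh\bB$ established in Theorem \ref{thm_rates_B}, of order $r_n+\sqrt{\log m/n}$, which itself is driven by $r_n$ through Assumption \ref{ass_initial} and the spiked structure of Assumption \ref{ass_B_Sigma} via a Davis--Kahan argument. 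This is exactly where condition (\ref{cond_rn}) is calibrated: $\|\bA_{1\cdot}\|_2\sqrt n\,r_n=o(1)$ kills the dense contribution and $\|\bA_{1\cdot}\|_2\sqrt{\log m}=o(1)$ absorbs the leading stochastic fluctuation of $\wh P_B$. For the noise part I would first decouple the dependence of $\wh P_B$ on $\bepsilon$ (through $\wh\bepsilon$) by replacing $\wh P_B$ with $P_B$ up to the operator-norm error controlled by Theorem \ref{thm_rates_B}, then bound the remainder using $\|n^{-1/2}\bX^T\bepsilon\|_\infty\lesssim\sqrt{\log(pm)}$ paired against $\|(P_B-\wh P_B)\be_1\|_1$, again of order governed by $r_n$.

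Collecting these bounds yields $\Delta=\Delta_1+\Delta_2+\Delta_3+\Delta_4=o_{\PP}(1)$, completing the decomposition. I expect $\Delta_4$ to be the main obstacle: unlike standard debiased-lasso arguments, the plug-in projection $\wh P_B$ is estimated from the residuals $\wh\bepsilon$ and hence inherits the lava error $r_n$, and its interaction with the dense loading $\bA_{1\cdot}\bB$ does not automatically vanish. Exploiting $\bB P_B=\bB$ to convert $P_B-\wh P_B$ into the small quantity $\bB\wh P_B^{\perp}$, and carefully tracking how the column-wise $\wh\bB$-rate of Theorem \ref{thm_rates_B} propagates through both the signal and noise parts while handling the $\wh P_B$--$\bepsilon$ dependence, is the delicate step; the two summands of (\ref{cond_rn}) together with the scaling (\ref{cond_Mn}) are precisely what render these cross terms negligible.
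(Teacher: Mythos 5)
Your decomposition is, after expanding $\Delta_4$ via $\bY=\bX\bTheta+\bX\bA\bB+\bW\bB+\bE$ and the identity $\bB(P_B-\wh P_B)=\bB\wh P_B^{\perp}$, term-for-term the same as the paper's ($I_1$ through $I_5$ in Appendix \ref{app_thm_asymp_normal}): your $\zeta$ is exactly the paper's $\wh\bomega_1^T\bX^T\bE_1/\sqrt n$, your $\Delta_1$ corresponds to its $I_{52}$-type correction of $V_{11}$ toward $[\se]_{11}$, your $\Delta_2,\Delta_3$ are $I_1$ and the $\bar\Theta_{11}$-bias, and your $\Delta_4$ reproduces $I_2+I_3+I_4+I_{52}$. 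The key algebraic move — converting $P_B-\wh P_B$ acting on $\bB$ into $\bB\wh P_B^\perp=\bH_0^{-1}(\wt\bB-\wh\bB)\wh P_B^\perp$ so that Theorem \ref{thm_rates_B} applies — is also the paper's, and you attribute each hypothesis ((\ref{cond_Mn}), the two summands of (\ref{cond_rn}), $s_n\log p=o(\sqrt n)$) to the correct term.

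There is one step that fails as written. For the signal part of $\Delta_4$ you propose to replace $\wh\Sigma\wh\bomega_1$ by $\be_1$ ``up to the $O(\wt\lambda)$ error already used for $\Delta_2$,'' i.e.\ to pair $\|\be_1-\wh\Sigma\wh\bomega_1\|_\infty\lesssim\sqrt{\log p/n}$ with the $\ell_1$-norm of $\bF(P_B-\wh P_B)\be_1$. That pairing is fine for the sparse component $\bTheta(P_B-\wh P_B)\be_1$, but the dense component $\bA\bB\wh P_B^\perp\be_1\in\RR^p$ has no $\ell_1$-control: $\|\bA\bB\wh P_B^\perp\be_1\|_1$ can be as large as $\sqrt p\,\|\bA\|_{\op}\eta_n$, and with $p>n$ the resulting bound $\sqrt{n}\cdot\sqrt{\log p/n}\cdot\sqrt p\,\|\bA\|_{\op}\eta_n$ is not $o(1)$ under the stated conditions. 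The paper avoids this by keeping the pairing in $\ell_2$: it bounds $\|\wh\bomega_1^T\wh\Sigma\bA\|_2\le\|(\be_1-\wh\Sigma\wh\bomega_1)^T\bA\|_2+\|\bA_{1\cdot}\|_2$ and proves the dedicated concentration result $\|(\be_1-\wh\Sigma\wh\bomega_1)^T\bA\|_2=\cO_\PP(\sqrt{s_\Omega\log p/n})$ (Lemma \ref{lemma_nodewise_A}), which requires decomposing into $\bomega_1^T(\Sigma-\wh\Sigma)\bA$ and $(\bomega_1-\wh\bomega_1)^T\wh\Sigma\bA$ and is not a consequence of the KKT sup-norm bound alone. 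This is exactly why (\ref{cond_rn}) is calibrated with the factor $\sqrt{(s_n\vee s_\Omega)\log p}\,r_n$ rather than anything involving $p$. Your plan needs this lemma (or an equivalent $\ell_2$ bound on the node-wise residual against $\bA$) inserted at that point; the rest of the argument goes through as you describe.
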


    Theorem \ref{thm_asymp_normal} shows that the difference between $\wt\Theta_{11}$ and $\Theta_{11}$ scaled by $\sqrt{n}$ is decomposed into two terms, $\zeta$ and $\Delta$, where, conditioning on $\bX$, $\zeta$ follows a Gaussian distribution with zero mean and variance $\sigma_{E_1}^2 \wh{\bomega}_1^T\wh\Sigma\wh{\bomega}_1$, and $\Delta$ is asymptotically negligible. Indeed, $\Delta = o_{\PP}(1)$ holds uniformly over $\bTheta \in \cM(s_n,M_n)$ in (\ref{def_space_Theta}), so that we can use Theorem \ref{thm_asymp_normal} to construct honest confidence intervals for $\Theta_{11}$, as long as $\sigma_{E_1}^2$ can be consistently estimated.


    \begin{remark}[Discussions of conditions in Theorem \ref{thm_asymp_normal}]
    The Gaussianity assumption of $E_1$ is not essential. In fact, our proof states that 
    $\zeta = \wh\bomega_1^T\bX^T\bE_1 / \sqrt n$. Therefore, when $E_1$ is not Gaussian,  one can still obtain $\sqrt{n}(\wt{\Theta}_{11} - \Theta_{11}) \mid \bX\to_d N(0,\sigma_{E_1}^2 \wh{\bomega}_1^T\wh\Sigma\wh{\bomega}_1)$ provided that the Lindeberg's condition for the central limit theorem holds.

    The condition $s_{\Omega}\log p = o(n)$ ensures the consistency of the node-wise Lasso estimator $\wh\bomega_1$, see \cite{vandegeer2014}. We require an extra logarithmic factor of $m$ here due to the union bounds over $j\in[m]$ for estimating $\bX\bF_j$. 
    Condition $s_n \log p = o(\sqrt n)$ puts restriction on the number of non-zero rows in $\bTheta$. It is a rather standard condition for making inference of the coefficient in high-dimensional regressions \citep{javanmard14,vandegeer2014,zhangzhang2014}. As discussed after Lemma \ref{thm_Theta_simple_rates}, it is also the minimum requirement for consistently estimating $\bar\bTheta_1$ in $\ell_1$-norm.  
    
    Condition (\ref{cond_Mn}) is concerned with the magnitude of each row of $\bTheta$ in $\ell_1$ norm and is a strengthened version of the identifiability condition (\ref{cond_ident_Theta}). Recall that the estimand of the initial estimator $\wh\bTheta_1$ is   $\bar\bTheta_1 :=  \bTheta P_{B}^{\perp} \be_1$ rather than $\bTheta_1$. The condition is used to ensure that the bias term for estimating $\Theta_{11}$, defined as  $\Theta_{11}-\bar\Theta_{11}=\bTheta_{1\cdot}^T P_B\be_1$, is asymptotically negligible.  Condition (\ref{cond_Mn})  holds, for instance, when the rows of $\bTheta$ are sufficiently sparse and the order of $m$ is comparable or larger than $n$, see \cite{McKennan19,wang2017}. 

    Finally, condition (\ref{cond_rn}) puts restriction on the $\ell_2$ norm of $\bA_{1\cdot}$ as well as on the order of $r_n$. To aid intuition of this condition, we provide explicit rates of $r_n$ under two common scenarios in the high-dimensional setting. As seen in Corollary \ref{cor_ASN} below, the requirement of $r_n$ again hinges on the magnitude of $\bA$ which quantifies the correlation between the observable feature $X$ and the hidden variable $Z$. 
    We refer to Remark \ref{rem_A} for detailed discussions of conditions on $\bA$.
    \end{remark}

    The following corollary provides explicit rates of $r_n$ under two common scenarios in the high-dimensional settings, depending on the magnitude of $\|\Sigma\|_{\op}$.
    
    \begin{corollary}\label{cor_ASN}
Assume that Assumptions \ref{ass_error} --  \ref{ass_X} hold. 
    \begin{enumerate}
        \item[(1)] Suppose $p>n$ and $\|\Sigma\|_{\rm op} = \cO(1)$. Assume $(s_n\vee s_\Omega)\log^2(p\vee m) = o(n)$, 
        \beq\label{cond_A_op}
            \|\bA\|_{\rm op}^2 = o\left({1\over \sqrt{(s_n\vee s_\Omega)\log p}}\right)
        \eeq
        and $\|\bA_{1\cdot}\|_2 = o(\sqrt{(s_n\vee s_\Omega)\log p/ n})$. 
        Then Assumption \ref{ass_initial} holds with 
        \beq\label{rate_rnj_case1}
            r_{n} = \cO\left(\|\bA\|_{\op}^2 + {s_n \log (p\vee m)\over n}\right),\quad \forall\ 1\le j\le m
        \eeq
        and  condition (\ref{cond_rn}) holds. 
        
        \item[(2)] Suppose $p>n$, $\|\Sigma\|_{\op}\asymp p$ and $\tr(\Sigma)=\cO(p)$. Assume $s_n(s_n\vee s_\Omega) \log^2(p\vee m) = o(n)$ and 
        $
            \|\bA\|_{\op}^2 = \cO(1/p).
        $
        Then Assumption  \ref{ass_initial} holds with 
        \[
            r_{n} = \cO\left(\sqrt{s_n\log (p\vee m) \over n}\right).
        \]
        Furthermore, condition (\ref{cond_rn}) holds as well. 
    \end{enumerate}
\end{corollary}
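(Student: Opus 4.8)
The plan is to read Corollary \ref{cor_ASN} as a specialization of the uniform prediction-error bound (\ref{def_Rem_j}): once the explicit expressions of $Rem_{1,j}$, $Rem_{2,j}(\bdelta_j)$ and $Rem_{3,j}(\btheta_j)$ from Appendix \ref{app_theory_fit} are in hand, establishing Assumption \ref{ass_initial} amounts to bounding these three terms uniformly over $j\in[m]$ in each covariance regime, and then verifying separately that the resulting $r_n$ makes condition (\ref{cond_rn}) hold. Throughout I would use the decomposition $\bF_j = \btheta_j + \bdelta_j$ with $\btheta_j = \bTheta_j$ the sparse component and $\bdelta_j = \bA\bB_j$ the dense component, together with the uniform bound $\|\bdelta_j\|_2 \le \|\bA\|_\op\|\bB_j\|_2 \le \sqrt{C_B}\,\|\bA\|_\op$ supplied by Assumption \ref{ass_B_Sigma}(b). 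The lava analysis of \cite{chernozhukov2017} underlying (\ref{def_est_F_j}) splits the error into a lasso/noise contribution tracking the sparse part, a ridge-bias contribution tracking the dense part, and a sparse--dense cross term; the two cases of the corollary differ essentially only in how the latter two are controlled, and the uniformity over $j$ is obtained by a union bound that converts $\log p$ into $\log(p\vee m)$.

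For Case (1), where $\|\Sigma\|_\op = \cO(1)$, the ridge-bias contribution is controlled by the in-sample energy of the dense signal, $n^{-1}\|\bX\bdelta_j\|_2^2 \approx \bdelta_j^T\Sigma\bdelta_j \le \|\Sigma\|_\op\|\bdelta_j\|_2^2 \lesssim \|\bA\|_\op^2$, uniformly in $j$; the lasso/noise term contributes the standard rate $s_n\log(p\vee m)/n$ under the restricted-eigenvalue behavior guaranteed by Assumption \ref{ass_X}; and the sparse--dense cross term, of order $\sqrt{n^{-1}\|\bX\bdelta_j\|_2^2}\cdot\sqrt{s_n\log p/n} \lesssim \|\bA\|_\op\sqrt{s_n\log p/n}$, is absorbed into the previous two by the elementary inequality $ab \le (a^2+b^2)/2$. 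Combining gives $r_n = \cO(\|\bA\|_\op^2 + s_n\log(p\vee m)/n)$, as claimed.

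For Case (2), where $\|\Sigma\|_\op \asymp p$, $\tr(\Sigma) = \cO(p)$ and $\|\bA\|_\op^2 = \cO(1/p)$, the naive operator-norm bound above is too lossy, since it only yields $\bdelta_j^T\Sigma\bdelta_j = \cO(1)$. Instead I would exploit that the dense signal $\bA\bB_j$ aligns with the few leading eigendirections of $\Sigma$: the effective rank $\tr(\Sigma)/\|\Sigma\|_\op = \cO(1)$ ensures the ridge component captures $\bX\bdelta_j$ with asymptotically negligible bias, so the ridge-bias term vanishes. What survives is the sparse--dense cross term, which retains the factor $\sqrt{n^{-1}\|\bX\bdelta_j\|_2^2} = \cO(1)$ and hence equals $\cO(\sqrt{s_n\log(p\vee m)/n})$, yielding the stated rate. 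This is the step I expect to be the main obstacle: unlike Case (1), it genuinely requires the trace/effective-rank structure of $\Sigma$ rather than operator-norm inequalities, and the bias--variance tradeoff of the ridge part must be quantified through the spectrum of $\Sigma$ and the choice of $\lambda_2^{(j)}$.

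Finally, I would verify (\ref{cond_rn}) by direct substitution. In Case (1), $\|\bA_{1\cdot}\|_2 \le \|\bA\|_\op$, so $\|\bA_{1\cdot}\|_2\sqrt{\log m} \lesssim \|\bA\|_\op\sqrt{\log m} = o(1)$, and the term $\sqrt{(s_n\vee s_\Omega)\log p}\,r_n$ splits into $\sqrt{(s_n\vee s_\Omega)\log p}\,\|\bA\|_\op^2 = o(1)$ by (\ref{cond_A_op}) and $\sqrt{(s_n\vee s_\Omega)\log p}\cdot s_n\log(p\vee m)/n = o(1)$ by $(s_n\vee s_\Omega)\log^2(p\vee m) = o(n)$ together with $s_n\log p = o(\sqrt n)$; the remaining term $\|\bA_{1\cdot}\|_2\sqrt n\,r_n$ is handled the same way using $\|\bA_{1\cdot}\|_2 = o(\sqrt{(s_n\vee s_\Omega)\log p/n})$. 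In Case (2), $\|\bA_{1\cdot}\|_2 \le \|\bA\|_\op = \cO(1/\sqrt p)$ makes the two $\|\bA_{1\cdot}\|_2$-terms negligible, while $\sqrt{(s_n\vee s_\Omega)\log p}\,r_n = \cO(\sqrt{s_n(s_n\vee s_\Omega)\log(p\vee m)\log p/n}) = o(1)$ precisely under the strengthened condition $s_n(s_n\vee s_\Omega)\log^2(p\vee m) = o(n)$, which explains why Case (2) requires the extra factor of $s_n$.
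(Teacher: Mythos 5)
Your overall route is the paper's: specialize the lava prediction bound of Theorem \ref{thm_pred} to $\btheta_j=\bTheta_j$ and $\bdelta_j=\bA\bB_j$, control the dense-signal energy differently in the two spectral regimes, and check (\ref{cond_rn}) by direct substitution. In Case (1) your argument matches the paper's in substance: the paper sends $\lambda_2^{(j)}\to\infty$, which makes $Rem_{1,j}=0$ and $Rem_{2,j}(\bdelta_j)=\bdelta_j^T\wh\Sigma\bdelta_j$, and then concentrates $\bdelta_j^T\wh\Sigma\bdelta_j$ around $\bdelta_j^T\Sigma\bdelta_j\lesssim\|\bA\|_{\op}^2\|\Sigma\|_{\op}$ via Lemma \ref{lem_bernstein} with a union bound over $j\in[m]$; your operator-norm bound plus the standard sparse rate gives the same $r_n$, and your verification of (\ref{cond_rn}) tracks the paper's (you additionally invoke $s_n\log p=o(\sqrt n)$ from Theorem \ref{thm_asymp_normal} for the term $\sqrt{(s_n\vee s_\Omega)\log p}\cdot s_n\log(p\vee m)/n$, which is legitimate since the corollary is meant to feed into that theorem).

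The one genuine gap is the quantitative step in Case (2). You correctly recognize that $\lambda_2^{(j)}$ must be kept finite there and that the dominant contribution has the form $\sqrt{n^{-1}\|\bX\bdelta_j\|_2^2}\cdot\sqrt{s_n\log(p\vee m)/n}$, but the claim that ``the ridge-bias term vanishes'' is not accurate and, more importantly, no bound is actually derived: the effective-rank heuristic by itself does not produce the stated rate. The paper obtains, by repeating the proof of Corollary 8 in \cite{bing2020adaptive}, the explicit inequality $Rem_{1,j}+Rem_{2,j}(\bdelta_j)+Rem_{3,j}(\btheta_j)\lesssim\sqrt{(\tr(\wh\Sigma)+\Lambda_1 s_n)\|\bdelta_j\|_2^2\log(p\vee m)/n}+s_n/n$, and then uses $\tr(\wh\Sigma)=\cO_\PP(p)$, $\Lambda_1=\cO_\PP(p)$ and $\|\bdelta_j\|_2^2=\cO(1/p)$. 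Note that the ridge-variance piece $\sqrt{\tr(\wh\Sigma)\|\bdelta_j\|_2^2\log(p\vee m)/n}$ does not vanish; it is of order $\sqrt{\log(p\vee m)/n}$ and is merely dominated by the $\Lambda_1 s_n\|\bdelta_j\|_2^2$ piece, which is your cross term. To close your argument you would need to balance $\lambda_2^{(j)}$ in the explicit expressions for $Rem_{1,j}$, $Rem_{2,j}$ and $Rem_{3,j}$ from Theorem \ref{thm_pred} and reproduce this trace/operator-norm tradeoff; as written, the key inequality is asserted rather than proved.
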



\begin{remark}[Discussions of conditions on $\bA$]\label{rem_A}
    We first explain why restriction on the magnitude of $\bA$ is necessary in the high-dimensional regime ($p > n$).
     For any $j\in [m]$, recall that $\|\bA\bB_j\|_2^2 = \|\bdelta_j\|_2^2$ and  consider the regression $\bY_j = \bX\bdelta_j + \bepsilon_j$ with $\btheta_j = \b0$. Even in this simplified scenario, since $\bdelta_j$ is a dense $p$-dimensional vector, its consistent estimation 
    requires  $\|\bdelta_j\|_2=o(1)$ when $p$ is larger than $n$ \citep{Hsu2014,chernozhukov2017,cevid2018spectral}. Therefore, one would expect that 
    $\|\bdelta_j\|_2^2 = o(1)$ is necessary for consistent estimation of $\bX\bF_j$ for each $1\le j\le m$. The uniform bound over $1\le j\le m$, together with $\lambda_K(\bB)\gtrsim \sqrt{m}$, in turn implies
    \beq\label{bd_A_op_minimax}
        \|\bA\|_{\op}^2 = o(1).
    \eeq
    Therefore, consistent estimation of $\bX\bF$ in high-dimensional scenario necessarily requires small $\|\bA\|_{\op}^2$. Recall that $\bA = \Sigma^{-1}\Cov(X,Z)$ with $\Sigma = \Cov(X)$. A small $\|\bA\|_{\op}^2$ means either (a) the observable feature $X$ and the hidden variable $Z$ are weakly correlated, or (b) $\Sigma$ has spiked eigenvalues. We comment on these two cases separately below. 

    Scenario (1) of Corollary \ref{cor_ASN} corresponds to (a). 
    When there is a finite number of observable feature $X$ correlated with the hidden variable $Z$, we have 
    $\|\bA\|_{\op}^2 = \cO(\rho)$ where $\rho=\max_{1\le j\le m,1\le k\le K}\textrm{Corr}(X_j, Z_k)$. Condition (\ref{cond_A_op}) holds if $\rho = o(1/\sqrt{(s_n\vee s_\Omega)\log p})$. 
    In addition, $\|\bA_{1\cdot}\|_2 = o(\sqrt{(s_n\vee s_\Omega)\log p/ n})$ holds, for instance, when either the rows of $\bA$ are balanced in the sense that $\|\bA_{1\cdot}\|_2 = \cO(\|\bA\|_{\op}/\sqrt{p})$ or $\max_{1\le k\le K}\textrm{Corr}(X_1, Z_k) = o(\sqrt{(s_n\vee s_\Omega)\log p/n})$.
    
    Scenario (2) of Corollary \ref{cor_ASN} corresponds to (b) where $\Sigma$ has a fixed number of spiked eigenvalues. One instance is when $X$ follows from a factor model $X = \bGamma F + W'$ where $F\in\RR^r$ is the factor and the loading matrix $\bGamma\in\RR^{p\times r}$ satisfies $\lambda_r(\bGamma) \gtrsim \sqrt p$ with $r < p$. \citet[Section 3.4]{bing2020adaptive} provides examples of this model under which $\|\Sigma\|_{\op} = \cO(p)$, $\tr(\Sigma)= \cO(p)$ and $\|\bA\|_{\op}^2 = \cO(1/p).$
\end{remark}

    \subsection{Efficiency and consistent estimation of the asymptotic variance}\label{sec_effciency}
    
    From Theorem \ref{thm_asymp_normal}, our estimator $\wt\Theta_{11}$ has the asymptotic variance $\sigma_{E_1}^2\Omega_{11}/n$, which, according to the Gauss-Markov theorem, is the same asymptotic variance of the best linear unbiased  estimator (BLUE) of $\Theta_{11}$ in the classical low-dimensional setting without any hidden variables. Therefore, our estimator $\wt \Theta_{11}$ is efficient in this Gauss-Markov sense. In fact, even when there exist hidden variables $Z$, $\sigma_{E_1}^2\Omega_{11}/n$ is also the minimal variance of all unbiased estimators in the low-dimensional setting. Indeed, when $Z$ is observable, the Gauss-Markov theorem states that the oracle BLUE of $\Theta_{11}$ has the asymptotic variance 
    \[
        {\sigma_{E_1}^2\over n}
        ~ \be_1^T \begin{bmatrix}
           \Sigma & \Cov(X,Z) \\ 
           \Cov(Z,X) & \Cov(Z)
        \end{bmatrix}^{-1} \be_1 = {\sigma_{E_1}^2\over n}\left(
        \Omega_{11} + \bA_{1\cdot}^T \sw^{-1}\bA_{1\cdot}
        \right).
    \]
    Here the equality uses the  block matrix inversion formula, the definition $\bA = \Sigma^{-1}\Cov(X,Z)$ and $\sw = \Cov(Z) - \Cov(Z,X)\Sigma^{-1} \Cov(X,Z)$. Comparing to $\sigma_{E_1}^2\Omega_{11} / n$, the term $\bA_{1\cdot}^T \sw^{-1} \bA_{1\cdot}$ represents the efficiency loss due to the hidden variables. However, in the high-dimensional setting with $\|\bA_{1\cdot}\|_2 = o(1)$ (together with $\Omega_{11} \ge c$ and $\lambda_K(\sw) \ge c_W$), this efficiency loss becomes negligible and the asymptotic variance in the above display reduces to $\sigma_{E_1}^2\Omega_{11} / n$. 
    
    In the high-dimensional regime, if one treats model (\ref{model}) as a semi-parametric model $Y_1 = \Theta_{11}X_1 + G(X_{-1}, Z) + E_1$ for some unknown function $G:\RR^{p-1}\times \RR^K \to \RR$ with $Z$ being observable, our estimator $\wt\Theta_{11}$ of $\Theta_{11}$ is semi-parametric efficient according to Theorem 2.3 and Lemma 2.1 in \cite{vandegeer2014}.\\

    Our proposed test statistic in (\ref{def_U_hat}) and confidence intervals  in (\ref{CI_def}) require to estimate $\sigma_{E_1}^2$. 
    The following proposition ensures that the proposed estimator $\wh\sigma_{E_1}^2$ in (\ref{def_est_variance}) is consistent. Consequently, an application of the Slutsky's theorem coupled with Theorem \ref{thm_asymp_normal} justifies the validity of our test statistic and confidence intervals in Section \ref{sec_est_Theta}.
    
    \begin{prop}\label{prop_sigma_E}
        Under conditions of Theorem \ref{thm_asymp_normal}, $\wh \sigma_{E_1}^2$ defined in (\ref{def_est_variance}) satisfies 
        \[
                |\wh \sigma_{E_1}^2 - \sigma_{E_1}^2| = o_\PP(1).
        \]  
    \end{prop}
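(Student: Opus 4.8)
The plan is to center $\wh\sigma_{E_1}^2$ at the infeasible estimator $n^{-1}\bE_1^T\bE_1$ built from the true noise column $\bE_1$, show that this infeasible estimator is consistent for $\sigma_{E_1}^2$, and then show that replacing $\bE_1$ by its plug-in $\wh\bE_1 := \wh\bepsilon_1 - \wh\bW\wh\bB_1$ costs only $o_\PP(1)$. Writing $\bE_1 = \bepsilon_1 - \bW\bB_1$ (the first column of the factor decomposition $\bepsilon = \bW\bB + \bE$), I expand
\[
\wh\sigma_{E_1}^2 - \sigma_{E_1}^2 = \Big(\tfrac1n\bE_1^T\bE_1 - \sigma_{E_1}^2\Big) + \tfrac2n\bE_1^T(\wh\bE_1 - \bE_1) + \tfrac1n\|\wh\bE_1 - \bE_1\|_2^2.
\]
The first term is $o_\PP(1)$ by a standard law of large numbers / sub-Gaussian concentration argument under Assumption \ref{ass_error}, since $\EE[E_{i1}^2] = [\se]_{11} = \sigma_{E_1}^2$; in particular $n^{-1}\|\bE_1\|_2^2 = O_\PP(1)$. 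By Cauchy--Schwarz the middle term is bounded by $2(n^{-1}\|\bE_1\|_2^2)^{1/2}(n^{-1}\|\wh\bE_1 - \bE_1\|_2^2)^{1/2}$, so the whole problem reduces to the single bound $n^{-1}\|\wh\bE_1 - \bE_1\|_2^2 = o_\PP(1)$.

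For that bound I decompose $\wh\bE_1 - \bE_1 = (\wh\bepsilon_1 - \bepsilon_1) - (\wh\bW\wh\bB_1 - \bW\bB_1)$ and treat the two pieces separately. The prediction piece satisfies $n^{-1}\|\wh\bepsilon_1 - \bepsilon_1\|_2^2 = n^{-1}\|\bX\wh\bF_1 - \bX\bF_1\|_2^2 \lesssim r_n = o(1)$ directly by Assumption \ref{ass_initial}. The remaining piece $n^{-1}\|\wh\bW\wh\bB_1 - \bW\bB_1\|_2^2$ is the error in estimating the first column of the common component $\bW\bB$, and controlling it is the crux of the argument.

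To handle the common-component error I would invoke the factor-analysis rates underlying Theorem \ref{thm_rates_B} (and its proof): there is an invertible rotation $\bH\in\RR^{K\times K}$ for which the column-wise loading rate gives $\|\wh\bB_1 - \bH^{-1}\bB_1\|_2 = o_\PP(1)$ and the companion factor rate gives $n^{-1}\|\wh\bW - \bW\bH\|_F^2 = o_\PP(1)$. Writing $\wh\bB_1 = \bH^{-1}\bB_1 + \bb_1$ and $\wh\bW = \bW\bH + \bDelta_W$, the product expands as $\wh\bW\wh\bB_1 - \bW\bB_1 = \bW\bH\,\bb_1 + \bDelta_W\,\bH^{-1}\bB_1 + \bDelta_W\,\bb_1$, in which the leading rotation cancels. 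Using $n^{-1}\|\bW\|_{\op}^2 = O_\PP(1)$ (from $n^{-1}\bW^T\bW \to \sw$ with $\sw$ bounded by Assumption \ref{ass_B_Sigma}), $\|\bB_1\|_2 = O(1)$, and $\|\bH\|_{\op}, \|\bH^{-1}\|_{\op} = O_\PP(1)$, each term's normalized squared norm is $o_\PP(1)$, whence $n^{-1}\|\wh\bW\wh\bB_1 - \bW\bB_1\|_2^2 = o_\PP(1)$.

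The main obstacle is precisely this last step: the rotation indeterminacy of the factor model must be tracked carefully so that the dominant parts cancel in the product $\wh\bW\wh\bB_1$, and the loading rate of Theorem \ref{thm_rates_B} must be paired with a matching factor-estimation rate to bound the residual terms. Once $n^{-1}\|\wh\bE_1 - \bE_1\|_2^2 = o_\PP(1)$ is established, the displayed expansion, together with the Cauchy--Schwarz bound on the cross term, yields $|\wh\sigma_{E_1}^2 - \sigma_{E_1}^2| = o_\PP(1)$, which is the claim.
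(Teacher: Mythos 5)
Your proposal is correct and follows essentially the same route as the paper's proof: both expand $\wh\sigma_{E_1}^2-\sigma_{E_1}^2$ around $n^{-1}\bE_1^T\bE_1$ (controlled by sub-Gaussian concentration), absorb the prediction error $\bX\wh\bF_1-\bX\bF_1$ via Assumption \ref{ass_initial}, and handle the common-component error by pairing the rotated loading rate of Theorem \ref{thm_rates_B} with a matching factor rate (the paper's Lemma \ref{lem_W_frob}, with your $\bH$ playing the role of $\bH_0^{-1}$ so that $\wt\bW\wt\bB_1=\bW\bB_1$ exactly as in your cancellation). The only cosmetic difference is that you group the two error sources into a single $\wh\bE_1-\bE_1$ before applying Cauchy--Schwarz, whereas the paper expands all cross terms at once.
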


    \subsection{Rate of convergence and asymptotic normality of $\wh B$}\label{sec_theory_B}
    
    Towards establishing the theoretical guarantees of $\wt\Theta_{11}$ in the previous section, one intermediate, but important, step is to sharply characterize the error of estimating $P_B$, or equivalently, $\bB$. In this section, we first present the convergence rate of our estimator $\wh\bB$ in (\ref{def_est_BW}). Then, we establish the asymptotic normality of $\wh\bB$ to test the hypothesis (\ref{def_target_B}).
    
    First notice that, without further restrictions, $\bW$ and $\bB$ are not identifiable even one has direct access to $\bepsilon = \bW\bB+\bE$. This can be seen by constructing $\bW' = \bW Q$ and $\bB' = Q^{-1}\bB$ for any invertible matrix $Q\in \RR^{K\times K}$ such that $\bW\bB = \bW'\bB'$. To quantify the estimation error of $\wh\bB$, we introduce the following rotation matrix \citep{bai2020simpler},
    \begin{equation}\label{def_H0}
        \bH_0^T = {1\over nm}\bW^T\bW \bB \wh\bB^T \bD_K^{-2} \in \RR^{K\times K}
    \end{equation}
    with $\bD_K$ defined in (\ref{def_svd_epsilon})\footnote{If $\bD_K$ is not invertible, we use its Moore-Penrose inverse instead.}.
    Further define 
    \begin{equation}\label{def_B_tilde}
        \wt \bB = \bH_0 \bB \in \RR^{K\times m}.
    \end{equation}
    Since $\wt \bB = (nm)^{-1}\bD_K^{-2}\wh\bB (\bB^T\bW^T\bW\bB)$ only depends on the data and the identifiable quantity $\bB^T\bW^T\bW\bB$, $\wt \bB$ is well-defined.

    The following theorem provides the uniform $\ell_2$ convergence rate of $\wh\bB_j - \wt\bB_j$ over $1\le j\le m$. Recall that $M_n$ is defined in (\ref{def_space_Theta}) and $r_{n}$ is defined in  Assumption \ref{ass_initial}.
    
    \begin{theorem}\label{thm_rates_B}
        Under Assumptions \ref{ass_error}, \ref{ass_B_Sigma},  \ref{ass_initial} and $M_n = o(m)$, with probability tending to one as $n\to \infty$,
        one has
        \begin{equation}\label{eq_thm_rates_B}
            \max_{1\le j\le m}\|\wh\bB_j - \wt\bB_j\|_2 \lesssim \sqrt{\log m\over n\wedge m} + r_{n}.            
        \end{equation}
    \end{theorem}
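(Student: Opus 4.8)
The plan is to follow the standard perturbation analysis for factor models (as in \cite{Bai-factor-model-03,bai2020simpler}), modified to account for the fact that the factor model $\bepsilon = \bW\bB + \bE$ is fitted through the estimated residual $\wh\bepsilon$ rather than the unobserved $\bepsilon$. First I would record the eigen-identity behind the PCA estimator. Since the columns of $\bV_K$ are the leading eigenvectors of $(nm)^{-1}\wh\bepsilon^T\wh\bepsilon$ with eigenvalues $d_1^2\ge\cdots\ge d_K^2$ and $\wh\bB^T = \sqrt m\,\bV_K\bD_K$, a direct manipulation gives $(nm)^{-1}\wh\bepsilon^T\wh\bepsilon\,\wh\bB^T = \wh\bB^T\bD_K^2$, whose $j$th row reads
\[
\wh\bB_j = \frac{1}{nm}\bD_K^{-2}\,\wh\bB\,\wh\bepsilon^T\wh\bepsilon_j,
\qquad\text{while}\qquad
\wt\bB_j = \bH_0\bB_j = \frac{1}{nm}\bD_K^{-2}\,\wh\bB\,\bB^T\bW^T\bW\bB_j
\]
by the definition of $\bH_0$ in (\ref{def_H0}). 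Writing $\wh\bepsilon = \bW\bB + \bE + \bDelta$ with $\bDelta := \bX\bF - \bX\wh\bF$ (so $\max_j n^{-1}\|\bDelta_j\|_2^2\lesssim r_n$ by Assumption \ref{ass_initial}), subtracting the two displays and cancelling the common signal term $\bB^T\bW^T\bW\bB_j$ yields
\begin{align*}
\wh\bB_j - \wt\bB_j &= \frac{1}{nm}\bD_K^{-2}\,\wh\bB\Big[
\bB^T\bW^T\bE_j + \bE^T\bW\bB_j + \bE^T\bE_j \\
&\qquad + \bB^T\bW^T\bDelta_j + \bDelta^T\bW\bB_j + \bE^T\bDelta_j + \bDelta^T\bE_j + \bDelta^T\bDelta_j
\Big].
\end{align*}

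Before bounding the bracket I would establish the prefactor controls. Applying Weyl's inequality to $(nm)^{-1}\wh\bepsilon^T\wh\bepsilon$ against the population signal $m^{-1}\bB^T\sw\bB$, whose top-$K$ eigenvalues are bounded away from $0$ and $\infty$ by the pervasiveness condition in Assumption \ref{ass_B_Sigma}(a)--(b), and controlling the perturbations $\|(nm)^{-1}\bE^T\bE\|_{\op}\lesssim n^{-1}+m^{-1}$ and $\|(nm)^{-1}\bDelta^T\bDelta\|_{\op}\le(nm)^{-1}\|\bDelta\|_F^2\lesssim r_n$, I obtain $d_k^2\asymp 1$ for all $k\le K$ with probability tending to one. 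Consequently $\|\bD_K^{-2}\|_{\op}=\cO_\PP(1)$, $\|\wh\bB\|_{\op}=\sqrt m\,d_1\asymp\sqrt m$, and together with $\|\bB\|_{\op}\lesssim\sqrt m$ this gives $\|m^{-1}\wh\bB\bB^T\|_{\op}=\cO_\PP(1)$ and $\|m^{-1}\wh\bB\|_{\op}\lesssim m^{-1/2}$, which are exactly the normalizations needed below.

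For the three pure factor-model terms I would regroup the scalars to expose $n$- and $m$-averages. The cross terms give $\tfrac1{nm}\bD_K^{-2}\wh\bB\bB^T\bW^T\bE_j=\bD_K^{-2}(m^{-1}\wh\bB\bB^T)(n^{-1}\bW^T\bE_j)$, and since $\bW$ and $\bE$ are independent sub-Gaussian, a coordinatewise tail bound with a union bound over $j\in[m]$ (with $K$ fixed) yields $\max_j\|n^{-1}\bW^T\bE_j\|_2\lesssim\sqrt{\log m/n}$. For $\bE^T\bE_j$, splitting $n^{-1}\bE^T\bE_j=\se\be_j+(n^{-1}\bE^T\bE_j-\se\be_j)$, the mean part contributes $\|m^{-1}\wh\bB\se\be_j\|_2\lesssim m^{-1/2}\|\se\|_{\op}\lesssim m^{-1/2}$ via Assumption \ref{ass_B_Sigma}(c), while the fluctuation contributes $\cO_\PP(\sqrt{\log m/n})$. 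Collecting these (and the symmetric term $\bE^T\bW\bB_j$) produces the statistical rate $\sqrt{\log m/(n\wedge m)}$.

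The five terms involving $\bDelta$ are the crux. A naive Cauchy--Schwarz bound on $n^{-1}\bW^T\bDelta_j$ and $n^{-1}\bE^T\bDelta_j$ would only yield $\sqrt{r_n}$, which is too large to match the claimed \emph{linear} $r_n$. The key is that $\bDelta_j=\bX(\bF_j-\wh\bF_j)$ lies in the column space of $\bX$, and by $\EE[XW^T]=\b0$ (from (\ref{model_ZX})) together with the independence of $\bE$ from $X$, the matrices $n^{-1}\bW^T\bX$ and $n^{-1}\bE^T\bX$ concentrate around zero at the element-wise rate $\sqrt{\log p/n}$. Writing $n^{-1}\bW^T\bDelta_j=(n^{-1}\bW^T\bX)(\bF_j-\wh\bF_j)$ and pairing this concentration with the $\ell_1$-control of the first-step error $\bF_j-\wh\bF_j$ through an $\ell_\infty$--$\ell_1$ bound (rather than Cauchy--Schwarz) brings these terms down to order $r_n$; the quadratic term $\bDelta^T\bDelta_j$ is directly $\cO_\PP(r_n)$ after the $m^{-1/2}$ normalization. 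Making this step rigorous --- in particular verifying that the first-step $\ell_1$-errors interact with the mean-zero design cross-products to produce linear rather than square-root dependence on $r_n$ --- is the main technical obstacle, and is precisely where the structural assumption $\EE[XW^T]=\b0$ is essential. Taking the maximum over $j\in[m]$ throughout, combining the two groups of bounds, and absorbing lower-order terms yields (\ref{eq_thm_rates_B}).
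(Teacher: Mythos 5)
Your overall skeleton is the same as the paper's: the eigen-identity $(nm)^{-1}\wh\bepsilon^T\wh\bepsilon\,\wh\bB^T=\wh\bB^T\bD_K^2$, the substitution $\wh\bepsilon=\bW\bB+\bE+\bDelta$, the cancellation of $\bB^T\bW^T\bW\bB_j$ against $\wt\bB_j=\bH_0\bB_j$, the Weyl-type control of $\bD_K$, and the $\sqrt{\log m/(n\wedge m)}$ rate for the pure factor-model terms all match the paper's argument (its Lemmas \ref{lem_D_K} and \ref{lem_quad_terms}). The quadratic term $\bDelta^T\bDelta_j$ is also handled the same way. The problem is the step you yourself flag as the "main technical obstacle": the cross terms $\bepsilon^T\bDelta_j$ and $\bDelta^T\bepsilon_j$.

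The specific route you sketch there would fail. You propose writing $n^{-1}\bW^T\bDelta_j=(n^{-1}\bW^T\bX)(\bF_j-\wh\bF_j)$ and pairing an element-wise $\sqrt{\log p/n}$ bound on $n^{-1}\bW^T\bX$ with "the $\ell_1$-control of the first-step error $\bF_j-\wh\bF_j$." But no such $\ell_1$ control exists: $\wh\bF_j=\wh\btheta^{(j)}+\wh\bdelta^{(j)}$, and the ridge component $\wh\bdelta^{(j)}$ estimates the \emph{dense} vector $\bdelta_j=\bA\bB_j$; its $\ell_1$ error generically scales like $\sqrt{p}$ times its $\ell_2$ error, which is uncontrolled when $p>n$. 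Only the lasso component admits an $\ell_1$ bound. The paper's Lemma \ref{lem_quad_terms_Delta} circumvents this by invoking the closed-form lava representation from Lemma \ref{lem_solution},
\[
\bDelta_j = P_{\lambda_2^{(j)}}\bepsilon_j - Q_{\lambda_2^{(j)}}\bX\bdelta_j + Q_{\lambda_2^{(j)}}\bX\bigl(\wh\btheta^{(j)}-\btheta_j\bigr),
\]
and treating the three pieces with different tools: sub-Gaussian quadratic-form concentration (Lemma \ref{lem_quad}) for $\bepsilon^T P_{\lambda_2^{(j)}}\bepsilon_j$, which produces the $Rem_{1,j}$ term \emph{linearly}; a direct bound for $\bepsilon^T Q_{\lambda_2^{(j)}}\bX\bdelta_j$ producing $\sqrt{Rem_{2,j}\log m/n}$ (which folds into $r_n+\sqrt{\log m/n}$ by AM--GM); and the $\ell_\infty$--$\ell_1$ H\"older pairing \emph{only} for the sparse piece $\wh\btheta^{(j)}-\btheta_j$, whose $\ell_1$ error is bounded by $(Rem_{2,j}+Rem_{3,j})/\lambda_1^{(j)}$. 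So the linear dependence on $r_n$ comes from the structure of the lava solution, not from a global $\ell_\infty$--$\ell_1$ argument on $\bF_j-\wh\bF_j$; without this decomposition your plan cannot close the gap between $\sqrt{r_n}$ and $r_n$.
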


    The first term on the right hand side of (\ref{eq_thm_rates_B}) is the error rate of estimating $\bB$ when $\bepsilon=\bY - \bX\bF$ is known, while the second term corresponds to the error of estimating $\bepsilon$ by $\wh\bepsilon = \bY-\bX\wh\bF$.
    If $\bepsilon = \bW\bB + \bE \in \RR^{n\times m}$ were observed, theoretical guarantees of $\wh\bB$ and $\wh\bW$ from (\ref{def_est_BW}) for diverging $n$ and $m$  have been thoroughly studied in the literature of factor models \citep{Bai-factor-model-03,Bai-Ng-forecast,fan2013large}. Our results reduce to the existing results in this case with $r_n = 0$. The logarithmic factor of $m$ comes from establishing the union bound over $j\in[m]$. The appearance of $m$ in the denominator of bound (\ref{eq_thm_rates_B}) also reflects the benefit of having a large $m$, the so-called blessing of dimensionality \citep{Bai-factor-model-03,fan2013large}. When one only has access to $\wh\bepsilon$ instead of $\bepsilon$, the analysis becomes more challenging. Specifically, since $\wh\bepsilon= \bW\bB + \wt\bE$ with $\wt\bE := \bE+\wh\bepsilon - \bepsilon$, one can view $\wh\bepsilon$ as a factor model with the factor component $\bW\bB$ and the error $\wt\bE $. The difficulty of establishing Theorem \ref{thm_rates_B} lies in characterizing the dependence between $\wt\bE$ and $\bW\bB$, as $\wh\bepsilon$ depends on the data hence also depends on $\bW$ in a complicated way.\\

    In addition to the rates of convergence, the following theorem provides the asymptotic normality of $\wh \bB_j$ for any  $1\le j\le m$. 
    
    \begin{theorem}\label{thm_B_asn}
        Under the same conditions of Theorem \ref{thm_rates_B}, assume  $s_n\log(p\vee m) = o(\sqrt{n})$, $\|\se\|_{\i,1}=\cO(1)$, $ \sqrt{n} = o(m/\log (m))$ and 
        \begin{equation}\label{cond_r_asn}
            \|\bA\|_{\op}^2\max\left\{ n \|\bA\bB_j\|_2^2,~ s_n\log (p\vee m), ~ \sqrt{n\log m\over m}\right\}= o(1).
        \end{equation}
        Then for any $1\le j\le m$, one has 
        \[
            \sqrt{n}(\wh\bB_j - \wt\bB_j) \overset{d}{\longrightarrow} N_K(\b0, \sigma_{E_1}^2 \bI_K),\qquad \textrm{as }n\to\infty.        
        \]
    \end{theorem}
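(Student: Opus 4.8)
The plan is to begin from a closed-form expression for $\wh\bB$ and expand it around $\wt\bB=\bH_0\bB$. Because $\wh\bW=\sqrt n\,\bU_K$ obeys the normalization $n^{-1}\wh\bW^T\wh\bW=\bI_K$, the solution (\ref{def_est_BW}) can be rewritten as the exact identity $\wh\bB=n^{-1}\wh\bW^T\wh\bepsilon$, and hence $\wh\bB_j=n^{-1}\wh\bW^T\wh\bepsilon_j$. Substituting $\wh\bepsilon_j=\bW\bB_j+\bE_j+\br_j$, where $\br_j:=\bX(\bF_j-\wh\bF_j)$ is the fit error from (\ref{def_est_F_j}), gives the working decomposition
\[
\wh\bB_j-\wt\bB_j=\Big(n^{-1}\wh\bW^T\bW-\bH_0\Big)\bB_j+n^{-1}\wh\bW^T\bE_j+n^{-1}\wh\bW^T\br_j.
\]
A short computation using $\wh\bB^T=\sqrt m\,\bV_K\bD_K$ and the SVD (\ref{def_svd_epsilon}) shows that $\bH_0$ is exactly the leading piece of $n^{-1}\wh\bW^T\bW$, so that $n^{-1}\wh\bW^T\bW-\bH_0=(n\sqrt m)^{-1}\bD_K^{-1}\bV_K^T(\bE^T\bW+\br^T\bW)$; the first term above is therefore a genuine remainder rather than a bias.

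I would then show that only $n^{-1}\wh\bW^T\bE_j$ contributes to the limit. Since the columns of $\wh\bW$ and of $\bW$ asymptotically span the same $K$-dimensional factor space, the perturbation analysis underlying Theorem \ref{thm_rates_B} (Davis--Kahan together with the eigen-gap from Assumption \ref{ass_B_Sigma}) yields $\wh\bW=\bW\,\sw^{-1}\bH_0^T+(\text{error})$ and, using the constraint $n^{-1}\wh\bW^T\wh\bW=\bI_K$, the key identities $\bH_0^T\bH_0\to\sw$ and $\bH_0\sw^{-1}\bH_0^T\to\bI_K$. Consequently $\sqrt n\,n^{-1}\wh\bW^T\bE_j=\bH_0\sw^{-1}\,n^{-1/2}\bW^T\bE_j+o_\PP(1)$. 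The rows $W_i$ are i.i.d.\ and independent of $E_{ij}$ (because $E\bigCI(X,Z)$ and $W=Z-\bA^TX$), so the multivariate Lindeberg central limit theorem, justified by the sub-Gaussianity in Assumption \ref{ass_error}, gives $n^{-1/2}\bW^T\bE_j\overset{d}{\longrightarrow}N_K(\b0,\sigma_{E_j}^2\sw)$ with $\sigma_{E_j}^2=[\se]_{jj}$. Transforming by $\bH_0\sw^{-1}$ and invoking $\bH_0\sw^{-1}\bH_0^T\to\bI_K$ (so that $\bH_0\sw^{-1/2}$ is asymptotically orthogonal and the centered Gaussian limit is rotation invariant) produces the announced variance $\sigma_{E_j}^2\bI_K$, which coincides with the stated $\sigma_{E_1}^2\bI_K$ under the convention $j=1$.

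The bulk of the work, and the principal obstacle, is to prove that after scaling by $\sqrt n$ the two remaining terms are $o_\PP(1)$. For the rotation remainder I would insert the identity from the first paragraph, reducing it to bounding $(nm)^{-1/2}\bV_K^T\bE^T\bW\bB_j$ and $(nm)^{-1/2}\bV_K^T\br^T\bW\bB_j$; the noise piece is controlled by the weak cross-sectional dependence $\|\se\|_{\i,1}=\cO(1)$ combined with the blessing of dimensionality $\sqrt n=o(m/\log m)$, while the fit-error piece is handled by Theorem \ref{thm_rates_B} and Assumption \ref{ass_initial}. The genuinely delicate term is $n^{-1}\wh\bW^T\br_j=n^{-1}\wh\bW^T\bX(\bF_j-\wh\bF_j)$: here $\wh\bW$ is itself a complicated function of $\bE$ and $\br$, so one cannot appeal to independence between the projector and the fit error. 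I would split $\br_j$ along the lava decomposition $\bF_j=\btheta_j+\bdelta_j$ into its sparse component (estimated by the lasso part) and its dense component $\bdelta_j=\bA\bB_j$, and bound each resulting inner product through the operator-norm control of $\bA$. This is precisely where condition (\ref{cond_r_asn}) enters: the quantities $n\|\bA\bB_j\|_2^2$, $s_n\log(p\vee m)$ and $\sqrt{n\log m/m}$, each multiplied by $\|\bA\|_{\op}^2$, are exactly the orders of the three pieces of $\sqrt n\,n^{-1}\wh\bW^T\br_j$ that must vanish. Collecting these estimates shows the full remainder is $o_\PP(1)$, and an application of Slutsky's theorem completes the argument.
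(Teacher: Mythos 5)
Your route is, at bottom, the paper's route: your exact identity $\wh\bB_j=n^{-1}\wh\bW^T\wh\bepsilon_j$ combined with $n^{-1}\wh\bW^T\bW-\bH_0=(n\sqrt m)^{-1}\bD_K^{-1}\bV_K^T(\bE^T\bW+\bDelta^T\bW)$ reproduces, after regrouping, exactly the paper's decomposition (\ref{display_B_hat_BH}); your CLT for $n^{-1/2}\bW^T\bE_j$ and the cancellation of $\sw$ through the limiting rotation (your $\bH_0\sw^{-1}\bH_0^T\to\bI_K$ is equivalent to the paper's $\bH_2\to\bQ^{-1}$ with $(\bQ^{-1})^T\sw\bQ^{-1}=\bI_K$), as well as your reading of condition (\ref{cond_r_asn}) as governing the three pieces of the fit-error term, all match the paper's proof.

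Two steps would, however, fail as literally written. First, the claim $\sqrt n\, n^{-1}\wh\bW^T\bE_j=\bH_0\sw^{-1}n^{-1/2}\bW^T\bE_j+o_\PP(1)$ cannot be obtained from a norm bound on the error in $\wh\bW=\bW\sw^{-1}\bH_0^T+(\text{error})$: Lemma \ref{lem_W_frob} only gives $n^{-1/2}\|\wh\bW-\wt\bW\|_{\op}\lesssim\sqrt{r_n}+\sqrt{\log m/(n\wedge m)}$, and Cauchy--Schwarz against $\|\bE_j\|_2\asymp\sqrt n$ leaves terms of order $\sqrt{n r_n}$ and $\sqrt{n\log m/m}$ that are not $o(1)$ under the stated conditions. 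You must instead apply your own exact identity to this term as well, writing $n^{-1}\wh\bW^T\bE_j=(n\sqrt m)^{-1}\bD_K^{-1}\bV_K^T(\bB^T\bW^T\bE_j+\bE^T\bE_j+\bDelta^T\bE_j)$ and controlling the last two pieces sharply; this is precisely where $\|\se\|_{\i,1}=\cO(1)$ (needed for $\|\bB\bE^T\bE_j\|_2\lesssim\sqrt{n(n+m)\log m}$ in Lemma \ref{lem_quad_terms}) and $\sqrt n\log m=o(m)$ actually enter --- not, as you assert, in bounding $(nm)^{-1/2}\bV_K^T\bE^T\bW\bB_j$, for which $\|\se\|_{\op}=\cO(1)$ suffices. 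Second, transforming the Gaussian limit by the random matrix $\bH_0\sw^{-1}$ and appealing to rotation invariance is not a complete argument, because $\bH_0$ is built from the same data as $\bW^T\bE_j$ and an asymptotically orthogonal but data-dependent rotation of a correlated limit need not preserve the law; one needs convergence of the rotation to a deterministic matrix, which is the content of the paper's Lemma \ref{lem_H2} ($\bH_2\to\bQ^{-1}$, hence $\bH_0\to\bQ$) and requires its own nontrivial argument following \cite{bai2020simpler}.
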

    For the same reason, since we do not impose any identifiability conditions for $\bB$, our estimator $\wh \bB_j$ is not centered around $\bB_j$ but rather its rotated version $\wt\bB_j = \bH_0\bB_j$ \citep{Bai-factor-model-03,bai2020simpler}. We emphasize that this rotation does not impede us from testing $\bB_j = \b0$. 
    Specifically, Theorem \ref{thm_B_asn} implies that for any $1\le j\le m$, under the null hypothesis $\bB_j = \b0$, 
    \[
       n \wh \bB_j^T \wh \bB_j / \sigma_{E_j}^2 \overset{d}{\longrightarrow} \chi^2_K,\qquad \textrm{as }n\to\infty.        
    \]
    provided that 
    \begin{equation}\label{cond_A_infer_B}
        \|\bA\|_{\op}^2 \max\left\{ s_n\log (p\vee m), ~ \sqrt{n\log(m)/m}\right\}= o(1).
    \end{equation}
    Since $\sigma_{E_1}^2$ can be consistently estimated as shown in Proposition \ref{prop_sigma_E} of Section \ref{sec_effciency}, this justifies the validity of our testing statistic $\wh R_n^{(1)}$ in (\ref{def_R_hat}) of Section \ref{sec_method_infer_B}.
    In case one is willing to assume additional identifiability conditions on $\bB$, such as those in \cite{Bai-Ng-forecast}, the rotation matrix $\bH_0$ becomes the identity matrix asymptotically \citep{bai2020simpler}. 
    
    In the following, we comment on the conditions in Theorem \ref{thm_B_asn}. 
    To allow a non-diagonal $\se$, the inferential result on $\bB$ requires $\|\se\|_{\i,1}=\cO(1)$, a stronger condition than Assumption \ref{ass_B_Sigma} (c), as well as $\log (m) \sqrt{n} = o(m)$. These conditions are commonly assumed in the analysis of factor models \citep{Bai-factor-model-03,Bai-Ng-forecast,bai2020simpler}, and can be dropped if $\se$ is proportional to the identity matrix, as remarked in \citet[Theorem 6]{Bai-factor-model-03}. 
    Condition (\ref{cond_r_asn}) is needed to ensure that the error
    of estimating $\bepsilon$ by $\wh\bepsilon$ is negligible. For the similar reason, if $\se$ is proportional to the identity matrix, the requirement $\|\bA\|_{\op}^2\sqrt{n\log(m)/m}=o(1)$ can be removed. In general, condition (\ref{cond_r_asn}) holds, for instance, if $\sqrt{n/m} = \cO(s_n\log(p\vee m))$, 
    \begin{equation}\label{eq_condition_B}
          \|\bA\|_{\op}^2 = o\left({1\over s_n\log (p\vee m)}\right),\qquad  \|\bA\|_{\op}^2\|\bA\bB_j\|_2^2 = o\left(
       {1\over n}
    \right).  
    \end{equation}
    We reiterate that for testing the hypothesis $\bB_j = \b0$, the condition $\|\bA\|_{\op}^2\|\bA\bB_j\|_2^2 = o(1/n)$ holds automatically. We refer to Corollary \ref{cor_ASN} for the discussion on the first condition in (\ref{eq_condition_B}).\\

    \begin{remark}[Comparison with \cite{guo2020doubly}]
       As briefly mentioned in the Introduction, \cite{guo2020doubly} consider  the univariate model $y = X^T \btheta + Z^T\bbeta +\epsilon$ and propose  a doubly debiased lasso procedure for making inference on entries of $\btheta$, say $\theta_1$, in the presence of hidden confounders $Z\in\RR^{K}$. Although both their estimator of $\theta_1$ and our estimator of $\Theta_{11}$ are shown to be efficient in the Gauss-Markov sense (i.e. the same asymptotic variance), the analyses are carried under different modelling assumptions. For instance, different from our model, \cite{guo2020doubly} additionally assume $X = \bGamma Z + W'$ with some additive error $W'$ that is independent of $Z$. They also assume all $K$ singular values of the loading matrix $\bGamma$ to be of order $\sqrt{p}$. Consequently, the $L_2$-projection matrix $\bA = (\EE[XX^T])^{-1}\EE[XZ^T]$ satisfies $\|\bA\|_{\op}^2=\cO(1/p)$ and the residual vector $W=Z - \bA^T X$ satisfies $\|\sw\|_{\op} = \cO(1/p)$. By contrast, from Corollary \ref{cor_ASN} and its subsequent remark, our analysis does not necessarily require  $\|\bA\|_{\op}^2=\cO(1/p)$. This could be understood as the benefits of having multivariate responses. On the other hand, we require parts (a) and (b) in Assumption \ref{ass_B_Sigma} and the latter does not hold under the conditions on $X$ and $\bGamma$ in \cite{guo2020doubly}. Finally, due to the multivariate nature of the responses, we are able to conduct inference on $\bB$ to test the existence of hidden confounders, whereas, in the univariate case, \cite{guo2020doubly} does not study such inference problems on $\bbeta$. 
    \end{remark}

\section{Practical considerations and simulation study}\label{sec_prac_and_sim}

In this section we first discuss two practical considerations of our procedure: selection of the number of hidden variables  $K$ in Section \ref{sec_K} and selection of tuning parameters in Section \ref{sec_cv}. We then evaluate the finite sample performance of the proposed inferential method via synthetic datasets in Section \ref{sec_sim}.

\subsection{Selection of the number of hidden variables}\label{sec_K}

    Recall that $\bepsilon = \bW\bB + \bE$ follows a factor model with $K$ latent factors  (corresponding to $\bW$) if $\bepsilon$ were observed. We propose to select $K$ based on the estimate $\wh\bepsilon$ in (\ref{def_est_epsilon}) of $\bepsilon$. Specifically, we adopt the criterion in \cite{bing2020adaptive} that selects $K$ by 
    \begin{equation}\label{select_K}
	\wh K = \argmax_{j\in \{1,2,\ldots, \bar K\}}   d_j / d_{j+1},
	\end{equation}
	where $d_1 \ge d_2 \ge \cdots$ are the singular values of $\wh\bepsilon/\sqrt{nm}$ in (\ref{def_svd_epsilon}) and $\bar K$ is a pre-specified number, for example, $\bar K =\floor{(n\wedge m)/2}$ \citep{lam2012} with $\floor{x}$ standing for the largest integer that is no greater than $x$. Criterion (\ref{select_K}) is first proposed by \cite{lam2012} for selecting the number of latent factors in factor models. It is related with the ``elbow'' approach of selecting the number of components in PCA. In our current context, both theoretical and empirical justifications of the criterion (\ref{select_K}) have been provided in \cite{bing2020adaptive}. On the other hand, there exist other methods of selecting $K$ for which we refer to \cite{Lee2017,wang2017,bing2020adaptive}.
	
\subsection{Selection of tuning parameters}\label{sec_cv}
    We describe how to practically select the tuning parameters in our procedure of making inference of $\Theta_{11}$.
    
    The estimation of $\bX\bF$ in (\ref{def_est_F_j}) requires the selection of $\lbdIj$ and $\lbdj$ for $j\in [m]$. Their theoretical orders are stated in Theorem \ref{thm_pred} of Appendix \ref{app_theory_fit}. In practice, one could choose them over a two-way grid of $\lbdIj$ and $\lbdj$ via cross-validation (CV) by minimizing the mean squared prediction error on a validation set (for instance, by using the $k$-fold CV). When the dimensions $p$ and $m$ are large, such two-way grid search might be computationally burdensome. \citet[Appendix E.3]{bing2020adaptive} proposed a faster way of selecting $\lbdIj$ and $\lbdj$. For the reader's convenience, we restate it here. Pick any $j\in[m]$. We start with a grid $\mathcal{G}$ of $\lbdj$ and for each $\lbdj \in \mathcal{G}$, we set 
        \[
            \lbdIj(\lbdj) = c_0 \sqrt{\max_{1\le j\le p} M_{jj}(\lbdj)}\left(\sqrt{m \over n} + \sqrt{2\log p \over n}\right)
        \]
        where 
        $\bM(\lbdj) = n^{-1} \bX^T Q^2_{\lbdj} \bX$ with 
        $Q_{\lbdj} = \bI_n - \bX(\bX^T\bX + n\lbdj\bI_p)^{-1}\bX^T$
        and 
        $c_0>0$ is some universal constant (our simulation reveals good performance for $c_0 = 1$). This choice of $\lbdIj(\lbdj)$ is based on its theoretical order in Theorem \ref{thm_pred} of Appendix \ref{app_theory_fit}. We then use $5$-fold cross validation to select $\lambda_2^{(j)*}$ which gives the smallest mean squared error of the predicted values. Fixing $\lambda_2^{(j)*}$, the optimization problem in (\ref{crit_Theta}) becomes a group-lasso problem and we propose to select $\lbdIj$ via $5$-fold cross validation (for instance, the \textsf{cv.glmnet} package in \textsc{R}). 
        
    The initial estimator $\wh\bTheta_1$ of $\bTheta_1$ in (\ref{def_Thetaj_hat}) requires another tuning parameter $\lambda_3$. As (\ref{def_Thetaj_hat}) solves a standard lasso problem, we propose to select $\lambda_3$ via $5$-fold cross validation implemented in the \textsf{cv.glmnet} package in \textsc{R}. 
    
    Finally, recall that we use the node-wise lasso procedure in (\ref{def_est_omega}) for estimating the first column of the precision matrix $\bOmega$. We propose to select $\wt \lambda$ in (\ref{def_est_omega}) by 5-fold CV as well.

\subsection{Simulations}\label{sec_sim}

    In this section we conduct extensive simulations to verify the performance of our developed inferential tools for testing $\Theta_{ij} = 0$ and $\bB_j = \b0$.

\paragraph{Data generating mechanism:}
The data generating process is as follows. 
For generating the design matrix, we simulate $\bX_i$ i.i.d. from $N_p(\b0,\Sigma)$ where $\Sigma_{jk} = (-1)^{j+k}\cdot (0.5)^{|j-k|}$ for all $j,k\in [p]$. We simulate $A_{jk}\sim \eta\cdot N(0.5,0.1)$ and $B_{kl}\sim N(0.1,1)$ for $j \in [p]$, $k\in[K]$, $l\in [m]$ where the parameter $\eta$ controls the magnitude of entries of $\bA$. 
To generate $\bTheta$, for given integers $s$ and $s_m$,  
we sample entries of the top  left  $s\times s_m$ submatrix of $\bTheta$ i.i.d. from $N(2,0.1)$ and set all other entries of $\bTheta$ to zero. The number of non-zero rows of $\bTheta$ is set to $s = 3$ while the sparsity of each non-zero row is fixed as $s_m = 10$. Next, we generate i.i.d.  $\bZ_i = \bA^T\bX_i + \bW_i$ with $\bW_i \sim N_K(\b0,3^2\bI_K)$. Finally, we generate i.i.d. $\bY_i = \bTheta^T\bX_i + \bB^T\bZ_i + \bE_i$ with $\bE_i\sim N_p(\b0,\bI_m)$.

Throughout the simulation, we fix  $n = 200$, $K=3$ and consider $p \in \{50, 250\}$, $m\in \{20,50,100\}$ and $\eta \in \{0.2,1\}$. Each setting is repeated 25 times without further specification. 

\paragraph{Procedures under comparison:}  For our proposed procedure, we select tuning parameters in the way we described in Section \ref{sec_cv}. To concentrate on the comparison of inference, we use the true $K$ as input (our simulation reveals that $K$ can be consistently estimated by (\ref{select_K}) in almost all settings). For comparison, we also consider the following approaches.
\begin{itemize}
    \item Desparsified method (DSpar) implemented in the ``hdi'' package in \textsc{R},
    \item Decorrelated Score (DScore) test implemented in the ``ScoreTest''  package\footnote{\url{https://github.com/huijiefeng/ScoreTest}}  in \textsc{R},
    \item Doubly Debiased Lasso (DDL) method proposed by \citet{guo2020doubly}\footnote{\url{https://github.com/zijguo/Doubly-Debiased-Lasso}}.
\end{itemize}



\paragraph{Testing on $\bTheta$:}

We evaluate the performance of conducting hypothesis testing on $\bTheta$ by using all four methods in each combination setting of $p \in \{50, 250\}$, $m\in \{20,50,100\}$ and $\eta \in \{0.2,1\}$. 
To introduce the metrics we use, for each generated $\bTheta$, we let $\cS = \{(i,j): \Theta_{ij}\neq 0\}$ denote the support of $\bTheta$ and $\cS^c$ denote its complement. By fixing the significance level at $\alpha = 0.05$, we compute the the empirical Type I error and the empirical Power for each method, defined as
\beq\nonumber
&\text{Type I error} = \frac{1}{|\cS^c|} \sum_{(i,j)\in \cS^c} 1\left\{
\textrm{Reject the null $H_{0,\Theta_{ij}}$}
\right\}\\\nonumber
&\text{Power} = \frac{1}{|\cS|} \sum_{(i,j)\in \cS}
1\left\{
\textrm{Reject the null $H_{0,\Theta_{ij}}$}
\right\}
\eeq 

Table \ref{table_error} reports the averaged Type I errors and Powers for all four methods in each setting\footnote{Since \cite{guo2020doubly} only provides guarantees of DDL for large $p$, we only compare with DDL in the high-dimensional scenarios. Due to the long running time of DDL, we only report its performance for $m = 20$ and $p = 250$.}. As we can see, when $\eta = 0.2$ so that the magnitude of hidden effects is relatively small, in both low ($p=50$) and high ($p=250$) dimensional settings, the averaged Type I errors of all methods are generally close to the nominal level 0.05, while the proposed method achieves higher Powers. When $\eta = 1.0$ so that the magnitude of hidden effects is relatively large, in the low dimensional setting $p = 50$, the averaged Type I errors of the proposed approach are much lower and closer to the nominal level than all other methods.  
On the other hand, in the high dimensional setting $p = 250$, despite all methods have similar Type I errors,   our proposed approach yields much higher Powers.

\begin{table}[t]
 	\caption{The averaged Type I errors and Powers at significance level 0.05 for the proposed method, DSpar, DScore  and DDL}\label{table_error}
 	\begin{center}
 		\begin{tabular}{ c c c c c c c c c}
 			\hline
 			$p$ &Metric&Method&\multicolumn{3}{c}{$\eta = 0.2$}&\multicolumn{3}{c}{$\eta = 1.0$}  \\
 			&&& $m = 20$ & $m = 50$ & $m = 100$ & $m = 20$ & $m = 50$ & $m = 100$\\
 			\hline
 			50&Type I error&Proposed&0.057&0.072&0.085&0.117&0.102&0.104\\
 			&&DSpar&0.060&0.059&0.064&0.338&0.313&0.282\\
 			&&DScore&0.054&0.060&0.051&0.367&0.361&0.348\\
 			&&DDL&
 			 - & - & - & - & - & - \\
 		\hline	&Power&Proposed&1.000&1.000&1.000&0.929&1.000&1.000\\
 			&&DSpar&0.970&0.866&0.941&0.924&0.957&0.757\\
 			&&DScore&0.982&0.916&0.934&0.908&0.857&0.942\\
 			&&DDL & 
 			- & - & - & - & - & - \\
 			\hline
 			250&Type I error&Proposed&0.051&0.076&0.063&0.089& 0.097 &0.116\\
 			&&DSpar&0.058&0.059&0.054&0.110&0.114&0.111\\
 			&&DScore&0.045&0.046&0.052&0.105&0.104&0.109\\
 			&&DDL&0.098&-&-&0.114&-&-\\
 		\hline	&Power&Proposed&1.000&1.000&1.000&0.998&1.000&0.998\\
 			&&DSpar&0.934&0.88&0.954&0.580&0.602&0.729\\
 			&&DScore&0.913&0.856&0.883&0.663&0.683&0.702\\
 			&&DDL&0.893&-&-&0.691&-&-\\
 			\hline
		\end{tabular}
 	\end{center}	
 \end{table}

We further demonstrate how the empirical Type I error and Power of different methods change as the signal strength varies. To this end, we generate $\bTheta$ by setting its non-zero entries to $r$ with $r$ varying within $\{0.05,0.07,0.1,0.2,0.3,0.5,1,1.5,2.0\}$. We consider $p = 50$, $m = 20$ and $\eta\in\{0.2,1\}$. For each choice of $r$ and $\eta$, we repeat generating the data and computing Type I errors and Powers 25 times. Figure \ref{fig:power} depicts how the averaged Type I errors and Powers change as $r$ increases for different methods. 
When $\eta = 0.2$, the averaged Type I errors of all methods are similar and close to 0.05 but our proposed approach has much higher Powers than the other two methods over the whole range of the signal strength. When $\eta = 1.0$, it is clear that both DSpar and DScore fail to control the Type I errors whereas our proposed method not only controls the Type I error but also has much higher Powers as the signal strength increases. Figure \ref{fig:power} together with the results from Table \ref{table_error} suggests the superiority of our proposed approach over the compared methods.   

\begin{figure}[H]
    \centering
    \includegraphics[width = 16cm, height = 10cm]{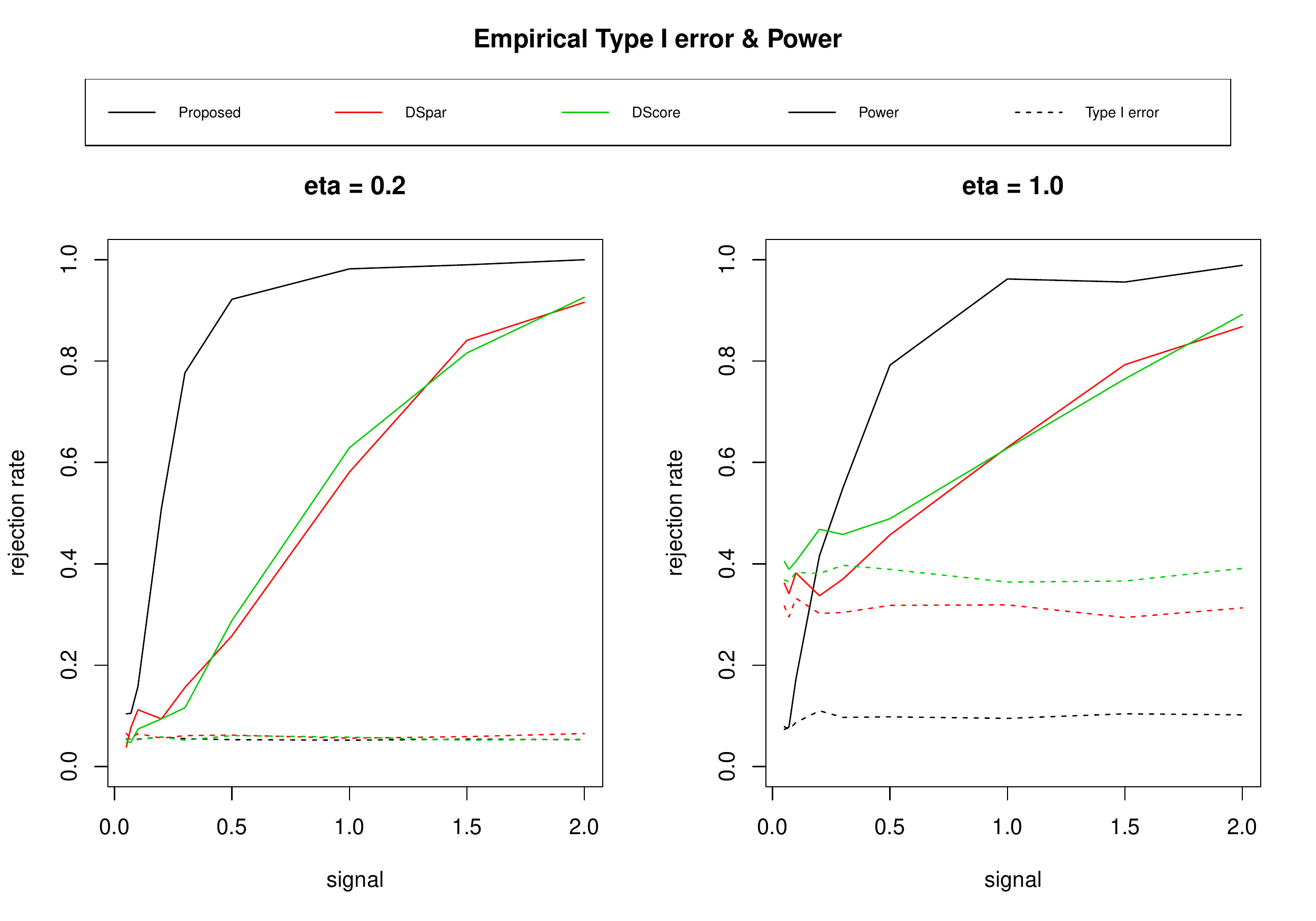}
    \caption{The average Type I errors and Powers with varying magnitude of the nonzero coefficients of $\bTheta$. The black, red and green lines represent the proposed approach, DSpar and DScore, respectively. The solid lines depict the averaged Powers while the dashed lines represent the averaged Type I errors.}
    \label{fig:power}
\end{figure}

\paragraph{Testing on $\bB$:}
We proceed to evaluate the empirical performance of our proposed method for testing the hypothesis $H_{0,B_j}: \bB_j = \b0$ versus $H_{1,B_j}: \bB_j \ne  \b0$. We adopt the same data generating process as described in the beginning except that we set $\bB_j = \b0$ for each $j \in \{1,\dotso,b_m\}$. Here $b_m$ controls the number of zero columns of $\bB$ and is chosen from $\{5, 10\}$. We also consider  $p = 50$, $\eta = 0.1$ and vary $m$ within $\{20, 50, 100\}$. 
Similarly, we calculate the empirical Type I error and the empirical Power as
\beq
&\text{Type I error} = \frac{1}{b_m} \sum_{j=1}^{b_m}1\left\{\textrm{Reject the null $H_{0,B_{j}}$}\right\},\\
&\text{Power} = \frac{1}{(m - b_m)} \sum_{j= b_m+1}^m 1\left\{\textrm{Reject the null $H_{0,B_{j}}$}\right\}.
\eeq
We repeat 100 times for each scenario. Table \ref{table_error_B} contains the averaged Type I errors and Powers of our procedure in all settings. The Type I errors are not far from the nominal level 0.05 and get closer to it as $m$ increases while the Powers are close to one in all settings. These findings are in line of our Theorem \ref{thm_B_asn}.

\begin{table}[t]
 	\caption{The averaged Type I errors and Powers at significance level 0.05 for the proposed method of testing $H_{0,B_j}: \bB_j = \b0$ versus $H_{1,B_j}: \bB_j\ne \b0$.}\label{table_error_B}
 	\begin{center}
 		\begin{tabular}{ c c c c c c c}
 			\hline
 			Metric &\multicolumn{3}{c}{$b_m = 5$}&\multicolumn{3}{c}{$b_m = 10$}\\
 			& $m = 20$ & $m = 50$& $m = 100$& $m = 20$ & $m = 50$& $m = 100$\\
 			\hline
 			Type I error&0.072&0.064&0.062&0.063&0.041&0.058\\
 		    \hline	
 		   Power&0.989&1.000&0.998&1.000&0.988&0.999\\
 			\hline
		\end{tabular}
 	\end{center}	
 \end{table}

\section{Analysis on the stock mouse dataset}\label{sec_real_data}

In this section, we validate our method on the heterogenous stock mouse dataset \citep{valdar2006genome} from Wellcome Trust Centre for Human Genetics. 
This dataset contains $129$ continuous phenotypes that can be categorized into six categories: Behavior, Diabetes, Ashma, Immunology, Haemotology and Biochemistry. The dataset also contains around $10,000$ 
Single Nucleotide Polymorphisms (SNPs) for each mouse. One primary interest is to discover significant associations between the SNPs and the phenotypes.  Since both phenotypes and genotypes are measured by different experimenters at different time points and the mice are from different generations and families \citep{valdar2006genome}, we expect the existence of unknown hidden effects, such as batch effects. We thus deploy  our proposed method for finding significant entries of $\bTheta$ by adjusting the potential hidden effects. 

To preprocess the data, since the measured phenotypes and SNPs vary for different groups of mice,
we only consider the mice that should have all phenotypes measured. 
Meanwhile, we only keep the SNPs that have been measured by these retained mice. Finally, since there exists  different levels of missingness among the phenotypes, we remove those phenotypes with percentage of missing values greater than $5\%$ and impute the missing values of the remaining phenotypes by using the average of their $20$-nearest neighbors. After the data preprocessing,  we obtain a data set that has  $n = 810$ mice, $p = 10,346$ measured SNPs and $m = 104$ recorded phenotypes.

To deploy our method, we first use the procedure in Section \ref{sec_K} to find $\wh K=28$ for this dataset and then apply our procedure in (\ref{sec_est_Theta}) to test the significance of each entry of $\bTheta$. The tuning parameters are chosen in the way as described in Section \ref{sec_cv}. To account for multiple testing problem, 
we apply the Bonferroni correction at 0.05 significant level. 
For comparison, we also run both DSpar and DScore (see,  Section \ref{sec_sim}) with the same correction. 
To interpret and validate the discovered significant associations, we map the SNPs to either annotated genes or intergenic regions.

On the one hand, our approach and the other two methods detect some common meaningful signals. For example, in Diabetes related phenotypes, such as Insulin, both our method and DSpar find the SNP \textit{rs4213255} to be significant. 
This SNP is mapped to gene \textit{repro33} which has been shown to be associated with endocrine and exocrine glands \citep{goldfine1997endocrine} that directly mediates insulin level. Another SNP that is found by both our method 
and DSpar 
to be significant for an immunology phenotype 
is \textit{rs13476136} whose corresponding gene \textit{Tli1} (T lymphoma induced 1) has been demonstrated to directly affect immunology \citep{wielowieyski1999tli1, blake2003mgd, smith2019mouse, krupke2017mouse}. 
Furthermore, significance of the SNP \textit{rs3713052} is discovered for a Haemotology related phenotype (Haem.LICabs) by all three methods, and this SNP is mapped into the intergenic region between the gene \textit{Gm39049} and the gene \textit{Tenm4}. 
Although the function of this intergenic region is unclear to us, the \textit{Tenm4} gene has been found to associate with the hematopoietic system \citep{blake2003mgd, smith2019mouse, krupke2017mouse}.

On the other hand, there exist many meaningful  associations that are only identified to be significant by our method. For instance, the SNP \textit{rs6290322}
is only found to be significant by our method for a Diabetes related phenotype (Glucose). It has been shown that the mapped gene \textit{gro57} of this SNP is associated with several Diabetic phenotypes
\citep{blake2003mgd, smith2019mouse, krupke2017mouse}. Our method also finds the
SNP \textit{rs3141314} to be significant 
for a Haemotology phenotype (Haem.PLT, platelet count). This SNP is mapped to gene \textit{hlb258} which is known to be functional related with the blood phenotypes \citep{blake2003mgd, smith2019mouse, krupke2017mouse}. In addition, several SNPs such as $rs3711203$ and $rs3725230$ are only found by our method to be significant 
for multiple immunological phenotypes. These SNPs are all mapped to gene \textit{slck} (slick hair gene) which directly effects the integumentary system \citep{blake2003mgd, smith2019mouse, krupke2017mouse}. The integumentary system including the skin and corresponding appendages acts as a physical barrier between outside environment and internal environment hence plays an important role in the immune system.

Overall, our method finds more meaningful and significant SNPs than the other two methods. Specifically, for each method, we record the numbers of significant SNPs for each phenotype and report the summary statistics of these numbers in Table \ref{real data}. We also run our testing procedure in Section \ref{sec_method_infer_B} for $\bB$ and all the test statistics are very large ($>427$ for all phenotypes), suggesting the existence of strong hidden effects. Although DSpar and Dscore are able to detect a few signals that are sufficiently large without adjusting the hidden effects, to find more weak/moderate yet meaningful signals, our proposed approach appears to be more effective. 

\begin{table}[ht]
 	\caption{Summary statistics of the numbers of significant SNPs over all phenotypes by using different methods.}\label{real data}
 	\begin{center}
 		\begin{tabular}{c c c c c}
 			\hline
 			Method& Min & Mean & Median & Max\\
 			\hline
 			Ours & 7 & 21.77 & 21 & 43\\
 		    \hline	
 		   DSpar & 0 & 1.77 & 0 & 39\\
 			\hline
 		   DScore & 0 & 0.09& 0  & 5\\
 			\hline
		\end{tabular}
 	\end{center}	
 \end{table}

    \newpage

    {\small 
    \setlength{\bibsep}{0.85pt}{
    \bibliographystyle{plainnat}
    \bibliography{ref}
    }
    }

    \newpage

    
    \appendix 
    
    \section{Column-wise $\ell_2$ convergence rates of $X\wh F-XF$}\label{app_theory_fit}
   
    We first provide theoretical guarantees of $\bX\wh\bF-\bX\bF$ under the fixed design matrix $\bX$ as the analysis is still valid for random design by first conditioning on $\bX$. 
	Recall from model (\ref{model}) that $W$ is uncorrelated with $X$. To simplify the analysis under the fixed design scenario, we assume the independence between $X$ and $W$ in order to derive the deviation bounds of their cross product. We expect that the same theoretical guarantees hold under $\Cov(X,W)=0$ by using more complicated arguments.

    Recall that $\wh\bF = (\wh\bF_1, \ldots, \wh\bF_m)$ with $\wh\bF_j$ obtained from solving (\ref{def_est_F_j}) for $1\le j\le m$. The following lemma characterizes the solution $\wh\bF_j = \wh\btheta^{(j)} + \wh\bdelta^{(j)}$. It is proved in \cite{chernozhukov2017}.

    \begin{lemma}\label{lem_solution}
        For any $1\le j\le m$, let $(\wh \btheta^{(j)}, \wh \bdelta^{(j)})$ be any solution of (\ref{def_est_F_j}), and denote
		\begin{equation}\label{def_P_Q_lbd2}
		P_{\lbdj} = \bX\left(\bX^T\bX + n \lbdj \bI_p\right)^{-1}\bX^T,\qquad Q_{\lbdj} = \bI_n - P_{\lbdj}.
    	\end{equation}
		for any $\lbdj\ge 0$ such that $P_{\lbdj}$ exists. 
		Then $\wh \btheta^{(j)}$ is the solution of the following problem
		\begin{equation}\label{crit_Theta}
		\wh  \btheta^{(j)} = \arg\min_{\btheta\in \RR^p} {1\over n}\left\|Q_{\lbdj}^{1/2}(\bY_j - \bX\btheta)
		\right\|_2^2 + \lbdIj \|\btheta\|_{1},
		\end{equation}
		and $\wh \bdelta^{(j)}=(\bX^T\bX + n \lbdj \bI_p)^{-1}\bX^T(\bY_j - \bX\wh\btheta^{(j)})$, where $Q_{\lbdj}^{1/2}$ is the principal matrix square root of $Q_{\lambda_2}$. Moreover, we have
		\begin{equation}\label{fit}
		\bX\wh \bF_j = \bX\left(\wh \btheta^{(j)} + \wh \bdelta^{(j)}\right) = P_{\lbdj}\bY_j + Q_{\lbdj}\bX\wh \btheta^{(j)}. 
		\end{equation}
	\end{lemma}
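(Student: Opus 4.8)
The plan is to profile out the dense component $\bdelta$ so as to reduce the lava program (\ref{def_est_F_j}) to the weighted Lasso (\ref{crit_Theta}). Since the objective in (\ref{def_est_F_j}) is jointly convex in $(\btheta,\bdelta)$, I would minimize over $\bdelta$ first and then over $\btheta$. Fixing $\btheta$ and writing $\br = \bY_j - \bX\btheta$, the inner problem $\min_{\bdelta\in\RR^p} n^{-1}\|\br - \bX\bdelta\|_2^2 + \lbdj\|\bdelta\|_2^2$ is a ridge regression; its first-order condition $(n^{-1}\bX^T\bX + \lbdj\bI_p)\bdelta = n^{-1}\bX^T\br$ yields the closed form $\wh\bdelta(\btheta) = (\bX^T\bX + n\lbdj\bI_p)^{-1}\bX^T\br$. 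Evaluated at the eventual minimizer $\wh\btheta^{(j)}$ this is exactly the stated expression for $\wh\bdelta^{(j)}$, and $\bX\wh\bdelta(\btheta) = P_{\lbdj}\br$.

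Next I would substitute $\wh\bdelta(\btheta)$ back into the objective. The fit term becomes $n^{-1}\|\br - P_{\lbdj}\br\|_2^2 = n^{-1}\br^T Q_{\lbdj}^2\br$ and the ridge penalty becomes $\lbdj\,\br^T\bX\bM^{-2}\bX^T\br$, where $\bM := \bX^T\bX + n\lbdj\bI_p$. The crux of the argument is the algebraic identity
\[
Q_{\lbdj}P_{\lbdj} = n\lbdj\,\bX\bM^{-2}\bX^T ,
\]
which I would establish from the resolvent relation $\bX^T\bX = \bM - n\lbdj\bI_p$: expanding $P_{\lbdj}^2 = \bX\bM^{-1}(\bX^T\bX)\bM^{-1}\bX^T = P_{\lbdj} - n\lbdj\,\bX\bM^{-2}\bX^T$ gives $Q_{\lbdj}P_{\lbdj} = P_{\lbdj} - P_{\lbdj}^2 = n\lbdj\,\bX\bM^{-2}\bX^T$. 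Together with $Q_{\lbdj} + P_{\lbdj} = \bI_n$, this identity collapses $Q_{\lbdj}^2 + Q_{\lbdj}P_{\lbdj} = Q_{\lbdj}$, so the two $n$-scaled terms sum to $\br^T Q_{\lbdj}\br$ and the profiled objective equals $n^{-1}\br^T Q_{\lbdj}\br + \lbdIj\|\btheta\|_1$. Because $Q_{\lbdj}$ is symmetric and positive semidefinite (its eigenvalues are $1 - \sigma^2/(\sigma^2 + n\lbdj)\in(0,1]$ across the singular values $\sigma$ of $\bX$), its principal square root exists and $\br^T Q_{\lbdj}\br = \|Q_{\lbdj}^{1/2}\br\|_2^2$, so minimizing over $\btheta$ reproduces exactly (\ref{crit_Theta}) and identifies $\wh\btheta^{(j)}$.

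Finally, I would deduce the fit formula (\ref{fit}) by combining the two components, $\bX\wh\bF_j = \bX\wh\btheta^{(j)} + \bX\wh\bdelta^{(j)} = \bX\wh\btheta^{(j)} + P_{\lbdj}(\bY_j - \bX\wh\btheta^{(j)}) = P_{\lbdj}\bY_j + Q_{\lbdj}\bX\wh\btheta^{(j)}$. I expect the only genuinely delicate step to be the quadratic-form simplification in the second paragraph, where one must recognize that the fit and ridge penalties combine to replace $Q_{\lbdj}^2$ by $Q_{\lbdj}$; the remaining steps are a routine ridge computation and convexity bookkeeping.
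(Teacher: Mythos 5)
Your proof is correct and is essentially the standard profiling argument that the paper itself defers to \cite{chernozhukov2017} for: ridge out $\bdelta$ in closed form, then use the identity $Q_{\lbdj}P_{\lbdj} = n\lbdj\,\bX(\bX^T\bX+n\lbdj\bI_p)^{-2}\bX^T$ to collapse the fit and ridge terms into $\br^TQ_{\lbdj}\br$, all of which you verify correctly. The only tiny imprecision is the parenthetical claim that the eigenvalues of $Q_{\lbdj}$ lie in $(0,1]$ — at $\lbdj=0$ they can equal $0$ — but $Q_{\lbdj}$ is positive semidefinite in all cases covered by the lemma, so the principal square root exists and nothing in the argument is affected.
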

    To analyze $\wh\bF_j$, we first introduce the Restricted Eigenvalue (RE) \citep{bickel2009}. For some given constant $\alpha \ge 1$ and integer $1\le s\le p$, define
		\begin{equation}\label{RE_X}
		\kappa(s, \alpha) = \min_{S\subseteq [p],|S|\le s}~\min_{\Delta \in \cC(S, \alpha)}{\|\bX \Delta\|_2\over \sqrt{n}\|\Delta_{S\cdot}\|_2},
		\end{equation}
	where  $\cC(S, \alpha) := \{\Delta\in \RR^{p} \setminus \{\b0\}: \alpha\|\Delta_{S}\|_{1}\ge \|\Delta_{S^c}\|_{1}\}$. For $1\le j\le m$ and the $j$th response regression, define
	\begin{equation}\label{def_V_eps}
	 \sigma_j^2 = \g_w^2 \bB_j^T\Sigma_W \bB_j + \g_e^2 \sigma_{E_j}^2
	\end{equation}
	where $\g_w$ and $\g_e$ are the sub-Gaussian constants defined in Assumption \ref{ass_error} and $\sigma_{E_j}^2 = [\se]_{jj}$. Write $M^{(j)} = n^{-1}\bX^T Q_{\lbdj}^2\bX$ with $Q_{\lbdj}$ defined in (\ref{def_P_Q_lbd2}). Recall that $\wh\Sigma = n^{-1}\bX^T\bX$ and its eigenvalue are $\Lambda_1\ge \Lambda_2 \ge \cdots \ge \Lambda_q>0$ with $q = \rank(\bX)$. Further recall $s_n$ is defined in (\ref{def_space_Theta}). 
	The following theorem provides the $\ell_2$ convergence rate of $\bX\wh\bF_j - \bX\bF_j$ uniformly over $1\le j\le m$. 
	
	\begin{theorem}\label{thm_pred}
		Under Assumptions \ref{ass_error}, assume $\kappa(s_n,4)>0$ and choose 
		\begin{equation}\label{rate_lbd1}
		\lambda_{1}^{(j)} = 4\sigma_j\sqrt{6\max_{1\le i\le p}M_{ii}^{(j)}}\sqrt{\log (p\vee m) \over n}
		\end{equation}
		and any $\lambda_2^{(j)} \ge 0$
		in (\ref{def_est_F_j}) such that $P_{\lambda_2^{(j)}}$ exists. 
		With probability $1-2(p\vee m)^{-1} - m^{-1}$, 
		\begin{align*}
		{1\over n}\left\|\bX \wh \bF_j - \bX \bF_j\right\|_2^2 
		&\lesssim \inf_{\substack{(\btheta_0, \bdelta_0):\\
				\btheta_0 + \bdelta_0 = \bF_j}} \Bigl[Rem_{1,j} + Rem_{2,j}(\bdelta_0) +  Rem_{3,j}(\btheta_0)\Bigr]
		\end{align*}
		holds uniformly over $1\le j\le m$, 
		where 
		\begin{align*}
		&Rem_{1,j} = \left(\tr\left(P_{\lambda_2^{(j)}}^2\right) + \left\|P_{\lambda_2^{(j)}}^2\right\|_{\op}\log m  \right){\sigma_j^2 \over n}\\
		&Rem_{2,j}(\bdelta_0) = \lambda_2^{(j)}~ \bdelta_0^T \wh \Sigma(\wh \Sigma + \lambda_2^{(j)} \bI_p)^{-1}\bdelta_0\\
		&Rem_{3,j}(\btheta_0) =  
		{\lbdj(\Lambda_1 + \lambda_2^{(j)}) \over (\Lambda_q+\lambda_2^{(j)})^2}\left(\max_{1\le i\le p}\wh\Sigma_{ii}\right) {s_0\log (p\vee m) \over \kappa^2(s_n,4)}{\sigma_j^2\over n}.
		\end{align*}
	\end{theorem}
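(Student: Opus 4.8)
The plan is to combine the characterization of the lava fit in Lemma \ref{lem_solution} with a three-way decomposition of the prediction error, match each piece to one of the remainder terms, and then add union bounds over $j\in[m]$ to obtain the uniform statement. Writing $\bF_j = \btheta_0 + \bdelta_0$ for an arbitrary admissible decomposition and $\bY_j = \bX\bF_j + \bepsilon_j$ with $\bepsilon_j = \bW\bB_j + \bE_j$, I would substitute the fitted value $\bX\wh\bF_j = P_{\lbdj}\bY_j + Q_{\lbdj}\bX\wh\btheta^{(j)}$ from (\ref{fit}) and use $P_{\lbdj}+Q_{\lbdj}=\bI_n$ to obtain
\[
  \bX\wh\bF_j - \bX\bF_j = P_{\lbdj}\bepsilon_j + Q_{\lbdj}\bX\bigl(\wh\btheta^{(j)} - \btheta_0\bigr) - Q_{\lbdj}\bX\bdelta_0.
\]
Applying $\|a+b+c\|_2^2 \le 3(\|a\|_2^2+\|b\|_2^2+\|c\|_2^2)$ reduces the task to bounding the three summands, which I expect to produce $Rem_{1,j}$, $Rem_{3,j}$ and $Rem_{2,j}$ respectively.

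The two outer terms are the more routine ones. For the dense-bias term I would use the spectral identity $n^{-1}\bX^TQ_{\lbdj}\bX = \lbdj\,\wh\Sigma(\wh\Sigma+\lbdj\bI_p)^{-1}$ (read off in the eigenbasis of $\wh\Sigma$, where $Q_{\lbdj}$ acts as $\lbdj/(\mu+\lbdj)$ on the eigenvalue $\mu$), together with $Q_{\lbdj}^2\preceq Q_{\lbdj}$, to get the deterministic bound $n^{-1}\|Q_{\lbdj}\bX\bdelta_0\|_2^2 \le n^{-1}\|Q_{\lbdj}^{1/2}\bX\bdelta_0\|_2^2 = Rem_{2,j}(\bdelta_0)$. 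For the noise-projection term, $n^{-1}\|P_{\lbdj}\bepsilon_j\|_2^2 = n^{-1}\bepsilon_j^TP_{\lbdj}^2\bepsilon_j$ is a quadratic form in the sub-Gaussian vector $\bepsilon_j$, so a Hanson--Wright bound centered at $\EE[\bepsilon_j^TP_{\lbdj}^2\bepsilon_j]\lesssim \sigma_j^2\tr(P_{\lbdj}^2)$ with deviation governed by $\|P_{\lbdj}^2\|_F$ and $\|P_{\lbdj}^2\|_{\op}$ applies; taking the deviation at scale $\sqrt{\log m}$ and using $\|P_{\lbdj}^2\|_F^2\le \|P_{\lbdj}^2\|_{\op}\tr(P_{\lbdj}^2)$ with AM--GM collapses the fluctuation into $\tr(P_{\lbdj}^2)+\|P_{\lbdj}^2\|_{\op}\log m$, i.e.\ $Rem_{1,j}$, after a union bound over $j\in[m]$ costing the $m^{-1}$ in the stated probability.

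The central term is the lasso estimation error $n^{-1}\|Q_{\lbdj}\bX(\wh\btheta^{(j)}-\btheta_0)\|_2^2$, for which I would run the oracle analysis of the transformed lasso (\ref{crit_Theta}), viewing it as a lasso with design $Q_{\lbdj}^{1/2}\bX$ and response $Q_{\lbdj}^{1/2}\bY_j$. Starting from the basic inequality, the cross term $2n^{-1}(\bX\bdelta_0+\bepsilon_j)^TQ_{\lbdj}\bX\Delta$, with $\Delta=\wh\btheta^{(j)}-\btheta_0$, splits into a noise part bounded by $n^{-1}\|\bX^TQ_{\lbdj}\bepsilon_j\|_\infty\|\Delta\|_1$ and a dense part $2n^{-1}(Q_{\lbdj}^{1/2}\bX\bdelta_0)^T(Q_{\lbdj}^{1/2}\bX\Delta)$. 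On the event that the score $n^{-1}\|\bX^TQ_{\lbdj}\bepsilon_j\|_\infty$ is at most $\lbdIj/4$ --- which holds uniformly in $j$ since each coordinate $n^{-1}(\bX^TQ_{\lbdj}\bepsilon_j)_i$ is sub-Gaussian with parameter $n^{-1/2}\sigma_j\sqrt{M^{(j)}_{ii}}$, so the choice (\ref{rate_lbd1}) controls the maximum over the $pm$ pairs $(i,j)$ up to probability $2(p\vee m)^{-1}$ --- the noise part is absorbed in the usual way, while Cauchy--Schwarz on the dense part peels off exactly $2\,Rem_{2,j}(\bdelta_0)$ plus a multiple of $n^{-1}\|Q_{\lbdj}^{1/2}\bX\Delta\|_2^2$ that is moved to the left-hand side. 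The resulting inequality places $\Delta$ (up to $Rem_{2,j}$ slack) into the cone $\cC(S,4)$ with $S=\supp(\btheta_0)$, so the restricted eigenvalue condition applies. The key step is transferring the RE from $\bX$ to $Q_{\lbdj}^{1/2}\bX$: since $n^{-1}\bX^TQ_{\lbdj}\bX\succeq \tfrac{\lbdj}{\Lambda_1+\lbdj}\,\wh\Sigma$, the effective RE constant is $\sqrt{\lbdj/(\Lambda_1+\lbdj)}\,\kappa(s_n,4)$, costing the factor $(\Lambda_1+\lbdj)/\lbdj$; combining this with $\max_i M^{(j)}_{ii}\le (\lbdj/(\Lambda_q+\lbdj))^2\max_i\wh\Sigma_{ii}$ inside $\lbdIj$ reproduces exactly the factor $\lbdj(\Lambda_1+\lbdj)/(\Lambda_q+\lbdj)^2$ and hence $Rem_{3,j}$.

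I expect the main obstacle to be twofold. First, the interaction between the dense signal $\bX\bdelta_0$ and the sparse estimation error must be controlled without destroying the cone membership underlying the RE argument; the Cauchy--Schwarz device that peels off exactly $Rem_{2,j}(\bdelta_0)$ while absorbing the remainder into the quadratic loss is what keeps this clean, and it explains why $Rem_{2,j}$ surfaces both through the dense-bias term and through the lasso term. Second, every concentration step (the score maximum and the Hanson--Wright bound) must be made uniform over the $m$ responses; because the columns $\bepsilon_j$ share the design $\bX$ and are correlated through a non-diagonal $\se$, I would rely on a plain union bound rather than any independence across $j$, which is what yields the $\log(p\vee m)$ and $\log m$ factors in the remainders and the final probability $1-2(p\vee m)^{-1}-m^{-1}$.
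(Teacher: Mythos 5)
Your proposal is correct and follows essentially the same route as the paper: the paper's proof simply invokes the lava oracle-inequality argument of Theorem 4 in \cite{bing2020adaptive} and only adds the new ingredient, namely the event $\{|\bX_i^T Q_{\lambda_2^{(j)}}\bepsilon_j|\le n\lambda_1^{(j)}/4\ \forall i,j\}$ controlled by sub-Gaussian tails with $t=\sqrt{6\log(p\vee m)}$ and a union bound over the $pm$ pairs, which is exactly the score event you construct with the same choice of $\lambda_1^{(j)}$. Your reconstruction of the three-term decomposition, the quadratic-form bound via Lemma \ref{lem_quad} at scale $\log m$, the spectral identity giving $Rem_{2,j}$, and the RE transfer yielding the factor $\lambda_2^{(j)}(\Lambda_1+\lambda_2^{(j)})/(\Lambda_q+\lambda_2^{(j)})^2$ matches the cited argument step for step.
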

	\begin{proof}
    	Theorem \ref{thm_pred} can be proved by using the line of arguments in the proof of Theorem 4 in \cite{bing2020adaptive} except for working on the following event 
	    \beq\label{def_event_lbd} 
	        \cE := \bigcap_{i=1}^p\bigcap_{j=1}^m\left\{
	            \left|
	             \bX_i^T Q_{\lbdj} \bepsilon_j
	            \right| \le {n\over 4}\lbdIj
	        \right\}
	    \eeq 
	    with $\lbdIj$ defined in (\ref{rate_lbd1}). To establish $\PP(\cE)$, pick any $1\le i\le p$ and $1\le j\le m$. 
		We first note that, by the independence of  $\bepsilon_{tj}$ for $1\le t\le n$, 
		$\bepsilon_j^TQ_{\lbdj}\bX_i$ is sub-Gaussian with sub-Gaussian parameter
		\[
		    \sigma_j\sqrt{\bX_i^TQ_{\lbdj}^2 \bX_i} = \sigma_j\sqrt{nM_{ii}^{(j)}}.
		\]
		Thus, the basic tail inequality of sub-Gaussian random variable yields  
		\[
		\PP\left\{
		\left|\bX_i^T Q_{\lbdj} \bepsilon_j\right| > t\sigma_j\sqrt{nM_{ii}^{(j)}}
		\right\}\le 2e^{-t^2/2},\quad \text{for all }t\ge0.
		\]
		Choose $t = \sqrt{6\log(p\vee m)}$ and take the union bounds over $1\le i\le p$ and $1\le j\le m$ to obtain 
		$\PP(\cE) \ge 1 - 2(p\wedge m)^{-1}.$
	\end{proof}
    
     We remark that Theorem \ref{thm_pred} in particular holds for the true $\btheta_j = \bTheta_j$ and $\bdelta_j = \bA\bB_j$, for $1\le j\le m$, whenever they are identifiable. 
     
    

    \section{Main proofs}
    
    \subsection{Proof of Theorem \ref{thm_ident}: identifiability}
        From model (\ref{model_linear}) and noting that $\Cov(X, \epsilon) = \b0$, $\bTheta + \bA\bB$ can be identified from $[\Cov(X)]^{-1}\Cov(X,Y)$, and so is $\seps$. Let $\bU_K\in\RR^{m\times K}$ denote the first $K$ eigenvectors of $\seps$. An application of the Davis Kahan Theorem yields
        \[
            \|\bU_K\bU_K^T - P_B\|_{\op} \le {\sqrt{2}\|\se\|_{\op} \over \lambda_K(\bB^T\sw\bB)} = o(1)
        \]
        under condition (\ref{ident_conds}). Thus, $P_B^{\perp}$ is recovered asymptotically and so is $\bTheta P_B^{\perp}  = (\bTheta + \bA\bB)P_B^{\perp}$. Finally, for each $1\le i\le p$ and $1\le j\le m$, since under condition (\ref{cond_ident_Theta}),
        \beq\label{bd_bias}
            |\bTheta_{i\cdot}^TP_B\be_j| =~ & \left|\bTheta_{i\cdot}^T\bB^T\sw^{1/2}\left(\sw^{1/2}\bB\bB^T\sw^{1/2}\right)^{-1}\sw^{1/2}\bB\be_j\right|\\
            \leq ~ &\norm{\bTheta_{i\cdot}}_1\left\|\bB^T\sw^{1/2}\left(\sw^{1/2}\bB\bB^T\sw^{1/2}\right)^{-1}\right\|_{\i,2}\left\|\sw^{1/2}\bB\be_j\right\|_2\\
            \leq ~ &\norm{\bTheta_{i\cdot}}_1\max_{1\leq \ell \leq m}\norm{\sw^{1/2}\bB_\ell}_2[\lambda_{K}(\bB^T \sw \bB)]^{-1}\norm{\sw^{1/2}\bB_j}_2\\
            =~&\cO\left({\norm{\bTheta_{i\cdot}}_1\over m}\right),
        \eeq 
        we conclude that 
        \[
            \bTheta_{ij} = [\bTheta P_B^{\perp}]_{ij} + [\bTheta P_B]_{ij} = [\bTheta P_B^{\perp}]_{ij} + o(1).
        \]
        This completes the proof. \qed

              \subsection{Proof of Theorem \ref{thm_rates_B}: The uniform convergence rate of $\wh B_j$}\label{app_proof_thm_B}
       
       Recall from (\ref{def_svd_epsilon}) that 
        \[
            {1\over nm}\wh\bepsilon^T\wh\bepsilon = \bV\bD^2\bV^T.
        \]
        We work on the intersection of the events
        \begin{align}\label{def_event_F}
            \cE_F &:= \left\{
            \max_{1\le j\le m}{1\over n}\|\bX\wh\bF_j- \bX\bF_j\|_2^2 \lesssim r_n \right\},\\\label{def_event_D}
            \cE_D &:= \left\{
                \sqrt{c_Wc_B} \lesssim  \lambda_K(\bD_K)\le \lambda_1(\bD_K) \lesssim
               \sqrt{C_WC_B}
            \right\},    
        \end{align}
        with $r_n$ defined in Assumption \ref{ass_initial} and $c_B, C_B, c_W, C_W$ defined in Assumption \ref{ass_B_Sigma}.
        Lemma \ref{lem_D_K} and Assumption \ref{ass_initial} guarantee that $\lim_{n\to\i}\PP(\cE_F\cap \cE_D) = 1$.
        
        By (\ref{def_est_BW}), observe that 
        \begin{align*}
            {1\over nm}\wh\bepsilon^T\wh\bepsilon \wh\bB^T = \bV\bD^2\bV^T\sqrt{m}\bV_K\bD_K = \wh\bB^T\bD_K^2.
        \end{align*}
        Plugging 
        \begin{equation}\label{def_bDelta}
            \wh\bepsilon = \bY -  \bX\wh\bF = \bepsilon + \underbrace{\bX\bF - \bX\wh\bF}_{\bDelta}
        \end{equation}
        into the above display yields 
        \begin{align*}
            {1\over nm}\left(
            \bepsilon^T\bepsilon + \bepsilon^T\bDelta + \bDelta^T\bepsilon + \bDelta^T\bDelta
            \right) \wh \bB^T \bD_K^{-2} = \wh\bB^T.
        \end{align*}
        Since 
        \[
             {1\over nm}
            \bepsilon^T\bepsilon   = {1\over nm}\left(
            \bB^T\bW^T\bW\bB + \bB^T\bW^T\bE + \bE^T\bW\bB + \bE^T\bE 
            \right),
        \]
        using the definition in (\ref{def_H0}) gives
        \begin{align}\label{display_B_hat_BH}\nonumber
            &\wh \bB^T - \bB^T\bH_0^T\\ 
            &= {1\over nm}\left(
            \bB^T\bW^T\bE + \bE^T\bW\bB + \bE^T\bE + \bepsilon^T\bDelta + \bDelta^T\bepsilon + \bDelta^T\bDelta
            \right) \wh \bB^T \bD_K^{-2}\\\nonumber
            &= {1\over n\sqrt m}\left(
            \bB^T\bW^T\bE + \bE^T\bW\bB + \bE^T\bE + \bepsilon^T\bDelta + \bDelta^T\bepsilon + \bDelta^T\bDelta
            \right)\bV_K\bD_K^{-1},
        \end{align}
        where we used (\ref{def_est_BW}) in the last step. Pick any $1\le j\le m$ and multiply both sides of the above display by $\be_j$. We proceed to bound each corresponding terms on the right hand side. 
        
        First, invoking Lemma \ref{lem_quad_terms} and $\cE_D$ gives 
        \begin{align*}
            \left\|\be_j^T\bB^T\bW^T\bE\bV_K\bD_K^{-1}\right\|_2 \lesssim \|\bB_j^T\bW^T\bE\|_2 \lesssim \sqrt{nm\log m}
        \end{align*}
        with probability at least $1-8m^{-1}$. Similarly, we obtain 
        \begin{align*}
            {1\over n\sqrt m}\left\|\be_j^T\left(
            \bB^T\bW^T\bE + \bE^T\bW\bB + \bE^T\bE 
            \right)\bV_K\bD_K^{-1}\right\|_2 \lesssim \sqrt{\log m \over n\wedge m}.
        \end{align*}
        On the other hand, Lemma \ref{lem_quad_terms_Delta} together with Assumption \ref{ass_initial} ensures that, with probability $1- 8m^{-1}$,
        \begin{align}\label{bd_quad_Deltas}
            &{1\over n\sqrt m}\left\|\be_j^T\left(
            \bepsilon^T\bDelta + \bDelta^T\bepsilon + \bDelta^T\bDelta
            \right)\bV_K\bD_K^{-1}\right\|_2\\\nonumber
            & \lesssim ~ \sqrt{r_n}\sqrt{Rem_{1,j} + Rem_{2,j}(\bdelta_j) + Rem_{3,j}(\btheta_j)} + r_{n,1} +  \sqrt{r_{n,2}\log (m) \over n}+ r_{n,3}\sqrt{1\over n}\\\nonumber
            & \lesssim ~ r_n
       \end{align}
        uniformly over $1\le j\le m$. Here, for convenience, we write 
        \begin{align}\label{def_r_n_k}
            &r_{n,1} = \max_{1\le j\le m}Rem_{1,j},\quad r_{n,2} = \max_{1\le j\le m}Rem_{2,j}(\bdelta_j),\quad r_{n,3} = \max_{1\le j\le m}Rem_{3,j}(\btheta_j).
        \end{align}
        Collecting the previous three displays concludes the desired rate. The proof is completed by noting that $m = m(n) \to \infty$ whence the probabilities tend to one as $n\to \infty$. 
        \qed

    \subsection{Proof of Lemma \ref{thm_Theta_simple_rates}: $\ell_1$ convergence rate of the initial estimator $\wh\Theta_1$}\label{app_proof_thm_Theta}

        Recall $\wh \Sigma = n^{-1}\bX^T \bX$ and $\kappa(s_n,4)$ is defined in (\ref{RE_X}). Define the following event 
        \begin{align}\label{def_event_X}
            \cE_{\bX} := \left\{\kappa(s_n, 4) \ge c,~ \max_{1\le j\le p}\wh\Sigma_{jj}\le C, ~  {1\over \sqrt n}\|\bX\bTheta\|_{2,1} \le C' M_n\sqrt{s_n}, ~  {1\over \sqrt n}\|\bX\bA\|_{\op} \le C' \right\}
        \end{align}
        for  some finite constants $C\ge c>0$ and  $C'>0$. Lemma \ref{lem_X} in Appendix \ref{app_lemmas_Theta} proves that $\lim_{n\to \infty}\PP(\cE_{\bX}) = 1$ under the conditions of Theorem \ref{thm_Theta_simple_rates}. 
        Recall $r_n$ from Assumption \ref{ass_initial}. Define  
        \begin{equation}\label{def_eta_bar}
            \eta_{n} = \sqrt{\log m \over n\wedge m} +  r_n.
        \end{equation}
        Further recall $\wt\bB$ and $\bH_0$ are defined in (\ref{def_B_tilde}) and (\ref{def_H0}). We work on the event
        \begin{equation}\label{def_event_misc}
            \cE_{\bX} \cap \left\{
                \|(\wt \bB - \wh\bB)\wh P_B^{\perp}\be_1\|_2 \lesssim \eta_n
            \right\} \cap \left\{
                \|(\wh P_B-P_B)\be_1\|_\i \lesssim  {\eta_n \over m}
            \right\}\cap \left\{
                \lambda_K(\bH_0) \gtrsim c_H
            \right\}
       \end{equation}
        which, according to Lemmas \ref{lem_X}, \ref{lemma_technical} and \ref{lemma_PB_error}, holds with probability tending to one. 
        
        Recall that $\bar \bTheta_1 = \bTheta P_{B}^{\perp} \be_1$.
        Starting with 
        \[
           \frac{1}{n}\bnorm{\wt\by - \bX\wh  \bTheta_1}_2^2 + \lambda_3\norm{\wh \bTheta_1}_1 \le \frac{1}{n}\bnorm{\wt\by - \bX\bar \bTheta_1}_2^2 + \lambda_3\norm{\bar \bTheta_1}_1,
        \]
        work out the squares to obtain
        \begin{align*}
            {1\over n}\left\|
                \bX (\wh\bTheta_1 - \bar \bTheta_1)
            \right\|_2^2 \le {2\over n}\left|
            \langle  \bX (\wh\bTheta_1 - \bar \bTheta_1), \wt \by  - \bX  \bar \bTheta_1
            \right|
            +\lambda_3\norm{\bar \bTheta_1}_1 -  \lambda_3\norm{\wh \bTheta_1}_1.
        \end{align*}
        By noting that 
        \begin{align*}
            \wt\by - \bX  \bar \bTheta_1 & = \left[\bX(\bTheta + \bA\bB) + \bW\bB + \bE\right] \wh P_B^{\perp}\be_1 - \bX  \bTheta P_B^{\perp}\be_1\\
            &= \bX\bA\bB\wh P_B^{\perp}\be_1+ \bW\bB\wh P_B^{\perp}\be_1 + \bE \wh P_B^{\perp}\be_1  + \bX  \bTheta(\wh P_B^{\perp}-P_{B}^{\perp}) \be_1
        \end{align*}
        and by writing $\bDelta = \wh\bTheta_1 - \bar\bTheta_1$, 
        we have 
        \begin{align*}
            {2\over n}\left|
            \langle  \bX \bDelta, \wt \by  - \bX  \bar \bTheta_1
            \right| &\le  {2\over n}\left| \be_1^T \wh P_B^{\perp}\bE^T  \bX \bDelta\right|+ {2\over n}\left\|\bX\bDelta\right\|_2 Rem\\
            &\le {2\over n}\left\| \be_1^T \wh P_B^{\perp}\bE^T  \bX\right\|_\i \|\bDelta\|_1 + {2\over n}\left\|\bX\bDelta\right\|_2 Rem.
        \end{align*}
        where 
        \[
           Rem = {1\over \sqrt n}\left\|
            \bX\bA\bB\wh P_B^{\perp}\be_1+ \bW\bB\wh P_B^{\perp}\be_1 +\bX  \bTheta (P_B - \wh P_B) \be_1\right\|_2.
        \]
        Provided that 
        \begin{equation}\label{bd_lbd3}
            \left\| \be_1^T \wh P_B^{\perp}\bE^T  \bX\right\|_\i \le {n\over 4}\lambda_3, 
        \end{equation}
        from the fact that $\|\bar\bTheta_1\|_0 \le s_n$, 
        using $\norm{\bar\bTheta_1}_1 - \norm{\wh \bTheta_1}_1 \le \|\bDelta_S\|_1 + \|\bDelta_{S^c}\|_1$ with $S := \supp(\bar\bTheta_1)$ and $|S| \le s_n$ gives 
        \begin{align*}
             {1\over n}\left\|
                \bX \bDelta
            \right\|_2^2 \le {2\over n}\left\|\bX\bDelta\right\|_2 Rem + {3 \over 2}\lambda_3\|\bDelta_S\|_1 - {1 \over 2}\lambda_3\|\bDelta_{S^c}\|_1.
        \end{align*}
        We now bound from above $Rem$. By recalling that $\wt\bB = \bH_0 \bB$,  
        \begin{align*}
            {1\over \sqrt n}\left\|
            \bX\bA\bB\wh P_B^{\perp}\be_1\right\|_2 &= {1\over \sqrt n}\left\|
            \bX\bA\bH_0^{-1}(\wt\bB - \wh\bB)\wh P_B^{\perp}\be_1\right\|_2\\
            &\le  {1\over \sqrt n}\left\|
            \bX\bA\bH_0^{-1}\right\|_{\op}\left\|(\wt\bB-\wh\bB)\wh P_B^{\perp}\be_1\right\|_2\\
            &\lesssim {1\over \sqrt n}\left\|
            \bX\bA\right\|_{\op}\eta_n & \textrm{by }(\ref{def_event_misc})\\
            &\lesssim \eta_n & \textrm{by }(\ref{def_event_X}).
        \end{align*}
        By (\ref{def_event_misc}), we also have 
        \begin{align*}
             {1\over \sqrt n}\left\|\bX  \bTheta (\wh P_B -P_B)\be_1\right\|_2 &\le 
              {1\over \sqrt n}\left\|\bX\bTheta\right\|_{2,1} \left\|(\wh P_B -P_B)\be_1\right\|_\i \lesssim  {M_n\sqrt{s_n} \over m}\eta_n.
        \end{align*}
        Together with Lemma \ref{lem_W_eigens}, we also have
        \begin{align*}
            {1\over \sqrt n}\left\|
            \bW\bB\wh P_B^{\perp}\be_1\right\|_2 & \lesssim 
            {1\over \sqrt n}\left\|
            \bW\right\|_{\op}\left\|(\wt \bB - \wh\bB)\wh P_B^{\perp}\be_1\right\|_2 \lesssim \eta_n
        \end{align*}
        with probability $1 - 2e^{-n}$. 
        We thus conclude that with the same probability, on the event (\ref{def_event_misc}), 
        \[
            Rem \lesssim  \eta_n \left(1 + {M_n\sqrt{s_n}\over m}\right).
        \]
        Following the same line of arguments as the proof of Theorem 6 in \cite{bing2020adaptive}, it is straightforward to show that, on the event (\ref{def_event_misc}) and for any $\lambda_3$ such that (\ref{bd_lbd3}) holds,
        \begin{equation}\label{rate_Theta_td}
		    \|\wh \bTheta_1 - \bar\bTheta_1\|_{1} 
    		\lesssim \max\left\{{\lambda_3},  ~ {(\wt \lambda_3)^2 \over \lambda_3 }\right\}{s_n\over \kappa^2(s_n,4)},
		\end{equation}
		holds with probability $1-2e^{-n}$,    		where
		\begin{equation}\label{def_lbd3_td}
		\wt\lambda_3 =  \eta_n  \left(1 + {M_n\sqrt{s_n}\over m}\right) {\kappa(s_n,4) \over \sqrt{s_n}}.
	    \end{equation}
	    It remains to show (\ref{bd_lbd3}) holds with probability tending to one for 
        any 
        \begin{align}\label{def_event_lbd3}
		    \lambda_3  \ge  \bar\lambda_3  \asymp  
		    \sigma_{E_1}\sqrt{\max_{1\le j\le p} \wh\Sigma_{jj}}\sqrt{\log p\over n}.
		\end{align}
		If this holds, then observe that (\ref{def_event_lbd3}), (\ref{rate_Theta_td}) and (\ref{def_lbd3_td}) readily imply 
		\beq\label{rate_Theta_td_prime}
        \|\wh \bTheta_1 -  \bar\bTheta_1\|_{1} 
		  &\lesssim (\bar \lambda_3 \vee \wt\lambda_3) {s_n \over \kappa^2(s_n,4)}\\
		  &\lesssim  s_n\sqrt{\log p\over n} +  \left(\sqrt{s_n} + {M_n s_n\over m}\right)\eta_n
        \eeq
        by choosing $\lambda_3$ appropriately. The result immediately follows from (\ref{def_eta_bar}).

		To prove (\ref{bd_lbd3}) holds for any $\lambda_3\ge \bar\lambda_3$, note that
        \begin{align*}
            \left\|\be_1^T \wh P_B^{\perp}\bE^T  \bX\right\|_\i &\le \left\| \be_1^T \bE^T  \bX\right\|_\i + \left\| \be_1^T \wh P_B\bE^T  \bX\right\|_\i\\
            &\le \left\| \be_1^T \bE^T  \bX\right\|_\i + \left\| \be_1^T\wh P_{B}\right\|_2\left\|\bE^T  \bX\right\|_{2,\i}.
        \end{align*}
        Since $\bE_1^T\bX_j$ is $\g_e\sqrt{n\wh\Sigma_{jj}[\se]_{11}}$ sub-Gaussian, the sub-Gaussian tail probability together with union bounds over $1\le j\le p$ yields
        \[
            \PP\left\{
            \left\| \be_1^T \bE^T  \bX\right\|_\i \le 2\g_e\sqrt{n\log p}\sqrt{[\se]_{11}\max_{1\le j\le p}\wh\Sigma_{jj}}
            \right\} \ge 1-2p^{-1}.
        \]
        Furthermore, noting that 
        \[
            \left\|\bE^T  \bX\right\|_{2,\i}^2 = \max_{1\le j\le p}\bX_j^T\bE \se^{-1/2}\se \se^{-1/2}\bE\bX_j
        \]
        and $\bX_j\bE\se^{-1/2}$ is $\g_e\sqrt{n\wh\Sigma_{jj}}$ sub-Gaussian, 
        an application of Lemma \ref{lem_quad} with union bounds over $1\le j\le p$ gives 
        \[
            \PP\left\{
                \left\|\bE^T  \bX\right\|_{2,\i}^2 \le \g_e^2 n \max_{1\le j\le p}\wh\Sigma_{jj}\left(
                \sqrt{\tr(\se)} + \sqrt{4\|\se\|_{\op} \log p}
                \right)^2
            \right\} \ge 1-p^{-1}.
        \]
        By part (E) of Lemma \ref{lemma_technical}, we conclude that 
        \[
            \PP\left\{
             {1\over n}\left\|\be_1^T \wh P_B^{\perp}\bE^T  \bX\right\|_\i 
            \lesssim \g_eC_E\sqrt{\max_{1\le j\le p}\wh\Sigma_{jj}}\sqrt{\log p\over n} 
            \right\}  \ge 1- 3p^{-1}
        \]
        where 
        \[
            C_E = \sqrt{[\se]_{11}} + \sqrt{\tr(\se)\over m\log p} + \sqrt{\|\se\|_{\op}\over m} \lesssim 1.
        \]
        This completes the proof. \qed

    \subsection{Proof  of Theorem \ref{thm_asymp_normal}: asymptotic normality of $\wt \Theta_{11}$}\label{app_thm_asymp_normal}
    
    Recall that $\bar\bTheta_1 = \bTheta P_B^{\perp}\be_1$ so that $\bar\Theta_{11} = \be_1^T \bTheta P_B^{\perp}\be_1$. 
    By the definition of $\wt{\Theta}_{11}$ and $\bar{\Theta}_{11}$, we have
    \beq
    \wt{\Theta}_{11} - \bar\Theta_{11} & = \wh{\Theta}_{11} - \bar\Theta_{11}+ \wh{\bomega}_1^T\frac{1}{n}\bX^T(\wt\by - \bX\wh{\bTheta}_1)\\
    & =  \underbrace{(\be_1 - \frac{1}{n}\bX^T\bX\wh{\bomega}_1)^T(\wh{\bTheta}_1 - \bar \bTheta_1)}_{I_1} + \wh{\bomega}_1^T\frac{1}{n}\bX^T(\wt\by_1 - \bX\bar\bTheta_1)\\
    & = I_1 + \wh{\bomega}_1^T\frac{1}{n}\bX^T\left[(\bX(\bTheta + \bA\bB) + \bW \bB + \bE)\wh{P}_{B}^\perp \be_1 - \bX\bTheta P_B^\perp \be_1\right] \\
    & = I_1 + \underbrace{\wh{\bomega}_1^T\frac{1}{n}\bX^T\bX\bTheta(\wh{P}_B^\perp - P_B^\perp)\be_1}_{I_2}
    + \underbrace{\wh{\bomega}_1^T\frac{1}{n}\bX^T\bX \bA\bB \wh{P}_B^\perp \be_1}_{I_3}\\
    &~~~~~~~~~~~~~~~~~~+  \underbrace{\wh{\bomega}_1^T\frac{1}{n}\bX^T\bW \bB \wh{P}_B^\perp \be_1}_{I_4}
    +\underbrace{ \wh{\bomega}_1^T\frac{1}{n}\bX^T\bE \wh{P}_B^\perp \be_1}_{I_5}\\
    =& I_1 + I_2 + I_3 + I_4 + I_5.
    \eeq 
    In what follows, we will characterize $I_1$ through $I_5$, respectively. For simplicity, define 
    \begin{equation}\label{def_xi}
        \xi_n = s_n\sqrt{\log p\over n} + \left({s_nM_n\over m} + \sqrt{s_n}\right)\left(\sqrt{\log m \over n} + r_n\right)
    \end{equation}
    such that $\|\wh\bTheta_1 - \bar\bTheta_1\|_1 = \cO_{\PP}(\xi_n)$ from Theorem \ref{thm_Theta_simple_rates}.
    \begin{itemize}
	\item For $I_1$, the KKT condition of (\ref{formula_nodewise}) implies that \citep{vandegeer2014} 
    \[
    \left\|\frac{1}{n}\bX^T\bX \wh{\bomega}_1 - \be_1\right\|_\infty\leq \frac{\wt{\lambda}}{2\wh\tau_1^2},
    \]
    which, together with Lemma \ref{lemma_nodewise} and Theorem \ref{thm_Theta_simple_rates}, yields 
	\beq\label{bd_I1}
	|I_1| \leq \norm{\wh{\bTheta}_1 - \bar\bTheta_1}_1\norm{\be_1 - \frac{1}{n}\bX^T\bX\wh{\bomega}}_{\infty} = \cO_{\PP}\left(\xi_n \sqrt{\log p\over n}\right). 
	\eeq 
	
	
	\item For $I_2$, direct calculation gives us
	\beq\nonumber
	I_2 &=  (\be_1 - \frac{1}{n}\bX^T\bX \wh{\bomega}_1)^T    \bTheta( P_B - \wh{P}_B)\be_1 + \bTheta_{1\cdot}^T(P_B - \wh{P}_B)\be_1\\
	&=  I_{21} + I_{22}.
	\eeq 
	Recall that $\eta_n$ is defined in (\ref{def_eta_bar}). We have
	\beq \nonumber
	I_{21}\leq \norm{\be_1 - \frac{1}{n}\bX^T\bX\wh{\bomega}_1}_\infty
	\norm{\bTheta}_{1,1}\norm{(P_B - \wh{P}_B)\be_1}_\infty = \cO_{\PP}\left( \frac{s_n M_n\eta_n}{m}\sqrt{\log p\over n}\right),
	\eeq
	where the last step follows from Lemma \ref{lemma_PB_error}, Lemma \ref{lemma_nodewise} and $\|\bTheta\|_{1,1} \le s_n\|\bTheta\|_{\i,1} \le s_nM_n$ from (\ref{def_space_Theta}).
	Similarly, we can show that
	\beq \nonumber
	|I_{22}| \leq \norm{\bTheta_{1\cdot}}_1 \norm{(P_B - \wh{P}_B)\be_1}_\infty = 
	\cO_{\PP}\left(\frac{M_n\eta_n}{m}\right),
	\eeq
	and therefore 
	\beq\label{bd_I2}
	|I_2| = \cO_{\PP}\left(\left(1 + s_n\sqrt{\log p\over n}\right)\frac{M_n\eta_n}{m}\right) =  \cO_{\PP}\left(\frac{M_n\eta_n}{m}\right).
	\eeq 
	\item For $I_3$,  recall from (\ref{def_H0}) and (\ref{def_B_tilde}) that 
    $\bA\bB = \wt\bA\wt\bB:=(\bA\bH_0^{-1})(\bH_0\bB)$ on the event 
    \[
         \cE_{H} = \left\{
            c_H\lesssim \lambda_K(\bH_0) \le \lambda_1(\bH_0) \lesssim C_H
         \right\}
    \]
    with $c_H$ and $C_H$ defined in Lemma \ref{lemma_technical}.  On the event $\cE_H$, we obtain
	\beq\nonumber
	|I_3| &=  |\wh{\bomega}_1^T\frac{1}{n}\bX^T\bX \wt\bA \wt \bB \wh{P}_B^\perp \be_1|\\
	&\leq \norm{\wh{\bomega}_1^T\frac{1}{n}\bX^T\bX \wt\bA}_2\norm{(\wt \bB - \wh\bB)\wh{P}_B^\perp \be_1}_2 & \textrm{by }\wh{\bB}\wh{P}_B^\perp = \b0\\
	&\lesssim  c_H^{-1}\norm{\wh{\bomega}_1^T\frac{1}{n}\bX^T\bX \bA}_2\norm{(\wt \bB - \wh\bB)\wh{P}_B^\perp \be_1}_2.
	\eeq 
	Notice that $\lim_{n\to\infty}\PP(\cE_H) = 1$ and $\norm{(\wt \bB - \wh\bB)\wh{P}_B^\perp \be_1}_2 = \cO_{\PP}(\bar\eta)$ from parts (A) and (D) of Lemma \ref{lemma_technical}, respectively. We bound from above $\norm{\wh{\bomega}_1^T\frac{1}{n}\bX^T\bX \bA}_2$ as 
	\beq\nonumber
	\norm{\wh{\bomega}_1^T\frac{1}{n}\bX^T\bX \bA}_2
	&\leq\norm{(\be_1 - \frac{1}{n}\bX^T\bX\wh{\bomega}_1)^T\bA}_2 + \norm{\bA_{1\cdot}}_2\\
	& = \cO_{\PP}\left(\sqrt{s_\Omega\log p \over n}\right) + \norm{\bA_{1\cdot}}_2
	\eeq 
	where the last step uses Lemma \ref{lemma_nodewise_A}. We thus conclude
	\beq\label{bd_I3}
	    |I_3| = \cO_{\PP}\left(\eta_n\sqrt{s_\Omega \log p\over n} + \eta_n\norm{\bA_{1\cdot}}_2\right).
	 \eeq 

	\item For $I_4$, on the event $\cE_H$ and by writing $\wt\bW = \bW\bH_0^{-1}$, 
	\beq\nonumber
	    |I_4| \leq  \norm{\wh{\bomega}_1^T\frac{1}{n}\bX^T\wt\bW}_2\norm{(\wt \bB - \wh\bB)\wh{P}_B^\perp \be_1}_2 \lesssim  c_H^{-1}\norm{\wh{\bomega}_1^T\frac{1}{n}\bX^T\bW}_2\cO_{\PP}(\eta_n).
	\eeq 
	Note that, conditioning on $\bX$, $\wh{\bomega}_1^T\bX^T\bW\sw^{-1/2}\in\RR^K$ is $\gamma_w\sqrt{\wh\bomega_1^T\bX^T\bX\wh\bomega_1}$ sub-Gaussian random vector. An application of Lemma \ref{lem_quad} yields, for all $t>0$,
	\begin{align*}
	    \PP\left\{
	        \norm{\wh{\bomega}_1^T\bX^T\bW}_2^2 > \gamma_w^2(\wh\bomega_1^T\bX^T\bX\wh\bomega_1)\left(
	        \sqrt{\tr(\sw)} + \sqrt{2\|\sw\|_{\op}t}
	        \right)^2
	    \right\} \le e^{-t}.
	\end{align*}
	Note that 
	\beq\label{bd_omegaXXomega}
	    {1\over n}\wh\bomega_1^T\bX^T\bX\wh\bomega_1 &\le \Omega_{11} + \left|\wh\bomega_1^T {1\over n}\bX^T\bX \wh\bomega_1 - \Omega_{11}\right|\\
	    &=\cO_{\PP}\left(
	    \Omega_{11} + \sqrt{s_\Omega \log p\over n}
	    \right) & \textrm{ by Lemma \ref{lemma_nodewise}}\\
	    & = \cO_{\PP}(\Omega_{11})
	\eeq
	by using $s_\Omega \log p = o(n)$ and $\Omega_{11} \ge \Sigma_{11}^{-1} \ge C^{-1}$ from Assumption \ref{ass_X}. By also noting that 
	\beq\label{bd_Omega_11}
	    \Omega_{11} \le {1\over \lambda_{\min}(\Sigma)} = \cO(1)
	\eeq
	from Assumption \ref{ass_X}, from $\tr(\sw) \le K\|\sw\|_{\op} = \cO(1)$ and (\ref{bd_omegaXXomega}), we conclude 
	\beq\nonumber
	    \left\|\wh{\bomega}_1^T\frac{1}{n}\bX^T\bW\right\|_2 = \cO_{\PP}\left(
	     1/\sqrt{n}
	    \right).
	\eeq
	Hence 
	\beq\label{bd_I4}
	    I_4 =\cO_{\PP}\left(
	    \eta_n \over \sqrt{n}
	    \right).
	\eeq 
	
	\item For $I_5$, by definition
	\beq\nonumber
	\wh{\bomega}_1^T\frac{1}{n}\bX^T\bE \wh{P}_B^\perp \be_1 =& 
	\wh{\bomega}_1^T\frac{1}{n}\bX^T\bE P_B^\perp \be_1 + \wh{\bomega}_1^T\frac{1}{n}\bX^T\bE(P_B - \wh{P}_B)\be_1\\ 
	:=& I_{51} + I_{52}.
	\eeq 
	It's easy to see that $\bE \wh{P}_B^\perp \be_1 \in \RR^n$ is an i.i.d Gaussian vector with covariance matrix $V_{11}\bI_{n}$ and independent of $\bX$, where 
	\[
	    V_{11} := \be_1^T P_B^\perp\Sigma_{E}P_B^\perp\be_1.
	\]
	This implies that 
	\beq\nonumber
	\sqrt{n}I_{51} ~ \big | ~ \bX \sim N\left(0,    \wh{\bomega}_1^T\frac{1}{n}\bX^T\bX\wh{\bomega}_1~ V_{11} \right).
	\eeq
	We further note that 
	\beq\label{bd_V11}
	    V_{11} = [\se]_{11} - \be_1^T P_B\se \be_1-\be_1^T P_B\se P_B^{\perp} \be_1 = [\se]_{11} + \cO(1/\sqrt{m})
	\eeq
	by using $\|P_B\be_1\|_2 = \cO(1/\sqrt m)$ deduced from (\ref{bd_bias}). Hence, also by (\ref{bd_omegaXXomega}) and (\ref{bd_Omega_11}), 
	\beq\label{bd_I51}
	        	\sqrt{n}I_{51} = \zeta + o_\PP(1)
	\eeq
	where 
	\beq\label{def_zeta}
	    \zeta | \bX \sim  N\left(0,    \wh{\bomega}_1^T\frac{1}{n}\bX^T\bX\wh{\bomega}_1~ [\Sigma_E]_{11} \right).
	\eeq
	
	For the second term, we know
	\beq\nonumber
	|I_{52}|\leq |\wh{\bomega}_1^T\frac{1}{n}\bX^T\bE(P_B - \wh{P}_B)\be_1|
	&\leq \frac{1}{n}\norm{\bE^T\bX\wh{\bomega}_1}_2\norm{(\wh{P}_B - P_B)\be_1}_2.
	\eeq
	Using the same arguments of bounding $\norm{\wh{\bomega}_1^T\bX^T\bW}_2$ as above, one can establish that
	\begin{align*}
	    \PP\left\{
	        \norm{\wh{\bomega}_1^T\bX^T\bE}_2^2 > \gamma_e^2(\wh\bomega_1^T\bX^T\bX\wh\bomega_1)\left(
	        \sqrt{\tr(\se)} + \sqrt{2\|\se\|_{\op}t}
	        \right)^2
	    \right\} \le e^{-t},\quad \forall t>0.
	\end{align*}
	Hence, by $\|\se\|_{\op}=\cO(1)$, (\ref{bd_omegaXXomega}) and (\ref{bd_Omega_11}), 
	\[
	    \norm{\wh{\bomega}_1^T\frac{1}{n}\bX^T\bE}_2 = \cO_{\PP}\left(
	     \sqrt{m\over n}
	    \right).
	\]
	Finally, invoke Lemma \ref{lemma_PB_error} to obtain
	\beq\label{bd_I52}
	 |I_{52}| =\cO_{\PP}\left(\eta_n\over \sqrt n\right).
	\eeq
    \end{itemize}	
    
    Collecting (\ref{bd_I1}), (\ref{bd_I2}), (\ref{bd_I3}), (\ref{bd_I4}), (\ref{bd_I51}) and (\ref{bd_I52}) and using 
    $$
        \bar \Theta_{11} = \Theta_{11} - \bTheta_{1\cdot}^TP_B\be_1 \overset{(\ref{bd_bias})}{=} \Theta_{11} + \cO(M_n/m)
    $$ 
    conclude
    \begin{align*}
        \sqrt{n}\left(\wt\Theta_{11} - \Theta_{11}\right) &= \zeta + \Delta 
    \end{align*}
    where $\zeta$ satisfies (\ref{def_zeta}) and 
    \beq\nonumber
        \Delta &= \cO_{\PP}\left(
        \xi_n\sqrt{\log p} + \left({M_n\sqrt{n} \over m} + \sqrt{s_\Omega \log p} + \sqrt{n} \norm{\bA_{1\cdot}}_2+1\right)\eta_n
        \right)  + \cO\left({M_n\sqrt{n}\over m}\right) + o_{\PP}(1).
    \eeq
    By $M_n\sqrt{n} = o(m)$, (\ref{def_xi}) and (\ref{def_eta_bar}), after a bit algebra, we conclude 
    \begin{align*}
        \Delta 
        &= \cO_{\PP}\left(
        {s_n\log p \over \sqrt n} + \left({s_nM_n\sqrt{\log p} \over m} + \sqrt{(s_n\vee s_\Omega)\log p}+  \sqrt{n} \norm{\bA_{1\cdot}}_2+  1\right) \eta_n
        \right)+ o_\PP(1)\\
        &= \cO_{\PP}\left(\left( \sqrt{(s_n\vee s_\Omega)\log p}+  \sqrt{n} \norm{\bA_{1\cdot}}_2+1\right) \left(
        \sqrt{\log m\over n} + r_n
        \right)
        \right) + o_\PP(1)\\
        &= \cO_{\PP}\left(\sqrt{(s_n\vee s_\Omega)\log(p)\log(m)\over n}\right)\\
        &\quad + \cO_{\PP}\left(\norm{\bA_{1\cdot}}_2\sqrt{\log m}+\left( \sqrt{(s_n\vee s_\Omega)\log p}+ \sqrt{n} \norm{\bA_{1\cdot}}_2\right) r_n
        \right) + o_\PP(1)\\
        &= o_\PP(1)
    \end{align*}
    where we use $s_n\log p = o(\sqrt n)$ in the second line, use $\log m =o(n)$ and $r_n = o(1)$ in the third equality and use 
    $
        (s_n\vee s_\Omega)\log(p)\log(m) = o(n)
    $
    together with (\ref{cond_rn}) in the last step. 
    
    Finally,  $|\wh\bomega_1^T \wh\Sigma \wh\bomega_1 - \Omega_{11}| = o_\PP(1)$ is proved in Lemma \ref{lemma_nodewise}. The proof is complete.\qed

      \subsection{Proof of Corollary \ref{cor_ASN}}\label{app_proof_cor_ASN}
        We first prove case (1). From Theorem \ref{thm_pred}, we start by simplifying the expressions of $Rem_{1,j}$, $Rem_{2,j}(\bdelta_j)$ and $Rem_{3,j}(\btheta_j)$. Recall the SVD of $\wh\Sigma = \sum_{k=1}^q\Lambda_q \bu_k\bu_k^T$ with $q = \rank(\bX)$. Pick any $1\le j\le m$ and note $\|\btheta_j\|_0 \le s_n$ We have 
        \begin{align*}
            &Rem_{1,j} = {\sigma_j^2\over n}\left(
            \sum_{k=1}^q \left(
                \Lambda_k \over \Lambda_k + \lambda_2^{(j)}
            \right)^2 + \left(\Lambda_1 \over \Lambda_1 + \lambda_2^{(j)}\right)^2\log m
            \right),\\
            &Rem_{2,j}(\bdelta_j) = \sum_{k=1}^q{\lambda_2^{(j)} \Lambda_k \over \Lambda_k + \lambda_2^{(j)}} \left(\bu_k^T\bdelta_j\right)^2,\\
            & Rem_{3,j}(\btheta_j) =  
    		{\lbdj(\Lambda_1 + \lambda_2^{(j)}) \over (\Lambda_q+\lambda_2^{(j)})^2}\left(\max_{1\le i\le p}\wh\Sigma_{ii}\right) {s_n\log (p\vee m) \over \kappa^2(s_n,4)}{\sigma_j^2\over n}.
        \end{align*}
        Taking $\lambda_2 \to \infty$ yields
        \begin{align*}
            &Rem_{1,j} = 0,\\
            &Rem_{2,j}(\bdelta_j) = \sum_{k=1}^q\Lambda_k \left(\bu_k^T\bdelta_j\right)^2 = \bdelta_j^T \wh\Sigma \bdelta_j,\\
             & Rem_{3,j}(\btheta_j) = \left(\max_{1\le i\le p}\wh\Sigma_{ii}\right) {s_n\log (p\vee m) \over \kappa^2(s_n,4)}{\sigma_j^2\over n}.
        \end{align*}
        An application of Lemma \ref{lem_bernstein} together with 
        $$
            \bdelta_j^T \Sigma \bdelta_j \le \|\bdelta_j\|_2^2 \|\Sigma\|_{\op}\le \|\bA\|_{\op}^2\|\bB_j\|_2^2\|\Sigma\|_{\op} \lesssim \|\bA\|_{\op}^2\|\Sigma\|_{\op}
        $$ yields 
        \begin{align*}
            \PP\left\{
            \bdelta_j^T \wh\Sigma \bdelta_j \le \|\bA\|_{\op}^2\|\Sigma\|_{\op}\left(1 + \sqrt{\log m \over n}\right)
            \right\}\ge 1-2p^{-2}.
        \end{align*}
        Taking the union bounds over $1\le j\le m$ and 
        invoking Assumptions \ref{ass_B_Sigma} and $\cE_{\bX}$ in (\ref{def_event_X}) conclude
        \[
            r_n = \cO\left(\|\bA\|_{\op}^2 + {s_n\log(p\vee m)\over n}\right)
        \]
        with probability tending to one. This proves the rate in (\ref{rate_rnj_case1}). In this case, condition (\ref{cond_rn}) reduces to 
        \[
            \norm{\bA_{1\cdot}}_2 \sqrt{\log m}+\left(\|\bA_{1\cdot}\|_2 \sqrt n + \sqrt{(s_n\vee s_\Omega)\log p}\right)\left(
                \|\bA\|_{\op}^2 + {s_n\log(p\vee m)\over n}
            \right)  = o(1).
        \]
        Provided that $\|\bA_{1\cdot}\|_2 = o(\sqrt{(s_n\vee s_\Omega)\log p/n})$, 
        \[
            \norm{\bA_{1\cdot}}_2 \sqrt{\log m} = o\left(
             (s_n\vee s_\Omega)\log p\log m\over n
            \right) = o(1).
        \]
        and 
        \[
            \sqrt{(s_n\vee s_\Omega)\log p}\left(
                \|\bA\|_{\op}^2 + {s_n\log(p\vee m)\over n}
            \right) = o(1)
        \]
        is ensured by (\ref{cond_A_op}) and $(s_n\vee s_\Omega)\log^2(p\vee m) = o(n)$.
        
        To prove case (2), by repeating the proof of Corollary 8 in \cite{bing2020adaptive}, one can deduce that 
        \[
            Rem_{1,j} + Rem_{2,j}(\bdelta_j) + Rem_{3,j}(\btheta_j) \lesssim \sqrt{(\tr(\wh\Sigma) + \Lambda_1 s_n) \|\bdelta_j\|_2^2\log (p\vee m)\over n}+{s_n \over n}.
        \]
        Since $\tr(\wh\Sigma)=\cO_\PP(p)$, $\|\bdelta_j\|_2^2 \lesssim \|\bA\|_{\op}^2 = \cO(1/p)$ and $\Lambda_1 = \cO_\PP(p)$ by using Lemma \ref{lem_op_norm}, $\max_{1\le j\le p}\Sigma_{jj} = \cO(1)$ and  $\|\Sigma\|_{\op} =\cO(p)$,  we conclude
        \[
            r_n = \cO\left(
             \sqrt{s_n \log (p\vee m)\over n}+{s_n\log (p\vee m) \over n}
            \right).
        \]
        Immediately, $\|\bA_{1\cdot}\|_2 \le \|\bA\|_{\op}$ and condition (\ref{cond_rn}) holds under $\|\bA\|_{\op}^2 = \cO(1/p)$ and $s_n(s_n\vee s_\Omega)\log^2(p\vee m) = o(n)$.
        \qed

    \subsection{Proof of Proposition \ref{prop_sigma_E}: consistency of the estimation of $\sigma_{E_1}^2$}\label{app_proof_prop_sigma_E}
        We work on the event that 
        \[
            \left\{\lambda_K(\bH_0) \gtrsim c_H\right\} \bigcap \left\{{1\over n}\|\bX\wh\bF_1 - \bX\bF_1\|_2^2 \lesssim r_{n,1}\right\}
        \]
        which, according to Lemma \ref{lemma_technical} and Theorem \ref{thm_pred}, holds with probability tending to one.
        Recall from (\ref{def_est_epsilon}) that 
        \[
            \wh\bepsilon_1 = \bepsilon_1 + \bDelta_1 = \bW\bB_1 + \bE_1 + \bDelta_1 = \wt\bW\wt\bB_1 + \bE_1 + \bDelta_1
        \]
        with $\bDelta_1 = \bX\wh\bF_1 - \bX\bF_1$, $\wt\bW = \bW\bH_0^{-1}$ and $\wt\bB = \bH_0\bB$ defined in (\ref{def_B_tilde}).
        By definition (\ref{def_est_variance}), after a bit algebra,
        \begin{align*}
            \wh\sigma_{E_1}^2 - \sigma_{E_1}^2 &=  {1\over n}\bE_1^T\bE_1 - \sigma_{E_1}^2+ {1\over n}\bDelta_1^T\bDelta_1 + {2\over n}\bDelta_1^T(\wt\bW\wt\bB_1 - \wh\bW\wh\bB_1) + {2\over n}\bDelta_1^T\bE_1\\
            &\quad +  {1\over n}(\wt\bW\wt\bB_1 - \wh\bW\wh\bB_1)^T(\wt\bW\wt\bB_1 - \wh\bW\wh\bB_1) + {2\over n}(\wt\bW\wt\bB_1 - \wh\bW\wh\bB_1)^T\bE_1.
        \end{align*}
        We study each terms on the right hand side separately. First, an application of Lemma \ref{lem_bernstein} together with $\sigma_{E_1}^2\le C_E$ gives 
        \[
            \left| {1\over n}\bE_1^T\bE_1 - \sigma_{E_1}^2\right| = \cO_\PP\left(\sqrt{1 /n}
            \right),
        \]
        which further implies
        \[
            {1\over \sqrt n}\|\bE_1\|_2 = \cO_\PP(1).
        \]
        We thus have
        \beq\label{bd_term_1}
            &\left|
                {1\over n}\bE_1^T\bE_1 - \sigma_{E_1}^2+ {1\over n}\bDelta_1^T\bDelta_1 + {2\over n}\bDelta_1^T\bE_1
            \right|\\
            &\le \left|
                {1\over n}\bE_1^T\bE_1 - \sigma_{E_1}^2\right|+ {1\over n}\|\bDelta_1\|_2^2 + {2\over n}\|\bDelta_1\|_2\|\bE_1\|_2
          = \cO_{\PP}(n^{-1/2} + r_n).
        \eeq 
        To bound the other terms, notice that 
        \[
            \|\wt\bW\wt\bB_1 - \wh\bW\wh\bB_1\|_2 \le \|\wt\bW-\wh\bW\|_\op \|\wh\bB_1\|_2 + \|\wt\bW\|_\op\|\wh\bB_1 - \wt\bB_1\|_2.
        \]
        By Lemma \ref{lem_W_eigens}, part (B) of Lemma \ref{lemma_technical}, Theorem \ref{thm_rates_B} and Lemma \ref{lem_W_frob}, we have 
        \beq\nonumber
                {1\over n}\|\wt\bW\wt\bB_1 - \wh\bW\wh\bB_1\|_2 = \cO_\PP\left(\sqrt{\log m\over n} +  r_n\right).
        \eeq 
        This leads to 
        \beq \label{bd_term_2}
            &\left|
                 {2\over n}\bDelta_1^T(\wt\bW\wt\bB_1 - \wh\bW\wh\bB_1)+{1\over n}(\wt\bW\wt\bB_1 - \wh\bW\wh\bB_1)^T(\wt\bW\wt\bB_1 - \wh\bW\wh\bB_1)\right.\\
            &\quad \left.+ {2\over n}(\wt\bW\wt\bB_1 - \wh\bW\wh\bB_1)^T\bE_1
            \right| = \cO_\PP\left(
                \sqrt{\log m\over n} + r_n
            \right).
        \eeq 
        Collecting (\ref{bd_term_1}) and (\ref{bd_term_2}) completes the proof. \qed

        \bigskip
        
        The following lemma provides overall control of $\wh\bW - \wt\bW$ in the operator norm. 
        
        \begin{lemma}\label{lem_W_frob}
        Under conditions of Theorem \ref{thm_rates_B}, with probability tending to one,
        \[
            {1\over \sqrt n}\|\wt\bW-\wh\bW\|_\op \lesssim  \sqrt{r_n} + \sqrt{\log m \over n \wedge m}.
        \]
    \end{lemma}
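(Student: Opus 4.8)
The plan is to avoid re-running the delicate factor-model expansion used for $\wh\bB$ in Theorem \ref{thm_rates_B}, and instead exploit two exact SVD identities that tie $\wh\bW$ and $\wh\bB$ together. From the normalizations $\wh\bB^T = \sqrt m\,\bV_K\bD_K$ and $\wh\bW = \sqrt n\,\bU_K$ in (\ref{def_est_BW}), together with the SVD (\ref{def_svd_epsilon}), one checks directly that
\[
\tfrac1m\wh\bB\wh\bB^T = \bD_K^2,\qquad \wh\bW = \tfrac1m\,\wh\bepsilon\,\wh\bB^T\bD_K^{-2}.
\]
I would substitute $\wh\bepsilon = \bepsilon + \bDelta = \wt\bW\wt\bB + \bE + \bDelta$ into the second identity, using $\bW\bB = \wt\bW\wt\bB$ for the centering $\wt\bW = \bW\bH_0^{-1}$ (consistent with $\wt\bB = \bH_0\bB$ from (\ref{def_B_tilde}), $\bH_0$ as in (\ref{def_H0})) and $\bDelta$ from (\ref{def_bDelta}). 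The first identity lets me rewrite $\tfrac1m\wt\bB\wh\bB^T\bD_K^{-2} = \bI_K - \tfrac1m(\wh\bB-\wt\bB)\wh\bB^T\bD_K^{-2}$, which yields the clean decomposition
\[
\wh\bW - \wt\bW = -\,\wt\bW\,\tfrac1m(\wh\bB-\wt\bB)\wh\bB^T\bD_K^{-2} + \tfrac1m\,\bDelta\,\wh\bB^T\bD_K^{-2} + \tfrac1m\,\bE\,\wh\bB^T\bD_K^{-2}.
\]
This isolates precisely the estimation error of $\bB$, the error $\bDelta$ of estimating $\bX\bF$, and the idiosyncratic noise $\bE$.

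The remaining work is to bound the three terms in operator norm and divide by $\sqrt n$. Since $K$ is fixed, $\|\cdot\|_{\op}$ and $\|\cdot\|_F$ differ only by a constant, so I pass freely between them. The key \emph{a priori} bounds, all valid on the events of Theorem \ref{thm_rates_B}, are: $\|\wt\bW\|_{\op} \lesssim \sqrt n$ (Lemma \ref{lem_W_eigens} with $\lambda_K(\bH_0)\gtrsim c_H$ from Lemma \ref{lemma_technical}); $\|\wh\bB\|_{\op} \lesssim \sqrt m$ and $\|\bD_K^{-2}\|_{\op} = \cO(1)$ on $\cE_D$ in (\ref{def_event_D}); the random-matrix bound $\|\bE\|_{\op} \lesssim \sqrt n + \sqrt m$; the Frobenius aggregation $\|\bDelta\|_F^2 = \sum_{j}\|\bX\wh\bF_j - \bX\bF_j\|_2^2 \lesssim nm\,r_n$ from Assumption \ref{ass_initial}; and $\|\wh\bB - \wt\bB\|_F \lesssim \sqrt m\,(\sqrt{\log m/(n\wedge m)} + r_n)$, obtained by summing the column-wise bound of Theorem \ref{thm_rates_B} over $j\in[m]$. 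Submultiplicativity then gives, after dividing by $\sqrt n$: the first term $\lesssim \sqrt{\log m/(n\wedge m)} + r_n$ (the $\sqrt m$ from $\|\wh\bB\|_{\op}$ cancels the $1/m$ prefactor); the $\bE$-term $\lesssim 1/\sqrt m + 1/\sqrt n \lesssim \sqrt{\log m/(n\wedge m)}$; and the $\bDelta$-term $\lesssim \tfrac1{\sqrt n}\cdot\tfrac1m\|\bDelta\|_F\|\wh\bB\|_{\op} \lesssim \sqrt{r_n}$. Since $r_n \le \sqrt{r_n}$, collecting these yields exactly $\sqrt{r_n} + \sqrt{\log m/(n\wedge m)}$, and the $\bDelta$-term is the sole source of the $\sqrt{r_n}$ (rather than $r_n$) in the stated rate.

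The subtle point worth emphasizing is that $\wh\bB$, $\bD_K$, and hence $\wh\bW$ all depend on $\bE$ and $\bDelta$ in a complicated way — the same dependence the authors flag as the core difficulty behind Theorem \ref{thm_rates_B}. The decomposition above sidesteps it: every factor involving $\bE$ or $\bDelta$ is separated from the data-dependent factors $\wh\bB$ and $\wt\bW$ by a matrix product, so I only ever need submultiplicative control of $\|\bE\|_{\op}$ and the \emph{global} Frobenius size of $\bDelta$, neither of which requires disentangling the dependence. The main obstacle is therefore bookkeeping rather than probability: I must verify that the crude operator-norm estimates are tight enough, in particular that the $\sqrt m$ from $\|\wh\bB\|_{\op}$ and the $\sqrt m$ lost in aggregating the column-wise $\wh\bB$-bound to Frobenius norm exactly cancel the $1/m$ prefactors, so that the $\wh\bB$-error term preserves (rather than inflates) the Theorem \ref{thm_rates_B} rate. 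A secondary check is that $\|\bE\|_{\op} \lesssim \sqrt n + \sqrt m$ continues to hold for non-diagonal $\se$, which follows from a standard sub-Gaussian matrix bound under Assumption \ref{ass_B_Sigma}(c) since $\|\se\|_{\op} \le C_E$.
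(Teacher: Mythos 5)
Your proof is correct and follows essentially the same route as the paper: the identity $\wh\bW=\tfrac1m\wh\bepsilon\wh\bB^T\bD_K^{-2}=\wh\bepsilon\wh\bB^T(\wh\bB\wh\bB^T)^{-1}$ expanded via $\wh\bepsilon=\wt\bW\wt\bB+\bE+\bDelta$, with the three terms controlled by the column-wise rate of Theorem \ref{thm_rates_B} aggregated to $\|\wh\bB-\wt\bB\|_F$, the Frobenius bound $\|\bDelta\|_F^2\lesssim nm\,r_n$ from Assumption \ref{ass_initial}, and the spectral bounds from Lemmas \ref{lem_W_eigens}, \ref{lem_D_K} and \ref{lemma_technical}. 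The only difference is that you explicitly isolate and bound the idiosyncratic-noise term $\tfrac1m\bE\wh\bB^T\bD_K^{-2}$ via $\|\bE\|_{\op}\lesssim\sqrt n+\sqrt m$, a contribution the paper's two-term decomposition (which writes $\bepsilon=\bW\bB$ and then bounds $\wh\bepsilon-\bepsilon$ by $\|\bX\wh\bF-\bX\bF\|_F$ alone) leaves implicit; your accounting is the more careful one and yields the same final rate.
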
    
    \begin{proof}
        We work on the event that parts (A) -- (C) of Lemma \ref{lemma_technical} hold intersecting with $\cE_B$ in (\ref{def_event_B}) and $\cE_F$ in (\ref{def_event_F}). Recalling that $\wt\bB$ is defined in (\ref{def_B_tilde}) and $\wt\bW = \bW\bH_0^{-1}$.
        Observe that 
        $$
            \wh\bW = \wh\bepsilon\wh\bB^T(\wh\bB\wh\bB^T)^{-1} = \wt\bW\wt\bB\wh\bB^T(\wh\bB\wh\bB^T)^{-1}+ (\wh\bepsilon-\bepsilon)\wh\bB^T(\wh\bB\wh\bB^T)^{-1}
        $$
        with $\bepsilon = \bW\bB = \wt\bW\wt\bB$. This gives 
        \beq \nonumber
            \wh\bW - \wt\bW &=  \wt\bW(\wt\bB-\wh\bB)\wh\bB^T(\wh\bB\wh\bB^T)^{-1}+ (\wh\bepsilon-\bepsilon)\wh\bB^T(\wh\bB\wh\bB^T)^{-1}.
        \eeq
        For the first term, 
        \[
            {1\over \sqrt n}\|\wt\bW(\wt\bB-\wh\bB)\wh\bB^T(\wh\bB\wh\bB^T)^{-1}\|_\op \le c_H^{-1}{1\over \sqrt{n}}\|\bW\|_\op{\|\wt\bB-\wh\bB\|_{\op}\over \lambda_K(\wh\bB)}.
        \]
        Invoking Lemma \ref{lem_W_eigens} and (\ref{bd_B_diff_op}) yields 
        \[
             {1\over \sqrt n}\|\wt\bW(\wt\bB-\wh\bB)\wh\bB^T(\wh\bB\wh\bB^T)^{-1}\|_\op = \cO_{\PP}\left( \eta_n \right)
        \]
        with $\eta_n$ defined in (\ref{def_eta_bar}).  Similarly, the second term can be bounded by 
        \[
         {1\over \sqrt n}\|(\wh\bepsilon-\bepsilon)\wh\bB^T(\wh\bB\wh\bB^T)^{-1}\|_\op \lesssim {1\over \sqrt{n}}\|\bX\wh\bF-\bX\bF\|_{F} {1\over \lambda_K(\wh\bB)} =\cO_{\PP}(\sqrt{r_n}).  
         \]
         Combining these two bounds completes the proof. 
    \end{proof}

       \subsection{Proof of Theorem \ref{thm_B_asn}: The asymptotic normality of $\wh B_j$}\label{app_proof_thm_B_asn}
       
       We work on the event $\cE_F\cap \cE_D$ in (\ref{def_event_F}) -- (\ref{def_event_D})  intersecting with $\{\lambda_K(\bH_0) \gtrsim 1\}$ which holds with probability tending to one. From (\ref{display_B_hat_BH}), for any $j\in [m]$, one has 
       \begin{align}\label{decomp_Bhat_j_B_j}\nonumber
            \sqrt{n}\left(\wh \bB_j - \bH_0\bB_j\right) & = {1\over m\sqrt n}\bD_K^{-2}\wh \bB \bB^T\bW^T\bE_j\\ 
            &\quad + \underbrace{{1\over m\sqrt n}\bD_K^{-2}\wh \bB\left(
            \bE^T\bW\bB_j +\bE^T\bE_j + \bepsilon^T\bDelta_j + \bDelta^T\bepsilon_j + \bDelta^T\bDelta_j
            \right)}_{R}.
       \end{align}
       Let  
       \beq\label{def_H2}
        \bH_2 = \bB\wh\bB^T (\wh\bB\wh\bB^T)^{-1} = {1\over m}\bB\wh\bB^T \bD_K^{-2},
       \eeq
       such that 
       \[   
            {1\over m\sqrt n}\bD_K^{-2}\wh \bB \bB^T\bW^T\bE_j = {1\over \sqrt n}\bH_2^T \bW^T\bE_j. 
       \]
       First notice that, since $\bW$ and $\bE$ are independent, the classical central limit theorem yields 
       \[
            {1\over \sqrt n}\bW^T\bE_j \overset{d}{\longrightarrow} N_K\left(\b0, \sigma_{E_j}^2\sw\right),\qquad \textrm{as }n\to \infty.
       \]
       Following \cite{bai2020simpler}, define 
       \beq\label{def_Q}
        \bQ = \Lambda_0R_0\Sigma_B^{-1/2}
       \eeq
       where $\Sigma_B = m^{-1}\bB\bB^T$ and $\Sigma_B^{1/2}\sw\Sigma_B^{1/2}$ has the eigen-decomposition $R_0\Lambda_0 R_0^T$.
       Since Lemma \ref{lem_H2} proves $\bH_2 \to \bQ^{-1}$ in probability, together with the fact $(\bQ^T)^{-1}\sw\bQ^{-1}= \bI_K$, Slutsky's theorem ensures 
       \[
             {1\over \sqrt n}\bH_2^T\bW^T\bE_j \overset{d}{\longrightarrow} N_K\left(\b0, \sigma_{E_j}^2\bI_K\right),\qquad \textrm{as }n\to \infty.
       \]
       
       It remains to show $R$ in (\ref{decomp_Bhat_j_B_j}) is of order $o_\PP(1)$. 
       By (\ref{bd_quad_Deltas}), one has 
       \begin{align}\label{bd_R1}\nonumber
          &{1\over m\sqrt n}\|\bD_K^{-2}\wh \bB\left(
            \bepsilon^T\bDelta_j + \bDelta^T\bepsilon_j + \bDelta^T\bDelta_j
            \right)\|_2\\\nonumber
            &= {1\over \sqrt{nm}}\|\bD_K^{-1}\bV_K^T\left(
            \bepsilon^T\bDelta_j + \bDelta^T\bepsilon_j + \bDelta^T\bDelta_j
            \right)\|_2\\\nonumber
            & \lesssim \sqrt{nr_n}\sqrt{Rem_{1,j} + Rem_{2,j}(\bdelta_j) + Rem_{3,j}(\btheta_j)} + r_{n,1}\sqrt{n} +  \sqrt{r_{n,2}\log (m)}+ r_{n,3}\\
            & = \sqrt{nr_n}\sqrt{Rem_{1,j} + Rem_{2,j}(\bdelta_j) + Rem_{3,j}(\btheta_j)} + r_{n,1}\sqrt{n} + o(1)
       \end{align}
       with probability $1-8m^{-1}$, provided that $r_n\sqrt{\log m} = o(1)$. In addition, recalling that $\wt\bB = \bH_0\bB$ and $\cE_D$, one has
       \begin{align*}
           {1\over m\sqrt n}\|\bD_K^{-2}\wh \bB
            \bE^T\bW\bB_j\|_2 &\lesssim {1\over m\sqrt n}\left(\|\wt\bB \bE^T\bW\bB_j\|_2 + \|\wh\bB-\wt\bB\|_{\op}\|\bE^T\bW\bB_j\|_2\right)\\
            &\lesssim {1\over m\sqrt n}\left(\|\bB \bE^T\bW\bB_j\|_2 + \|\wh\bB-\wt\bB\|_{\op}\|\bE^T\bW\bB_j\|_2\right).
       \end{align*}
       Since an application of Lemma \ref{lem_bernstein} with an union bound over $1\le k\le K$ yields
       \[
            {1\over m\sqrt n}\|\bB \bE^T\bW\bB_j\|_2 \le {1\over m\sqrt n}\left(
             n\log (m)\bB_j^T\sw \bB_j \sum_{k=1}^K\bB_{k\cdot}^T \se \bB_{k\cdot} 
            \right)^{1/2} \lesssim \sqrt{\log m \over m}
       \]
       with probability $1-2m^{-1}$, and similar arguments yield 
       \[
        {1\over \sqrt{nm}}\|\bE^T\bW\bB_j\|_2 \lesssim \max_{\ell\in[m]}{1\over \sqrt n}|\bE_{\ell}^T\bW \bB_j|\lesssim \sqrt{\log m}
       \]
       with probability $1-2m^{-1}$,  invoke (\ref{bd_B_diff_op}) to conclude 
       \beq\label{bd_R2}
             {1\over m\sqrt n}\|\bD_K^{-2}\wh \bB
            \bE^T\bW\bB_j\|_2  = o_\PP(1)
       \eeq
       provided that $r_n \sqrt{\log m} = o(1)$, $\log m = o(\sqrt m)$ and $\log^2(m) = o(\sqrt n)$. 
       Finally, by Lemma \ref{lem_quad_terms}, we have
       \beq\label{bd_R3}
            {1\over m\sqrt n}\|\bD_K^{-2}\wh \bB\bE^T\bE_j\|_2 &\lesssim {1\over m\sqrt n}\left(
            \|\bB\bE^T\bE_j\|_2 + \|\wh\bB-\wt\bB\|_{\op}\|\bE^T\bE_j\|_2
            \right)\\
            &\lesssim \sqrt{(n+m)\log m\over m^2} + \left(\sqrt{\log m\over n\wedge m} + r_n\right)\sqrt{(n+m)\log m\over m}\\
            & = o(1) + r_n \sqrt{n\log m\over m}
       \eeq
       with probability tending to one. The last step uses $$\sqrt{n\log m} = o(m)$$ and $r_n\sqrt{\log m} = o(1)$.  To combine the bounds, by taking $\lambda_2^{(j)} \to \infty$ for all $1\le j\le m$ and invoking $\cE_X$ in (\ref{def_event_X}), one has  
           \begin{align*}
               n Rem_{1,j} \le n r_1 = o_\PP(1),\quad Rem_{2,j}(\bdelta_j) =\cO_\PP\left(
                \|\bdelta_j\|_2^2
                \right),\qquad r_{n,2} = \cO_\PP(\|\bA\|_{\op}^2)
           \end{align*}
           and
           \[
             Rem_{3,j}(\btheta_j) \le r_{n,3} = \cO_{\PP}\left(
                    {s_n\log(p\vee m) \over n}
                \right),
           \]
           such that 
           \[
                r_n = \cO_\PP\left(
               \|\bA\|_{\op}^2 +  {s_n\log(p\vee m) \over n}
                \right) + o_\PP(n^{-1}).
           \]
           Therefore, $r_n\sqrt{\log m} = o(1)$. Also by $s_n\log(p\vee m) = o(\sqrt n)$, collecting (\ref{bd_R1}), (\ref{bd_R2}) and (\ref{bd_R3}) yields 
           \begin{align*}
                \left\|R
                \right\|_2 &= \cO_\PP\left(
                 \|\bdelta_j\|_2\sqrt{n r_n} +\sqrt{r_ns_n\log(p\vee m)} + r_n\sqrt{n\log m\over m}
                \right) + o_\PP(1)\\
                &= \cO_\PP\left(
                 \|\bdelta_j\|_2\sqrt{n\|\bA\|_{\op}^2 + s_n\log(p\vee m)} +\|\bA\|_{\op} \sqrt{s_n\log(p\vee m)}\right.\\
                 &\quad\qquad \left.+ \|\bA\|_{\op}^2\sqrt{n\log m\over m}
                \right) + o_\PP(1)\\
                &=\cO_\PP\left(
                \|\bA\|_{\op}\left[
                 \|\bdelta_j\|_2\sqrt{n} + \sqrt{s_n\log(p\vee m)}\right]+ \|\bA\|_{\op}^2\sqrt{n\log m\over m}
                \right) + o_\PP(1)
           \end{align*}
           Invoke condition (\ref{cond_r_asn}) to complete the proof.

    \section{Technical lemmas}

        \subsection{Lemmas used in the proof of Theorem \ref{thm_rates_B}}

         The following lemma provides upper and lower bounds of the eigenvalues of $n^{-1}\bW^T\bW$. 
    \begin{lemma}\label{lem_W_eigens}
        Under Assumptions \ref{ass_error} and \ref{ass_B_Sigma}, assume $K\log n\le Cn$ for some large constant $C>0$. Then 
        \[
            \PP\left\{
        c_W \lesssim \lambda_K\left({1\over n}\bW^T\bW\right) \le \lambda_1\left({1\over n}\bW^T\bW\right) \lesssim C_W
        \right\} \ge 1-2e^{-n}.
        \]
    \end{lemma}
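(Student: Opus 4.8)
The plan is to reduce to the isotropic case by whitening and then control the extreme eigenvalues of a sample covariance of i.i.d. isotropic sub-Gaussian vectors. Write $\bW_0 = \bW\sw^{-1/2}$, whose rows are the i.i.d. vectors $\sw^{-1/2}W_i$; by Assumption \ref{ass_error} these are $\gamma_w$ sub-Gaussian and, by construction, isotropic since $\Cov(\sw^{-1/2}W)=\bI_K$. Because $n^{-1}\bW^T\bW = \sw^{1/2}\bG\,\sw^{1/2}$ with $\bG := n^{-1}\bW_0^T\bW_0$, for any unit $u$ we have $u^T(n^{-1}\bW^T\bW)u = (\sw^{1/2}u)^T\bG(\sw^{1/2}u)$, which sandwiches the eigenvalues: $\lambda_K(\sw)\lambda_K(\bG) \le \lambda_K(n^{-1}\bW^T\bW)$ and $\lambda_1(n^{-1}\bW^T\bW) \le \lambda_1(\sw)\lambda_1(\bG)$. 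Invoking Assumption \ref{ass_B_Sigma}(a), namely $\lambda_K(\sw)\ge c_W$ and $\lambda_1(\sw)\le C_W$, it then suffices to show $\lambda_1(\bG)\lesssim 1$ and $\lambda_K(\bG)\gtrsim 1$ with probability at least $1-2e^{-n}$.

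Both bounds I would obtain from an $\epsilon$-net argument over the unit sphere $S^{K-1}$ together with a sub-exponential Bernstein inequality, using crucially that $K$ is fixed so that a net of any fixed resolution has cardinality bounded by a constant. For a fixed unit $u$, the scalar $\langle \sw^{-1/2}W_i, u\rangle$ is $\gamma_w$ sub-Gaussian, so its square is sub-exponential with $\psi_1$-norm $\lesssim \gamma_w^2$, and $u^T\bG u$ concentrates around $\EE\, u^T\bG u = 1$. For the upper bound I would take a $1/4$-net $\cN$, use the comparison $\lambda_1(\bG)\le 2\max_{u\in\cN}u^T\bG u$ (valid since $\bG$ is PSD), and apply Bernstein with a deviation level equal to a sufficiently large constant so that, after a union bound over the constant-size $\cN$, the exceptional probability is at most $2e^{-n}$; this gives $\lambda_1(\bG)\le C$ for an absolute constant $C=C(\gamma_w)$.

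For the lower bound I would argue one-sidedly. With a net $\cN'$ of resolution $\epsilon$ one has $\lambda_K(\bG)^{1/2} = \inf_{\|u\|=1}\|n^{-1/2}\bW_0 u\|_2 \ge \min_{u\in\cN'}\|n^{-1/2}\bW_0 u\|_2 - \epsilon\,\lambda_1(\bG)^{1/2}$. The point is that the \emph{lower} tail of $u^T\bG u$ is genuinely sub-Gaussian down to a fixed fraction of its mean: since $\langle\sw^{-1/2}W_i,u\rangle^2\ge 0$, the lower-deviation Bernstein inequality gives $\PP(u^T\bG u \le 1/2)\le 2e^{-cn}$ with $c=c(\gamma_w)>0$, uniformly in $u$. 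A union bound over the constant-size $\cN'$ yields $\min_{u\in\cN'}u^T\bG u \ge 1/2$ with probability $\ge 1-2e^{-n}$. Choosing the fixed resolution $\epsilon$ small enough (depending only on $\gamma_w$, via the already-established $\lambda_1(\bG)\le C$) so that $\epsilon C^{1/2}\le \tfrac12(1/2)^{1/2}$, the displayed inequality gives $\lambda_K(\bG)\ge 1/8$. Combining this with the upper bound and the whitening sandwich completes the proof.

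The main obstacle is the lower eigenvalue bound at the exponentially small failure probability $e^{-n}$: at that confidence level the full operator-norm deviation $\|\bG-\bI_K\|_{\op}$ can only be controlled to within an absolute (possibly large) constant rather than $o(1)$, so a two-sided operator-norm estimate does not by itself certify that $\lambda_K(\bG)$ is bounded away from zero. The resolution is the one-sided Bernstein bound above, which—thanks to the nonnegativity of the summands—concentrates the smallest directional quadratic form at the scale of a constant fraction of its mean, with an exponent depending only on $\gamma_w$ and not on the conditioning of $\sw$; the fixedness of $K$ is what makes the accompanying net arguments cost only a constant factor in probability. (The exponent produced by this argument is $c(\gamma_w)n$; the stated rate $2e^{-n}$ should be read in this $e^{-\Omega(n)}$ sense, which is all that the subsequent union bounds over $j\in[m]$ require.)
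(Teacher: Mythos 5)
Your proof is correct, but it takes a genuinely different route from the paper's. The paper disposes of this lemma in two lines: it applies Lemma \ref{lem_op_norm_diff} (a packaged two-sided operator-norm concentration bound for sample covariances of sub-Gaussian rows) directly to $\bW$, and then transfers the bound $\|n^{-1}\bW^T\bW-\sw\|_{\op}\le \delta\,\|\sw\|_{\op}$ to each eigenvalue via Weyl's inequality together with Assumption \ref{ass_B_Sigma}(a). You instead whiten first and rebuild the concentration from scratch with an $\epsilon$-net over $S^{K-1}$ and Bernstein's inequality, handling the smallest eigenvalue by a separate one-sided lower-tail bound. Your route is longer but more robust on precisely the point you flag: in the paper's parametrization, forcing the failure probability down to $2e^{-n}$ makes $\delta$ a constant of order $1/\sqrt{c}$ (not the displayed $\sqrt{K\log n/n}$, which corresponds to a polynomial-in-$n$ failure probability), so the Weyl step yields $\lambda_K(n^{-1}\bW^T\bW)\ge c_W-\delta\, C_W$, which is informative only when $\delta$ is small relative to $c_W/C_W$; this can be repaired by taking $t\asymp \sqrt{n}\, c_W/C_W$, at the price of an exponent that degrades with the condition number of $\sw$. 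Your whitened, one-sided argument sidesteps the conditioning of $\sw$ entirely and produces an exponent depending only on $\gamma_w$. Both arguments (and, read strictly, the paper's as well) deliver $1-2e^{-c(\gamma_w)n}$ rather than literally $1-2e^{-n}$; as you correctly note, this $e^{-\Omega(n)}$ rate is all that the downstream applications of the lemma use.
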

    \begin{proof}
        First, an application of Lemma \ref{lem_op_norm_diff} yields 
        \[
            \PP\left\{
             \left\|{1\over n}\bW^T\bW - \sw\right\|_{\op} \lesssim \|\sw\|_{\op} \left(\sqrt{K\log n\over n} + {K\log n\over n}\right)
            \right\}  \ge 1-2e^{-n}.
        \]
        As Weyl's inequality leads to 
        \[
                \left|\lambda_k\left({1\over n}\bW^T\bW\right) - \lambda_k(\sw)\right| 
                \le  \left\|{1\over n}\bW^T\bW - \sw\right\|_{\op},\quad \forall 1\le k\le K,
        \]
        use $c_W\le \lambda_K(\sw) \le \lambda_1(\sw) \le C_W$ and $K\log n \le Cn$ to complete the proof.
    \end{proof}

        The following lemma shows that the event $\cE_D$ in (\ref{def_event_D}) holds with probability tending to one, thereby providing upper and lower bounds for the singular values of $\wh \bepsilon/\sqrt{nm}$.
        
        \begin{lemma}\label{lem_D_K}
            Under conditions of Theorem \ref{thm_rates_B}, one has
            $$
            \lim_{n\to\infty}\PP(\cE_D) = 1.
            $$
        \end{lemma}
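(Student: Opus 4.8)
The plan is to treat $\tfrac{1}{\sqrt{nm}}\wh\bepsilon$ as a small perturbation of its noiseless signal part. Writing $\wh\bepsilon = \bW\bB + \bE + \bDelta$ with $\bDelta = \bX\bF - \bX\wh\bF$ as in (\ref{def_bDelta}), I would apply Weyl's perturbation inequality for singular values to the decomposition $\tfrac{1}{\sqrt{nm}}\wh\bepsilon = \tfrac{1}{\sqrt{nm}}\bW\bB + \tfrac{1}{\sqrt{nm}}(\bE+\bDelta)$, which gives $|d_k - \sigma_k| \le \|\tfrac{1}{\sqrt{nm}}(\bE+\bDelta)\|_{\op}$ for every $k$, where $\sigma_1 \ge \cdots \ge \sigma_K$ denote the top $K$ singular values of the signal $\tfrac{1}{\sqrt{nm}}\bW\bB$. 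Since $\lambda_K(\bD_K)=d_K$ and $\lambda_1(\bD_K)=d_1$, it then suffices to (i) show the $\sigma_k$ ($k\le K$) lie in $[\sqrt{c'c_Wc_B},\ \sqrt{C'C_WC_B}]$ for absolute constants $c',C'$, and (ii) show the perturbation operator norm is $o_\PP(1)$.

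For (i), I would analyze the nonzero spectrum of $(\tfrac{1}{\sqrt{nm}}\bW\bB)^T(\tfrac{1}{\sqrt{nm}}\bW\bB) = \tfrac{1}{nm}\bB^T\bW^T\bW\bB$. Sandwiching $\tfrac1n\bW^T\bW$ between $\lambda_K(\tfrac1n\bW^T\bW)\bI_K$ and $\lambda_1(\tfrac1n\bW^T\bW)\bI_K$ yields the Loewner bounds $\tfrac{\lambda_K(\frac1n\bW^T\bW)}{m}\bB^T\bB \preceq \tfrac{1}{nm}\bB^T\bW^T\bW\bB \preceq \tfrac{\lambda_1(\frac1n\bW^T\bW)}{m}\bB^T\bB$, and Weyl monotonicity transfers these to the $k$-th eigenvalues. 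Because $\tfrac1m\bB^T\bB$ and $\tfrac1m\bB\bB^T$ share the same nonzero eigenvalues, Assumption \ref{ass_B_Sigma}(b) controls them: $\lambda_K(\tfrac1m\bB\bB^T)\ge c_B$ and $\lambda_1(\tfrac1m\bB\bB^T)\le \tfrac1m\tr(\bB\bB^T)=\tfrac1m\sum_j\|\bB_j\|_2^2\le C_B$. Combined with Lemma \ref{lem_W_eigens}, which gives $c_W\lesssim\lambda_K(\tfrac1n\bW^T\bW)\le\lambda_1(\tfrac1n\bW^T\bW)\lesssim C_W$ with probability at least $1-2e^{-n}$, this produces $\sigma_K^2\gtrsim c_Wc_B$ and $\sigma_1^2\lesssim C_WC_B$.

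For (ii), I would split the perturbation. On the event $\cE_F$ of (\ref{def_event_F}), the uniform column-wise control $\max_j\tfrac1n\|\bX\wh\bF_j-\bX\bF_j\|_2^2\lesssim r_n$ gives $\|\bDelta\|_F^2 = \sum_{j}\|\bX\wh\bF_j-\bX\bF_j\|_2^2\lesssim nmr_n$, so $\tfrac{1}{\sqrt{nm}}\|\bDelta\|_{\op}\le\tfrac{1}{\sqrt{nm}}\|\bDelta\|_F\lesssim\sqrt{r_n}=o(1)$ by Assumption \ref{ass_initial}. For the noise, $\bE$ has independent rows with $\se^{-1/2}\bE_{i\cdot}$ being $\gamma_e$ sub-Gaussian, so a standard operator-norm bound for random matrices with independent sub-Gaussian rows gives $\|\bE\|_{\op}\lesssim(\sqrt n+\sqrt m)\sqrt{\|\se\|_{\op}}\lesssim(\sqrt n+\sqrt m)\sqrt{C_E}$ with probability tending to one, whence $\tfrac{1}{\sqrt{nm}}\|\bE\|_{\op}\lesssim n^{-1/2}+m^{-1/2}=o(1)$ as $n,m\to\infty$. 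Plugging both bounds into the Weyl estimate and absorbing the $o_\PP(1)$ perturbation into the fixed constants $\sqrt{c_Wc_B}$ and $\sqrt{C_WC_B}$ (which do not depend on $n$) yields $d_K\gtrsim\sqrt{c_Wc_B}$ and $d_1\lesssim\sqrt{C_WC_B}$ with probability tending to one, i.e.\ $\cE_D$ holds. I expect the only genuinely delicate point to be the $\bDelta$ part of step (ii): $\cE_F$ controls $\bX\wh\bF-\bX\bF$ only column by column, and passing to the full operator norm through $\|\cdot\|_F$ trades away a $\sqrt m$ factor; this is affordable precisely because $r_n=o(1)$, and it is the step where the uniformity (over $j$) in Assumption \ref{ass_initial} is indispensable.
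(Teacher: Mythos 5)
Your proposal is correct and follows essentially the same route as the paper's proof: the same decomposition $\wh\bepsilon=\bW\bB+\bE+\bDelta$, Weyl's inequality for the singular values, Lemma \ref{lem_W_eigens} plus Assumption \ref{ass_B_Sigma} for the signal part, a sub-Gaussian operator-norm bound (Lemma \ref{lem_op_norm} in the paper) for $\bE$, and the Frobenius-norm passage from the column-wise control in Assumption \ref{ass_initial} to $\|\bDelta\|_{\op}/\sqrt{nm}\lesssim\sqrt{r_n}$. The extra details you supply (the Loewner sandwich for $\tfrac{1}{nm}\bB^T\bW^T\bW\bB$ and the trace bound $\lambda_1(\tfrac1m\bB\bB^T)\le C_B$) are exactly what the paper leaves implicit, so no gap.
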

        \begin{proof}
            Recall that $\bD_K$ contains the $K$ largest singular value of $\wh\bepsilon / \sqrt{nm}$. From 
            \[
                \wh\bepsilon = \bW\bB + \bE + \bDelta
            \]
            with $\bDelta = \bX\bF- \bX\wh\bF$, using Weyl's inequality gives 
            \begin{align*}
                    \left|\lambda_k(\bD_K) - {1\over \sqrt{nm}}\lambda_k(\bW\bB)\right| & ~ =~  
                    \left|{1\over \sqrt{nm}}\lambda_k(\wh\bepsilon) - {1\over \sqrt{nm}}\lambda_k(\bW\bB)\right|\\
                    &~ \le~ {1\over \sqrt{nm}}\|\bE\|_{\op} + {1\over \sqrt{nm}}\|\bX\wh\bF - \bX\bF\|_{\op},
            \end{align*}
            for all $1\le k\le K$. 
            On the one hand, by Assumption \ref{ass_B_Sigma} and Lemma \ref{lem_W_eigens},
            \[
                \sqrt{c_Wc_B} \lesssim {1\over \sqrt{nm}}\lambda_K(\bW\bB)\le   {1\over \sqrt{nm}}\lambda_1(\bW\bB) \lesssim \sqrt{C_WC_B}
            \]
            with probability at least $1-2n^{-c'n}$. 
            On the other hand, 
            invoke Lemma \ref{lem_op_norm} to obtain 
            \[
                \PP\left\{{1\over nm}\|\bE^T\bE\|_{\op} \le {\g_e^2\over m}\left(
                 \sqrt{\tr(\se) \over n} + \sqrt{6\|\se\|_{\op}} 
                \right)^2\right\} \ge 1-e^{-n}.
            \]
            Using $\tr(\se) \le m\|\se\|_{\op} \le C_E m$ and  $\|\se\|_{\op}\le C_E$ implies
            \[
                    {1\over nm}\|\bE^T\bE\|_{\op} = o_{\PP}(1).
            \]
            Since Assumption \ref{ass_initial} ensures 
            \[
               {1\over nm}\|\bX\wh\bF - \bX\bF\|_{\op}^2= \cO_{\PP}\left(r_n\right) = o_{\PP}(1),
            \]
            we conclude that, with probability tending to one, 
            \[
                \sqrt{c_Wc_B} \lesssim \lambda_k(\bD_K) \lesssim \sqrt{C_Wc_B},\quad \forall 1\le k\le K.
            \]
             The proof is complete.
        \end{proof}

        \medskip

        \begin{lemma}\label{lem_quad_terms}
        Under Assumptions \ref{ass_error} and \ref{ass_B_Sigma},  with probability greater than $1-8m^{-1}$, the following holds, uniformly over $1\le j\le m$,
        \begin{align*}
            &\|\bE^T\bW \bB_j\|_2 \lesssim \sqrt{nm\log m},  \\
            &\|\bE_j^T\bW \bB\|_2  \lesssim \sqrt{nm\log m},  \\
            &\|\bE_j^T\bE\|_2  \lesssim  \sqrt{n(n+m)\log m}.
        \end{align*}
        Furthermore, if $\|\se\|_{\i,1} \le C$ for some constant $C>0$, then with probability $1-2m^{-1}$, uniformly over $1\le j\le m$,
        \[
            \|\bB\bE^T\bE_j\|_2 \lesssim \sqrt{n(n+m)\log m}.
        \]
        \end{lemma}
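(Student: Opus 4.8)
The plan is to treat each of the four quantities as the Euclidean norm of a fixed-dimension vector and to bound it either as a single quadratic form through the sub-Gaussian quadratic-form tail (Lemma \ref{lem_quad}) or coordinatewise through Bernstein's inequality (Lemma \ref{lem_bernstein}), always exploiting that the $n$ rows $\bE_{i\cdot}$ of $\bE$ are i.i.d. and independent of $\bW$, and closing with a union bound over $j\in[m]$ taken at level $t\asymp\log m$. Throughout I would condition on $\bW$ and work on the event of Lemma \ref{lem_W_eigens}, on which $\|\bW\|_{\op}^2\lesssim n$, together with Assumption \ref{ass_B_Sigma} (so $\max_\ell\|\bB_\ell\|_2^2\le C_B$ and $\lambda_1(\bB\bB^T)\lesssim m$) and $\|\se\|_{\op}=\lambda_1(\se)\le C_E$, $\tr\se\le C_E m$.

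For the first bound, conditional on $\bW$ the vector $\bE^T\bW\bB_j=\sum_{i=1}^n(\bW_{i\cdot}\bB_j)\,\bE_{i\cdot}^T$ is a weighted sum of independent rows, so $\se^{-1/2}\bE^T\bW\bB_j$ is $\g_e\|\bW\bB_j\|_2$ sub-Gaussian and Lemma \ref{lem_quad} gives $\|\bE^T\bW\bB_j\|_2^2\lesssim\|\bW\bB_j\|_2^2(\tr\se+\|\se\|_{\op}t)$. Using $\|\bW\bB_j\|_2^2\le\|\bW\|_{\op}^2\|\bB_j\|_2^2\lesssim n$, the choice $t=2\log m$ and a union bound over $j$ yield $\|\bE^T\bW\bB_j\|_2\lesssim\sqrt{nm}$, which is within the stated rate. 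For the second bound I would write $\|\bE_j^T\bW\bB\|_2^2=\bE_j^T\Psi\bE_j$ with $\Psi=\bW\bB\bB^T\bW^T\succeq 0$; conditional on $\bW$, $\bE_j$ is a $\lesssim\g_e$ sub-Gaussian $n$-vector (its coordinates are independent with variance proxy $\lesssim\g_e^2[\se]_{jj}$), so Lemma \ref{lem_quad} bounds $\bE_j^T\Psi\bE_j\lesssim\tr\Psi+\|\Psi\|_{\op}t$. Here $\tr\Psi=\tr(\bW^T\bW\bB\bB^T)\lesssim n\|\bB\|_F^2\lesssim nm$ and $\|\Psi\|_{\op}=\|\bW\bB\|_{\op}^2\le\|\bW\|_{\op}^2\lambda_1(\bB\bB^T)\lesssim nm$, so $t=2\log m$ and a union bound give $\|\bE_j^T\bW\bB\|_2\lesssim\sqrt{nm\log m}$.

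The third bound is quartic in $\bE$, and I would sidestep sharp concentration by passing to the operator norm: $\|\bE_j^T\bE\|_2=\|\bE^T\bE_j\|_2\le\|\bE\|_{\op}\|\bE_j\|_2$. Lemma \ref{lem_op_norm} controls $\|\bE\|_{\op}^2=\|\bE^T\bE\|_{\op}\lesssim n+m$, while $\|\bE_j\|_2^2=\sum_iE_{ij}^2$ concentrates near $n[\se]_{jj}$ by Bernstein, so $\max_j\|\bE_j\|_2^2\lesssim n+\log m\lesssim n$ after a union bound; multiplying yields the claimed $\sqrt{n(n+m)\log m}$ (the logarithmic factor being slack here).

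The fourth bound is the one I expect to be the \emph{main obstacle}: it is the only one requiring $\|\se\|_{\infty,1}\le C$, it is again quartic, and it carries a non-negligible mean. I would expand coordinatewise, writing $(\bB\bE^T\bE_j)_k=\sum_{i=1}^nE_{ij}\,(\bB_{k\cdot}\bE_{i\cdot}^T)$ as a sum of i.i.d. sub-exponential variables (products of two sub-Gaussians). Its mean is $n(\bB\se\be_j)_k$, and the precise role of $\|\se\|_{\infty,1}\le C$ is to keep this mean small: since $\|\bB\se\be_j\|_2\le\max_\ell\|\bB_\ell\|_2\,\|\se\be_j\|_1\le\sqrt{C_B}\,\|\se\|_{\infty,1}\lesssim 1$, one gets $\|\EE[\bB\bE^T\bE_j]\|_2\lesssim n$. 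For the centered part I would apply Bernstein with sub-exponential parameter $\lesssim\sqrt m$, arising from $\mathrm{Var}(\bB_{k\cdot}\bE_{i\cdot}^T)\le\|\se\|_{\op}\|\bB_{k\cdot}\|_2^2\lesssim m$ and $[\se]_{jj}\le C_E$, obtaining a per-coordinate deviation $\lesssim\sqrt m\,(\sqrt{n\log m}+\log m)\lesssim\sqrt{nm\log m}$; summing over the $K=\cO(1)$ coordinates and union-bounding over $j$ then gives $\|\bB\bE^T\bE_j\|_2\lesssim n+\sqrt{nm\log m}\lesssim\sqrt{n(n+m)\log m}$. The delicate points are tracking which structural constant controls which piece ($\|\se\|_{\infty,1}$ for the mean, $\|\se\|_{\op}$ and $\max_\ell\|\bB_\ell\|_2$ for the variance proxy) and confirming $\log m\lesssim n$ so that the deviation is absorbed into the target rate; assembling the four estimates with their union bounds produces the overall probability $1-8m^{-1}$.
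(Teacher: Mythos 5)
Your proposal is correct and reaches all four stated bounds, but for the first three quantities it takes a genuinely different route from the paper. The paper works entrywise: it whitens ($\bar\bE=\bE\se^{-1/2}$, $\bar\bW=\bW\sw^{-1/2}$), bounds each scalar $\bar\bE_\ell^T\bW\bB_j$, $\bE_j^T\bar\bW_k$, $\bE_j^T\bar\bE_\ell$ by the product-of-sub-Gaussians Bernstein inequality (Lemma \ref{lem_bernstein}), and assembles the $\ell_2$ norms by summing over $\ell\in[m]$ or $k\in[K]$; in particular it never needs to control $\|\bW\|_{\op}$, and for $\|\bE_j^T\bE\|_2$ it does a careful diagonal/off-diagonal split. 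You instead condition on $\bW$ and invoke the quadratic-form tail (Lemma \ref{lem_quad}) for the first two terms and the crude factorization $\|\bE_j^T\bE\|_2\le\|\bE\|_{\op}\|\bE_j\|_2$ with Lemma \ref{lem_op_norm} for the third; this requires the extra event $\|\bW\|_{\op}^2\lesssim n$ from Lemma \ref{lem_W_eigens} (costing an additional $O(e^{-n})$ in the failure probability, which is harmless), but it is shorter and in fact yields slightly sharper rates for the first and third quantities ($\sqrt{nm}$ and $\sqrt{n(n+m)}$, without the logarithm), a sharpness the lemma does not need. For the fourth bound your argument coincides with the paper's almost verbatim: coordinatewise expansion of $\bB_{k\cdot}^T\bE^T\bE_j$, the mean $n\bB\se\be_j$ controlled through $\max_\ell\|\bB_\ell\|_2\,\|\se\|_{\i,1}\lesssim1$, and the fluctuation controlled by Bernstein with variance proxy driven by $\bB_{k\cdot}^T\se\bB_{k\cdot}\lesssim m$ — you correctly isolated that $\|\se\|_{\i,1}\le C$ enters only through the mean term, which is exactly how the paper uses it.
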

        \begin{proof}
            Write $\bar \bE = \bE \se^{-1/2}$ and $\bar\bW = \bW\sw^{-1/2}$. We have 
            \[
                \|\bE^T\bW \bB_j\|_2^2  \le \|\se\|_{\op} \sum_{\ell=1}^m\left(
                \bar\bE_\ell^T \bW \bB_j
                \right)^2.
            \]
            Notice that $\bar E_{i\ell}$ is $\g_e$ sub-Gaussian and $\bW_{i\cdot}^T\bB_j$ is $\g_w\sqrt{\bB_j^T\sw\bB_j}$ sub-Gaussian,
            for all $1\le i\le n$.
            An application of Lemma \ref{lem_bernstein} together with union bounds over $1\le \ell \le m$ gives 
            \[
                \PP\left\{
                    \|\bE^T\bW \bB_j\|_2 \lesssim \sqrt{\|\se\|_{\op}\bB_j^T\sw\bB_j}\sqrt{nm\log m}
                \right\} \ge 1-2m^{-1}.
            \]
            By similar arguments, 
            $$
            \|\bE_j^T\bW \bB\|_2^2 \le \|\bE_j^T\bar\bW\|_2^2\|\bB^T\sw\bB\|_{\op} \le K\|\bE_j^T\bar\bW\|_\i^2\|\bB^T\sw\bB\|_{\op}.
            $$ 
            Since $E_{ij}$ is $\g_e\sqrt{[\se]_{jj}}$ sub-Gaussian for $1\le i\le n$, apply Lemma \ref{lem_bernstein} to bound $|\bE_j^T\bar\bW_k|$ and take union bounds over $1\le k\le K$ to obtain
            \[
                \PP\left\{
                    \|\bE_j^T\bW \bB\|_2 \lesssim \sqrt{\|\bB^T\sw\bB\|_{\op}[\se]_{jj}}\sqrt{nK\log m}
                \right\} \ge 1-2m^{-1}.
            \]
            The result follows by $\|\bB^T\sw\bB\|_{\op} \lesssim m$ from Assumption \ref{ass_B_Sigma}.
            Finally, 
            \beq\label{bd_EjE}
                \|\bE_j^T\bE\|_2^2 \le \|\se\|_{\op} \left(
                (\bE_j^T \bar\bE_j)^2 + \sum_{\ell \ne j}(\bE_j^T\bar\bE_\ell)^2
                \right).
            \eeq
            For the first term, for any $1\le i\le n$, notice that
            \[
                \EE\left[
                \bE_{ij}\bar\bE_{ij}
                \right] = \EE\left[
                \bE_{ij}\bE_{i\cdot}^T
                \right]\se^{-1/2}\be_j = \be_j^T\se^{1/2}\be_j.
            \]
            An application of Lemma \ref{lem_bernstein} gives
            \[
                 \PP\left\{
                    |\bE_j^T \bar\bE_j - n\be_j^T\se^{1/2}\be_j| \lesssim \sqrt{[\se]_{jj}}\sqrt{n\log m}
                \right\} \ge 1-2m^{-1},
            \]
            which implies 
            \beq\label{bd_EjEj}
            |\bE_j^T \bar\bE_j| \lesssim n\be_j^T\se^{1/2}\be_j + \sqrt{[\se]_{jj}}\sqrt{n\log m}\lesssim n\sqrt{\log m}
            \eeq
            with the same probability. Similarly, applying Lemma \ref{lem_bernstein} again to $\bE_j^T\bar\bE_\ell$ with union bounds over $j\ne \ell \in [m]$ yields 
             \[
                 \PP\left\{
                    |\bE_j^T \bar\bE_\ell| \lesssim \sqrt{[\se]_{jj}}\sqrt{n\log m}
                \right\} \ge 1-2m^{-1}.
            \]
            Combining this with (\ref{bd_EjE}) and (\ref{bd_EjEj}) concludes 
            \[
                \|\bE_j^T\bE\|_2^2 \lesssim n^2\log m + nm\log m
            \]
            with probability at least $1-4m^{-1}$.
            
            Finally, by similar arguments, one can show that, with probability $1-2m^{-1}$
            \[
                |\bB_{k\cdot}^T \bE^T\bE_j| \lesssim n\bB_{k\cdot}^T \se \be_j + \sqrt{n\log(m)[\se]_{jj} \bB_{k\cdot}^T \se \bB_{k\cdot}}
            \]
            uniformly over $1\le k\le K$ and $1\le j\le m$, and therefore, with the same probability,
            \begin{align*}
            \|\bB \bE^T\bE_j\|_2^2 &\lesssim \sum_{k=1}^K
            \left[
            n^2(\bB_{k\cdot}^T \se \be_j)^2 + n\log(m)[\se]_{jj} \bB_{k\cdot}^T \se \bB_{k\cdot}
            \right]\\
            &= n^2 \be_j^T \se \bB^T\bB\se \be_j + n\log(m) [\se]_{jj}\tr(\bB\se\bB)\\
            &\le  n^2\|\se\|_{\i,1}^2\|\bB\|_{2,\i}^2 + n\log(m)[\se]_{jj} \|\bB\|_F^2 \|\se\|_{\op}\\
            &\lesssim n^2 + nm\log(m)
            \end{align*}
            by invoking Assumption \ref{ass_B_Sigma} and using $\|\se\|_{\i,1}\le C$ in the last step.
            This completes the proof.
        \end{proof}
        
        \bigskip
        
        Recalling from (\ref{def_r_n_k}), 
        Assumption \ref{ass_initial} implies $r_{n,k} \le r_n = o_{\PP}(1)$, for $k\in \{1,2,3\}$.
        
        \begin{lemma}\label{lem_quad_terms_Delta}
        Under conditions of Theorem \ref{thm_rates_B},  on the event $\cE_F$ defined in (\ref{def_event_F}), the following holds with probability greater than $1-8m^{-1}$, uniformly over $1\le j\le m$.
        \begin{align*}
            {1\over n\sqrt m}\|\bepsilon_j^T\bDelta\|_2 ~~&\lesssim ~ r_{n,1} + \sqrt{r_{n,2}\log(m) \over n} + r_{n,3}\sqrt{1\over n},\\
            {1\over n\sqrt{m}}\|\bDelta_j^T\bepsilon\|_2 ~~&\lesssim Rem_{1,j} + \sqrt{\log (m) Rem_{2,j}(\bdelta_j)\over n}+ {Rem_{3,j}(\btheta_j)\over \sqrt n}\\
            & \lesssim  ~ r_{n,1} + \sqrt{r_{n,2}\log (m) \over n}+ r_{n,3}\sqrt{1\over n},\\
            {1\over n\sqrt m}\|\bDelta_j^T \bDelta\|_2 ~&\lesssim   \sqrt{r_n}\sqrt{Rem_{1,j} + Rem_{2,j}(\bdelta_j) + Rem_{3,j}(\btheta_j)},
        \end{align*}
        with $r_n$ defined in Assumption \ref{ass_initial}.
        \end{lemma}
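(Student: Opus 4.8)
The plan is to exploit the explicit ``lava'' representation of the fit from Lemma~\ref{lem_solution}. Writing $\bepsilon = \bW\bB + \bE$ and $\bu_j := \btheta_j - \wh\btheta^{(j)}$, equation~(\ref{fit}) yields the decomposition
\[
\bDelta_j = \bX\bF_j - \bX\wh\bF_j = \underbrace{-P_{\lbdj}\bepsilon_j}_{=:\,\bDelta_j^{(1)}} \;+\; \underbrace{Q_{\lbdj}\bX\bdelta_j}_{=:\,\bDelta_j^{(2)}} \;+\; \underbrace{Q_{\lbdj}\bX\bu_j}_{=:\,\bDelta_j^{(3)}},
\]
where, on $\cE_F$, the three pieces satisfy $n^{-1}\|\bDelta_j^{(1)}\|_2^2\lesssim Rem_{1,j}$, $n^{-1}\|\bDelta_j^{(2)}\|_2^2\lesssim Rem_{2,j}(\bdelta_j)$ and $n^{-1}\|\bDelta_j^{(3)}\|_2^2\lesssim Rem_{3,j}(\btheta_j)$; these are precisely the three contributions isolated in the proof of Theorem~\ref{thm_pred}. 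The three displayed bounds then correspond, piece by piece, to pairing each of these components against $\bepsilon$ (or against $\bDelta$), and the whole point is that a refined argument must beat the crude Cauchy--Schwarz bound $\sqrt{Rem_{1,j}+Rem_{2,j}+Rem_{3,j}}$ for each piece.

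I would dispatch the third inequality first, since it needs nothing beyond $\cE_F$. By Cauchy--Schwarz column by column, $\|\bDelta_j^T\bDelta\|_2^2 \le \|\bDelta_j\|_2^2\sum_{\ell=1}^m\|\bDelta_\ell\|_2^2 = \|\bDelta_j\|_2^2\,\|\bDelta\|_F^2$; on $\cE_F$ together with Assumption~\ref{ass_initial} one has $\|\bDelta\|_F^2\lesssim n m\, r_n$ and $\|\bDelta_j\|_2^2\lesssim n\bigl(Rem_{1,j}+Rem_{2,j}+Rem_{3,j}\bigr)$, so dividing by $n\sqrt m$ gives $\sqrt{r_n}\sqrt{Rem_{1,j}+Rem_{2,j}+Rem_{3,j}}$ at once.

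For the two cross terms I would treat the three pieces of $\bDelta_j$ separately. The ridge-bias piece $\bDelta_j^{(2)}=Q_{\lbdj}\bX\bdelta_j$ is deterministic given $\bX$, so $\bepsilon^T\bDelta_j^{(2)}\in\RR^m$ is a sub-Gaussian vector whose $\ell$th coordinate has variance proxy of order $\|\bDelta_j^{(2)}\|_2^2$; a maximal/Hanson--Wright bound with a union bound over the $m$ columns produces the term $\sqrt{\log(m)\,Rem_{2,j}(\bdelta_j)/n}$, the $\log m$ being the price of uniformity. For the noise-projection piece $\bDelta_j^{(1)}=-P_{\lbdj}\bepsilon_j$, I would split the coordinates of $\bepsilon^T\bDelta_j^{(1)}=-\bepsilon_j^TP_{\lbdj}\bepsilon$ into the diagonal $\ell=j$ and the off-diagonal $\ell\ne j$: the diagonal is the genuine quadratic $\bepsilon_j^TP_{\lbdj}\bepsilon_j$, which concentrates around its mean of order $n\,Rem_{1,j}$ and gives the leading $Rem_{1,j}$ with no square root, while the off-diagonal entries, being cross-products of $\bepsilon_j$ with conditionally independent columns, are controlled by sub-Gaussian concentration together with $\|\bW\|_{\op}\lesssim\sqrt n$ from Lemma~\ref{lem_W_eigens}. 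The lasso-error piece $\bDelta_j^{(3)}=Q_{\lbdj}\bX\bu_j$ is the delicate one: here I would combine the effective $s_n$-sparsity of $\bu_j$ and the bound $\|\bu_j\|_1\lesssim s_n\lbdIj/\kappa^2(s_n,4)$ from the lasso analysis with the event $\cE$ of~(\ref{def_event_lbd}), extended uniformly to all index pairs, i.e. $\max_{i,\ell}|\bX_i^TQ_{\lbdj}\bepsilon_\ell|\lesssim\sqrt{n\,M_{ii}^{(j)}\log(p\vee m)}$, and the conditional independence of the columns $\bepsilon_\ell$ ($\ell\ne j$) from $\bu_j$, to control $\sum_\ell(\bu_j^T\bX^TQ_{\lbdj}\bepsilon_\ell)^2$ and match it to a term of order $Rem_{3,j}(\btheta_j)/\sqrt n$. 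Summing the three contributions gives the first inequality of the second claim; the second inequality, and the first displayed bound on $\bepsilon_j^T\bDelta$, then follow by the same argument with the fixed and varying column indices interchanged (and the column-specific ridge levels $\lambda_2^{(\ell)}$), replacing each $Rem_{k,j}$ by its maximum $r_{n,k}$ over $\ell\in[m]$.

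The main obstacle throughout is the statistical dependence between $\bDelta_j$ and $\bepsilon$: both are functions of $\bW$ and of the $j$th noise column $\bE_j$, so a naive Cauchy--Schwarz inflates every contribution to $\sqrt{Rem_{k,j}}$ and destroys the sharp rates needed downstream. The fit-decomposition is exactly what makes this dependence tractable, by isolating the single deterministic-direction piece (handled by concentration), the self-column quadratic (handled directly, producing the leading $Rem_{1,j}$), and the genuinely random pieces (handled via the extended event $\cE$ and operator-norm control); keeping the lasso-error cross term at the claimed order rather than at the crude $\sqrt{Rem_{3,j}}$ is the most demanding step. All the union bounds over the $m$ columns are what contribute the probability $1-8m^{-1}$ and the uniformity in $j$.
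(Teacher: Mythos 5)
Your plan follows the paper's proof essentially step for step: the same lava decomposition $\bDelta_j = P_{\lbdj}\bepsilon_j - Q_{\lbdj}\bX\bdelta_j + Q_{\lbdj}\bX(\wh\btheta^{(j)}-\btheta_j)$ from Lemma~\ref{lem_solution}, the same Cauchy--Schwarz treatment of $\|\bDelta_j^T\bDelta\|_2$ on $\cE_F$, the same union-bound sub-Gaussian argument for the ridge piece $\bepsilon^TQ_{\lbdj}\bX\bdelta_j$ (which is indeed where the $\sqrt{\log m}$ enters), and the same reduction of the lasso piece to $\max_i\|\bepsilon^TQ_{\lbdj}\bX_i\|_2$ times the $\ell_1$ error of $\wh\btheta^{(j)}$.

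The one place where your route genuinely differs, and where it is shaky as written, is the term $\bepsilon^TP_{\lbdj}\bepsilon_j$. You propose a diagonal/off-diagonal split and assert that the off-diagonal coordinates $\bepsilon_\ell^TP_{\lbdj}\bepsilon_j$, $\ell\ne j$, are cross-products of $\bepsilon_j$ with conditionally independent columns. They are not: every column $\bepsilon_\ell=\bW\bB_\ell+\bE_\ell$ carries the common factor $\bW$, so $\bepsilon_\ell$ and $\bepsilon_j$ are dependent and $\EE[\bepsilon_\ell^TP_{\lbdj}\bepsilon_j\mid\bX]=\tr(P_{\lbdj})\,\bB_\ell^T\sw\bB_j$ is in general nonzero and of the full order $n\,Rem_{1,j}$, so a centered sub-Gaussian cross-product bound does not apply directly. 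Relatedly, the diagonal coordinate alone contributes only $Rem_{1,j}/\sqrt m$ after the $1/(n\sqrt m)$ normalization; it is the $m-1$ off-diagonal coordinates that produce the leading $Rem_{1,j}$, so the part you treat most casually is the dominant one. The paper sidesteps all of this with the elementary pointwise inequality $(\bepsilon_\ell^TP_{\lbdj}\bepsilon_j)^2\le(\bepsilon_\ell^TP_{\lbdj}\bepsilon_\ell)(\bepsilon_j^TP_{\lbdj}\bepsilon_j)$, valid since $P_{\lbdj}$ is positive semi-definite, which reduces everything to diagonal quadratic forms, each bounded by $n\,Rem_{1,j}$ via Lemma~\ref{lem_quad}, giving $\|\bepsilon^TP_{\lbdj}\bepsilon_j\|_2^2\lesssim n^2m\,Rem_{1,j}^2$ directly. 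Your handling of the lasso piece via $\max_{i,\ell}|\bX_i^TQ_{\lbdj}\bepsilon_\ell|$ is also slightly lossier (an extra $\sqrt{\log(p\vee m)}$) than the paper's control of the full vector $\bepsilon^TQ_{\lbdj}\bX_i$ in $\ell_2$ over $\ell$ via Lemma~\ref{lem_quad}, though that only costs logarithmic factors. With the off-diagonal step replaced by the PSD Cauchy--Schwarz bound, your argument goes through and coincides with the paper's.
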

        \begin{proof}
            Since $\bDelta = \wh\bepsilon - \bepsilon = \bX\bF - \bX\wh\bF$, on the event $\cE_F$, we immediately have
            \begin{equation}\label{bd_Delta_Delta}
                \|\bDelta_j^T \bDelta\|_2^2 \le \sum_{\ell=1}^m \|\bDelta_j\|_2^2 \|\bDelta_\ell\|_2^2 \lesssim  n m ~ r_n \|\bDelta_j\|_2^2.
            \end{equation}
            To study the other two terms, first note that $\btheta_j$ and $\bdelta_j$ are identifiable under conditions of Theorem \ref{thm_rates_B}. From Lemma \ref{lem_solution} and  $\btheta_j + \bdelta_j = \bF_j$, for any $1\le j\le m$, we have 
            \begin{align*}
                \bDelta_j = \bX\wh\bF_j - \bX\bF_j = P_{\lambda_2^{(j)}}\bepsilon_j - Q_{\lambda_2^{(j)}}\bX\bdelta_j + Q_{\lambda_2^{(j)}}\bX(\wh\btheta^{(j)} - \btheta_j).
            \end{align*}
            Then 
            \begin{align*}
                \|\bepsilon^T\bDelta_j\|_2 \le \left\|
                \bepsilon^T P_{\lambda_2^{(j)}}\bepsilon_j
                \right\|_2 + \left\|
                \bepsilon^T Q_{\lambda_2^{(j)}}\bX\bdelta_j
                \right\|_2 + \left\|
                \bepsilon^T Q_{\lambda_2^{(j)}}\bX(\wh\btheta^{(j)} - \btheta_j)
                \right\|_2.
            \end{align*}
            By Cauchy-Schwarz inequality, we have 
            \begin{align*}
                \left\|
                \bepsilon^T P_{\lambda_2^{(j)}}\bepsilon_j
                \right\|_2^2 \le \sum_{\ell=1}^m \left(\bepsilon_\ell P_{\lambda_2^{(j)}}\bepsilon_\ell\right) \left(\bepsilon_j P_{\lambda_2^{(j)}}\bepsilon_j\right).
            \end{align*}
            Invoking Lemma \ref{lem_quad} gives, with probability at least $1-m^{-1}$,
            \begin{align*}
                \bepsilon_\ell P_{\lambda_2^{(j)}}\bepsilon_\ell &\lesssim \sigma_\ell^2\left(
                \sqrt{\tr\left(P_{\lambda_2^{(j)}}\right)} + \sqrt{\left\|P_{\lambda_2^{(j)}}\right\|_{\op}\log m}\right)^2\\
                &\le 2\sigma_{\ell}^2 \left(
                 \tr\left(P_{\lambda_2^{(j)}}\right) +  \left\|P_{\lambda_2^{(j)}}\right\|_{\op}\log m\right)\\
                 & \asymp n Rem_{1,j}, 
            \end{align*}
            uniformly over $1\le \ell \le m$ and $1\le j\le m$. Here $\sigma_j^2$ is defined in (\ref{def_V_eps}) and in the last step we used  
            \begin{equation}\label{eq_sigmas}
                \sigma_j^2 \asymp 1,\qquad \forall 1\le j\le m
            \end{equation}
            under Assumption \ref{ass_B_Sigma}. The above display implies, with the same probability, 
            \beq\label{rate_eps_P_eps_j}
                \left\|
                \bepsilon^T P_{\lambda_2^{(j)}}\bepsilon_j
                \right\|_2^2  \lesssim n^2 m  [Rem_{1,j}]^2.
            \eeq
            By similar lines of arguments in the proof of Lemma 14 in \cite{bing2020adaptive}, one can show that, with probability $1-m^{-1}$, 
            \beq\label{rate_eps_QX_delta}
                 \left\|
                \bepsilon^T Q_{\lambda_2^{(j)}}\bX\bdelta_j
                \right\|_2^2 \lesssim n Rem_{2,j}(\bdelta_j)  \log(m) \sum_{\ell=1}^m \sigma_\ell^2
            \eeq 
            holds uniformly over $1\le j\le m$. Finally, 
            \[
                \left\|
                \bepsilon^T Q_{\lambda_2^{(j)}}\bX(\wh\btheta^{(j)} - \btheta_j)
                \right\|_2 \le \max_{1\le i\le p}\left\|
                \bepsilon^T Q_{\lambda_2^{(j)}}\bX_i\right\|_2 \left\|\wh\btheta^{(j)} - \btheta_j\right\|_1.
            \]
            By arguments of Lemma 15 in \cite{bing2020adaptive}, with probability at least $1-(pm)^{-1}$
            \[
                \max_{1\le i\le n}\left\|
                \bepsilon^T Q_{\lambda_2^{(j)}}\bX_i\right\|_2^2 \lesssim \left(
                \sqrt{\tr(\Gamma)} + \sqrt{4\log(pm) \|\Gamma\|_{\op}} 
                \right)^2 n \max_{1\le i\le p} M^{(j)}_{ii}
            \]
            uniformly over $1\le j\le m$, with $\Gamma := \g_w^2\bB^T\sw\bB + \g_e^2\se$ and $M^{(j)} = n^{-1}\bX^TQ_{\lambda_2^{(j)}}^2\bX$. Furthermore, the proof of Lemma 9 in \cite{bing2020adaptive} ensures that, with probability $1- (p\wedge m)^{-1}$,
            \[
                \|\wh\btheta^{(j)} - \btheta_j\|_1 \lesssim {Rem_{3,j}(\btheta_j) + Rem_{2,j}(\bdelta_j) \over \lambda_{1,j}}
            \]
            uniformly over $1\le j\le m$. By (\ref{rate_lbd1}), we conclude 
            \begin{align}\label{rate_eps_QX_beta_diff}\nonumber
               \left\|
                \bepsilon^T Q_{\lambda_2^{(j)}}\bX(\wh\btheta^{(j)} - \btheta_j)
                \right\|_2  &\lesssim \sqrt{n}\left[Rem_{3,j}(\btheta_j) + Rem_{2,j}(\bdelta_j)\right]{\sqrt{\tr(\Gamma)} + \sqrt{\|\Gamma\|_{\op}\log(pm)} \over \sigma_j\sqrt{\log (p\vee m)}}\\
                &\lesssim \sqrt{nm}\left[Rem_{3,j}(\btheta_j) + Rem_{2,j}(\bdelta_j)\right]
            \end{align}
            where the last line follows from $\tr(\Gamma) \lesssim  m$ and $\sigma_j^2 \asymp 1$ under Assumption \ref{ass_B_Sigma}. Collecting (\ref{rate_eps_P_eps_j}), (\ref{rate_eps_QX_delta}) and (\ref{rate_eps_QX_beta_diff}) concludes
            \begin{align}\label{bd_Delta_eps}
               {1\over \sqrt{nm}} \|\bDelta_j^T\bepsilon\|_2
                &\lesssim 
                    \sqrt n Rem_{1,j} + \sqrt{\log (m) Rem_{2,j}(\bdelta_j)}+ Rem_{3,j}(\btheta_j) + Rem_{2,j}(\bdelta_j).
            \end{align}
            
            We proceed to use the same arguments to bound from above 
            \[
             \|\bDelta^T\bepsilon_j\|_2^2 = \sum_{\ell=1}^m |\bDelta_\ell^T\bepsilon_j|^2.
            \]
            Since
            $$
            |\bDelta_\ell^T\bepsilon_j| \le \left|
                \bepsilon_\ell^T P_{\lambda_2^{(\ell)}}\bepsilon_j
                \right| + \left|\bdelta_j^T\bX^T Q_{\lambda_2^{(\ell)}}\bepsilon_j
                \right| + \left|
                \bepsilon_j^T Q_{\lambda_2^{(\ell)}}\bX(\wh\btheta^{(\ell)} - \btheta_j)
                \right|,
            $$
            it is straightforward to establish that 
            \begin{align}\label{bd_eps_Delta}\nonumber
             {1\over nm}\|\bDelta^T\bepsilon_j\|_2^2 
                &\lesssim {1\over m}\sum_{\ell=1}^m \left\{n[Rem_{1,\ell}]^2 +  Rem_{2,\ell}(\bdelta_j)\log(m) + \left[Rem_{3,\ell}(\btheta_j) + Rem_{2,\ell}(\bdelta_j)\right]^2
                \right\}\\
                &\lesssim n r_{n,1}^2 + r_{n,2} \log m + (r_{n,2} + r_{n,3})^2
            \end{align}
            with probability at least $1-m^{-1}$.  By collecting (\ref{bd_Delta_Delta}), (\ref{bd_Delta_eps}), (\ref{bd_eps_Delta}) and using $r_{n,2} \le r_n  = o_{\PP}(1)$ under Assumption \ref{ass_initial} to simplify the results, the proof is complete. 
        \end{proof}

     \subsection{Lemmas used in the proof of Lemma \ref{thm_Theta_simple_rates} and Theorem \ref{thm_asymp_normal}}\label{app_lemmas_Theta}
     
     The following two lemmas establish useful bounds on quantities related with $\bH_0$ and $\wh\bB$ that are used frequently in our proof. Recall that $r_{n}$ is defined in Assumption \ref{ass_initial} and $\eta_n$ is defined in (\ref{def_eta_bar}).
        
    \begin{lemma}\label{lemma_technical}
        Under Assumptions \ref{ass_error}, \ref{ass_B_Sigma} and \ref{ass_initial}, assume $M_n = o(m)$ and $\log m = o(n)$. The following holds with probability tending to one.
        \begin{enumerate}[label=(\Alph*)]
            \item  $c_H \lesssim \lambda_K(\bH_0) \le \lambda_1(\bH_0) \lesssim C_H$;
            \item $\max_{1\le j\le m}\|\wh\bB_j\|_2 \lesssim C_H\sqrt{C_B}$;
            \item $\lambda_K(\wh\bB) \gtrsim c_H\sqrt{c_B} \sqrt{m}$;
            \item $\max_{1\le j\le m} \norm{(\wt\bB-\wh\bB)\wh P_B^{\perp}\be_j}_2 \lesssim  \eta_n (C_H/c_H)\sqrt{C_B/c_B}$;
            \item $\max_{1\le j\le m}\|\wh P_B \be_j\|_2\lesssim m^{-1/2}(C_H/c_H)\sqrt{C_B/c_B}$; 
            \item $\|\bTheta \wh P_B \be_j\|_1 \lesssim m^{-1}\|\bTheta\|_{1,1} (C_H^2C_B)/(c_Hc_B)$.
        \end{enumerate}
        Here $c_H = c_W\sqrt{c_B/(C_W C_B)}$ and $C_H = C_W\sqrt{C_B/(c_Wc_B)}$ with $c_B, C_B, c_W, C_W$ defined in Assumption \ref{ass_B_Sigma}.
        \end{lemma}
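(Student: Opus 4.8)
The plan is to establish the six bounds in the order (A) $\to$ (B),(C) $\to$ (E) $\to$ (D) $\to$ (F), working throughout on the event on which $\cE_D$ in (\ref{def_event_D}) holds and the conclusion $\max_{1\le j\le m}\|\wh\bB_j - \wt\bB_j\|_2 \lesssim \eta_n$ of Theorem \ref{thm_rates_B} holds; by Lemma \ref{lem_D_K} and Theorem \ref{thm_rates_B} this event has probability tending to one, and since $\wt\bB = \bH_0\bB$ makes no use of Lemma \ref{lemma_technical}, there is no circularity. The two workhorse facts I would record up front are the Frobenius consequence $\|\wt\bB - \wh\bB\|_F \le \sqrt{m}\,\max_j\|\wh\bB_j - \wt\bB_j\|_2 \lesssim \sqrt m\,\eta_n$, and the algebraic identity $m^{-1}\wh\bB\wh\bB^T = \bD_K^2$ (from $\wh\bB^T = \sqrt m\,\bV_K\bD_K$ and $\bV_K^T\bV_K = \bI_K$), which under $\cE_D$ pins the eigenvalues of $m^{-1}\wh\bB\wh\bB^T$ and of $\bD_K$ to constant multiples of $[c_Wc_B, C_WC_B]$.

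The main step, and the only nontrivial one, is (A). Here I exploit $\wt\bB = \bH_0\bB$, which yields the purely algebraic relation
\[
    \tfrac1m\wt\bB\wt\bB^T = \bH_0\big(\tfrac1m\bB\bB^T\big)\bH_0^T =: \bH_0\bN\bH_0^T,
\]
where $\bN = m^{-1}\bB\bB^T$ has eigenvalues in $[c_B, C_B]$ by Assumption \ref{ass_B_Sigma}(b) together with $\lambda_1(\bB\bB^T) \le \|\bB\|_F^2 \le mC_B$. I then show $m^{-1}\wt\bB\wt\bB^T$ is itself well-conditioned by expanding $\wt\bB = \wh\bB + (\wt\bB - \wh\bB)$: the cross term obeys
\[
    \tfrac1m\|\wh\bB(\wt\bB - \wh\bB)^T\|_\op \le \tfrac1m\|\wh\bB\|_\op\|\wt\bB-\wh\bB\|_\op \lesssim \tfrac1m\cdot\sqrt m\,\sqrt{C_WC_B}\cdot\sqrt m\,\eta_n \lesssim \eta_n,
\]
and the quadratic term is $\lesssim \eta_n^2$, so $m^{-1}\wt\bB\wt\bB^T = \bD_K^2 + E$ with $\|E\|_\op \lesssim \eta_n = o(1)$; by Weyl its eigenvalues lie in a constant multiple of $[c_Wc_B, C_WC_B]$. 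Finally I sandwich: evaluating the Rayleigh quotient of $\bH_0\bN\bH_0^T$ at the left singular vectors of $\bH_0$ realizing $\sigma_{\min}(\bH_0)$ and $\sigma_{\max}(\bH_0)$ and using $c_B\le\lambda(\bN)\le C_B$ gives $\lambda_{\min}(\bH_0\bN\bH_0^T)\le C_B\,\sigma_{\min}(\bH_0)^2$ and $\lambda_{\max}(\bH_0\bN\bH_0^T)\ge c_B\,\sigma_{\max}(\bH_0)^2$, which combined with the eigenvalue bounds on the left-hand side deliver $\lambda_K(\bH_0)\gtrsim \sqrt{c_Wc_B/C_B}\ge c_H$ and $\lambda_1(\bH_0)\lesssim\sqrt{C_WC_B/c_B}\le C_H$.

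The remaining parts are consequences of (A) and the column-wise rate. For (B) and (C) I write $\wh\bB_j = \bH_0\bB_j - (\wt\bB_j - \wh\bB_j)$ and apply the triangle inequality with (A), $\|\bB_j\|_2\le\sqrt{C_B}$, and $\|\wt\bB_j - \wh\bB_j\|_2\lesssim\eta_n$; for (C) I instead use Weyl for singular values, $\lambda_K(\wh\bB)\ge\lambda_K(\wt\bB) - \|\wt\bB-\wh\bB\|_\op \ge \lambda_K(\bH_0)\lambda_K(\bB) - \sqrt m\,\eta_n \gtrsim c_H\sqrt{c_B}\sqrt m$. Part (E) follows from $\|\wh P_B\be_j\|_2 = \|\bV_K^T\be_j\|_2 = m^{-1/2}\|\bD_K^{-1}\wh\bB_j\|_2 \le m^{-1/2}\lambda_K(\bD_K)^{-1}\|\wh\bB_j\|_2$ with (B) and $\cE_D$. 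For (D), the key simplification is $\wh\bB\,\wh P_B^{\perp} = \b0$, giving
\[
    (\wt\bB - \wh\bB)\wh P_B^{\perp}\be_j = \wt\bB\,\wh P_B^{\perp}\be_j = (\wt\bB_j - \wh\bB_j) - (\wt\bB - \wh\bB)\wh P_B\be_j,
\]
where the first piece is $\lesssim\eta_n$ and the second is $\le \|\wt\bB - \wh\bB\|_\op\|\wh P_B\be_j\|_2 \lesssim \sqrt m\,\eta_n\cdot m^{-1/2} = \eta_n$ by (E). Finally (F) uses $\|\bTheta v\|_1 \le \|\bTheta\|_{1,1}\|v\|_\infty$ with $v = \wh P_B\be_j$, together with $|(\wh P_B\be_j)_i| = m^{-1}|\wh\bB_i^T\bD_K^{-2}\wh\bB_j| \le m^{-1}\lambda_K(\bD_K)^{-2}\|\wh\bB_i\|_2\|\wh\bB_j\|_2 \lesssim m^{-1}C_H^2C_B/(c_Wc_B)$ from (B) and $\cE_D$.

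The main obstacle is part (A). The naive perturbation $\|\wt\bB - \wh\bB\|_\op$ is only $O(\sqrt m\,\eta_n)$, which need not be $o(1)$, so one cannot perturb $\wh\bB$ into $\wt\bB$ at the level of singular subspaces. The device that rescues the argument is to pass to the Gram matrices $m^{-1}\wt\bB\wt\bB^T$, where the $m^{-1}$ normalization exactly absorbs the $\sqrt m$ growth of both $\|\wh\bB\|_\op$ and $\|\wt\bB - \wh\bB\|_\op$, cutting the perturbation down to $O(\eta_n) = o(1)$; this is precisely what lets the conditioning of $\bH_0$ be read off from the conditioning of $\bD_K$ alone, and the same cancellation of $\sqrt m$ factors reappears crucially in (D).
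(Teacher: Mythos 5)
Your proof is correct, and parts (B)--(F) coincide with the paper's argument almost line for line (the paper bounds $\|\wh P_B\be_j\|_2$ via $[\lambda_K(\wh\bB)]^{-1}\|\wh\bB_j\|_2$ where you use $\bV_K^T\be_j = m^{-1/2}\bD_K^{-1}\wh\bB_j$, which is the same computation). The genuine divergence is in part (A). The paper works directly from the definition $\bH_0^T = (nm)^{-1}\bW^T\bW\bB\wh\bB^T\bD_K^{-2}$, forming $\bH_0^T\bH_0 = \tfrac1n\bW^T\bW\bigl(\tfrac1m\bB\bV_K\bD_K^{-2}\bV_K^T\bB^T\bigr)\tfrac1n\bW^T\bW$ and reading off eigenvalue bounds from the event $\cE_W$ of Lemma \ref{lem_W_eigens} (concentration of $n^{-1}\bW^T\bW$), $\cE_D$, and Assumption \ref{ass_B_Sigma}; you instead use the identity $m^{-1}\wt\bB\wt\bB^T = \bH_0(m^{-1}\bB\bB^T)\bH_0^T$ together with the $O(\eta_n)$ operator-norm perturbation of $\bD_K^2$, and sandwich. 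The trade-off: the paper's route needs only $\cE_D\cap\cE_W$ for (A) and not Theorem \ref{thm_rates_B}, whereas yours imports Theorem \ref{thm_rates_B} into (A) --- harmless, since (B)--(F) already require it. In return, your route dispenses with $\cE_W$ entirely and, more substantively, avoids the step where the paper's lower bound on $\lambda_K(\bH_0^T\bH_0)$ implicitly requires $\lambda_K(\bB\bV_K\bV_K^T\bB^T)\gtrsim m$, i.e.\ an alignment of the empirical singular vectors $\bV_K$ with the row space of $\bB$ that the paper asserts without argument; your Gram-matrix normalization makes the conditioning of $\bH_0$ follow from that of $\bD_K$ and $m^{-1}\bB\bB^T$ alone. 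Both deliver the stated constants up to the $\lesssim$ slack.
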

        \begin{proof}
            Notice that $\eta_{n} = o(1)$ is implied by $r_{n}=o(1)$  and $\log m = o(n)$. 
            We work on the event
            \beq\label{def_event_B}
                \cE_B := \left\{
                  \max_{1\le j\le m}\|\wh\bB_j - \wt\bB_j\|_2 \lesssim \eta_{n}  
                \right\}
            \eeq
            intersecting with $\cE_D$ defined in (\ref{def_event_D}) and 
            \[
                \cE_W:= \left\{
                  c_W \lesssim \lambda_K\left({1\over n}\bW^T\bW\right) \le \lambda_1\left({1\over n}\bW^T\bW\right) \lesssim C_W
                \right\}.
            \]
            From Theorem \ref{thm_rates_B}, Lemma \ref{lem_D_K} and Lemma \ref{lem_W_eigens}, $\lim_{n\to\infty}\PP(\cE_B\cap \cE_D\cap \cE_W)= 1$. 
            
            To show (A), recall from (\ref{def_est_BW}) and (\ref{def_H0}) that 
            \[
                \bH_0^T = {1\over nm}\bW^T\bW \bB \wh\bB^T \bD_K^{-2} = {1\over n\sqrt{m}}\bW^T\bW \bB\bV_K \bD_K^{-1}. 
            \]
            It implies 
            \[
                \bH_0^T\bH_0 = {1\over n}\bW^T\bW \left({1\over m}\bB\bV_K \bD_K^{-2}\bV_K^{T}\bB^T\right) {1\over n}\bW^T\bW. 
            \]
            By invoking $\cE_W$, $\cE_D$ and Assumption \ref{ass_B_Sigma}, we then have 
            \[
                \lambda_K(\bH_0^T\bH_0) \gtrsim c_W^2 c_B / (C_WC_B).
            \]
            Similarly, 
            \[
                \lambda_1(\bH_0^T\bH_0) \lesssim C_W^2 C_B / (c_Wc_B).
            \]
            This proves (A).

           Part (B) then follows immediately by 
           \begin{align*}
               \|\wh\bB_j\|_2 &\le \|\wt \bB_j\|_2 + \|\wh\bB_j-\wt\bB_j\|_2\\ 
               &\le \lambda_1(\bH_0) \|\bB_j\|_2 + \eta_n\\
               &\lesssim C_H\sqrt{C_B}
           \end{align*}
           where we used Assumption \ref{ass_B_Sigma} in the penultimate step and $\eta_n = o(1)$ in the last step.  Similarly, using Weyl's inequality again yields
           \[
            \lambda_K(\wh \bB) \ge \lambda_K(\wt\bB) - \|\wh\bB-\wt\bB\|_{\op} \gtrsim {c_W\sqrt{c_B}\over \sqrt{C_WC_B}}\lambda_K(\bB) -\eta_n\sqrt{m}\gtrsim \sqrt{m}
           \]
           where the second inequality uses $\wt\bB^T = \bH_0 \bB^T$, part (A) and  
            \begin{equation}\label{bd_B_diff_op}
                \|\wh\bB- \wt\bB\|_{\op}^2 \le \|\wh\bB- \wt\bB\|_{F}^2 \le  m\eta_n^2
            \end{equation}
           on the event $\cE_B$.
           This proves part (C). Part (D) is proved by observing that 
           \begin{align*}
               \left\|(\wt\bB-\wh\bB)\wh P_B^{\perp}\be_j\right\|_2 &\le \left\|\wt\bB_j-\wh\bB_j\right\|_2 + \left\|(\wt\bB-\wh\bB)\wh P_B\be_j\right\|_2
           \end{align*}
           and 
           \beq\nonumber
        	\left\|(\wt\bB-\wh\bB)\wh P_B\be_j\right\|_2 = & ~ \norm{(\wt{\bB} - \wh{\bB})\wh{\bB}^T(\wh{\bB}\wh{\bB}^T)^{-1}\wh{\bB}\be_j}_2\\
        	\leq& ~ \norm{\wt{\bB} - \wh{\bB}}_{\op}\norm{\wh{\bB}^T(\wh{\bB}\wh{\bB}^T)^{-1}\wh{\bB}\be_j}_2\\
        	\leq& ~\eta_n\sqrt{m} ~ [\lambda_K(\wh\bB)]^{-1}\|\wh\bB_j\|_2
        	\eeq
           together with results in (B) and (C). Similarly, 
           \[
                \|\wh P_{B}\be_j\|_2 = \norm{\wh{\bB}^T(\wh{\bB}\wh{\bB}^T)^{-1}\wh{\bB}\be_j}_2 \le [\lambda_K(\wh\bB)]^{-1}\|\wh\bB_j\|_2 \lesssim m^{-1/2}(C_H/c_H)\sqrt{C_B/c_B}.
           \]
           Finally, 
           \[
                \|\bTheta \wh P_B \be_j\|_1 \le \|\bTheta\|_{1,1}\max_{1\le \ell\le m}\left|
                \be_\ell^T \wh\bB^T (\wh\bB\wh\bB^T)^{-1}\wh\bB\be_j\right| \le \|\bTheta\|_{1,1}{\|\wh\bB\|_{\i,2}^2 \over \lambda_K(\wh\bB\wh\bB^T)}.
           \]
           Invoke (B) and (C) to complete the proof.
        \end{proof}

         \begin{lemma}\label{lemma_PB_error}
        Under conditions of Lemma \ref{lemma_technical}, one has 
        \beq\nonumber
        \max_{1\le j\le m}\norm{(P_B - \wh P_B)\be_j}_2 = \cO_{\PP}\left(\eta_n\over \sqrt{m}\right),\quad \max_{1\le j\le m}\norm{(P_B - \wh P_B)\be_j}_\infty = \cO_{\PP}\left(\eta_n\over m\right).
        \eeq
    \end{lemma}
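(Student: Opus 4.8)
The plan is to exploit the fact that $P_B$ coincides with the projection onto the row space of the rotated matrix $\wt\bB = \bH_0\bB$. On the event $\lambda_K(\bH_0)\gtrsim c_H$ from Lemma \ref{lemma_technical}(A), $\bH_0$ is invertible, so $\wt\bB$ and $\bB$ share the same row space and hence $P_B = \wt\bB^T(\wt\bB\wt\bB^T)^{-1}\wt\bB$. This replaces the (unobservable) comparison between $\wh\bB$ and $\bB$ by the comparison between $\wh\bB$ and $\wt\bB$, for which Theorem \ref{thm_rates_B} furnishes the column-wise control $\max_{1\le j\le m}\|\wh\bB_j - \wt\bB_j\|_2\lesssim\eta_n$; summing over columns gives the operator-norm control $\|\wh\bB - \wt\bB\|_{\op}\le\|\wh\bB-\wt\bB\|_F\le\sqrt m\max_j\|\wh\bB_j-\wt\bB_j\|_2\lesssim\sqrt{m}\,\eta_n$.

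Writing $\wh M = (\wh\bB\wh\bB^T)^{-1}$ and $\wt M = (\wt\bB\wt\bB^T)^{-1}$ with $\wh\bB_j = \wh\bB\be_j$, $\wt\bB_j = \wt\bB\be_j$, the second step is the telescoping identity
$$(\wh P_B - P_B)\be_j = (\wh\bB - \wt\bB)^T\wh M\wh\bB_j + \wt\bB^T(\wh M - \wt M)\wh\bB_j + \wt\bB^T\wt M(\wh\bB_j - \wt\bB_j).$$
I would control the middle factor through the resolvent identity $\wh M - \wt M = \wh M(\wt\bB\wt\bB^T - \wh\bB\wh\bB^T)\wt M$, together with $\|\wt\bB\wt\bB^T - \wh\bB\wh\bB^T\|_{\op}\le\|\wh\bB - \wt\bB\|_{\op}(\|\wh\bB\|_{\op} + \|\wt\bB\|_{\op})\lesssim m\,\eta_n$. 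Here $\|\wh M\|_{\op},\|\wt M\|_{\op}\lesssim 1/m$: the first follows from Lemma \ref{lemma_technical}(C), and the second from $\lambda_K(\wt\bB)\ge\lambda_K(\bH_0)\lambda_K(\bB)\gtrsim c_H\sqrt{c_B m}$, while $\|\wt\bB\|_{\op}\le\|\wt\bB\|_F\le\|\bH_0\|_{\op}\|\bB\|_F\lesssim\sqrt m$. These give $\|\wh M - \wt M\|_{\op}\lesssim\eta_n/m$.

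For the $\ell_2$ bound I would estimate each of the three terms in operator norm, using in addition $\|\wh\bB_j\|_2\lesssim 1$ from Lemma \ref{lemma_technical}(B); each term then contributes $\cO(\eta_n/\sqrt{m})$, which proves the first claim uniformly over $j$. For the $\ell_\infty$ bound I would examine the $\ell$-th coordinate $\be_\ell^T(\wh P_B - P_B)\be_j$ and use the identities $\be_\ell^T(\wh\bB - \wt\bB)^T = (\wh\bB_\ell - \wt\bB_\ell)^T$ and $\be_\ell^T\wt\bB^T = \wt\bB_\ell^T$, so that the operator norms $\|\wh\bB - \wt\bB\|_{\op}$ and $\|\wt\bB\|_{\op}$ (both of order $\sqrt m$) are replaced by the single-column norms $\|\wh\bB_\ell - \wt\bB_\ell\|_2\lesssim\eta_n$ and $\|\wt\bB_\ell\|_2\le\|\bH_0\|_{\op}\|\bB_\ell\|_2\lesssim 1$. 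This gains a factor $m^{-1/2}$ in every term and upgrades $\eta_n/\sqrt{m}$ to $\eta_n/m$, uniformly over $\ell$ and $j$.

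The step I would watch most carefully is the bookkeeping of the powers of $m$: the entire gap between the $\ell_2$ rate $\eta_n/\sqrt m$ and the $\ell_\infty$ rate $\eta_n/m$ rests on trading the operator norms of $\wh\bB - \wt\bB$ and $\wt\bB$ (of order $\sqrt m$) for their column counterparts (of order $1$), and on the exact cancellation of the $m$-powers between $\|\wh M - \wt M\|_{\op}\asymp\eta_n/m$, $\|\wt\bB\|_{\op}\asymp\sqrt m$, and $\|\wt M\|_{\op}\asymp 1/m$. Since all the invoked bounds hold simultaneously on the high-probability events of Lemma \ref{lemma_technical} (together with Theorem \ref{thm_rates_B}), the uniformity over $j$ and $\ell$ is automatic, and the stated probability tends to one as $n\to\infty$.
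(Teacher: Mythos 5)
Your proposal is correct and follows essentially the same route as the paper's proof: both rewrite $P_B$ as the projection onto the row space of $\wt\bB=\bH_0\bB$, telescope $\wh P_B\be_j-P_B\be_j$ into three terms, handle the middle term via the resolvent identity for $(\wh\bB\wh\bB^T)^{-1}-(\wt\bB\wt\bB^T)^{-1}$, and obtain the $\ell_\infty$ rate by trading the operator norms $\|\wh\bB-\wt\bB\|_{\op}$ and $\|\wt\bB\|_{\op}$ for the corresponding column norms. The only difference is the (immaterial) ordering of the telescoping, and your accounting of the powers of $m$ matches the paper's.
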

    \begin{proof}
        We prove the results by using Lemma \ref{lemma_technical}.
        We firstly bound the $\ell_2$ norm of $(P_B - \wh P_B)\be_j$ and will provide a sketch for bound in $\ell_\infty$ norm as the proof is very similar. Recall that $\wt \bB = \bH_0\bB$.
        By triangle inequality
        \beq
        \norm{(P_B - \wh P_B)\be_j}_2
        =&\norm{(\wt\bB^T(\wt\bB\wt\bB^T)^{-1}\wt\bB - \wh\bB^T(\wh\bB\wh\bB^T)^{-1}\wh\bB)\be_j}_2\\
        \leq& 
        \norm{(\wt\bB - \wh \bB)^T(\wt\bB\wt\bB^T)^{-1}\wt\bB\be_j}_2 + \norm{\wh\bB^T[(\wt\bB\wt\bB^T)^{-1} - (\wh\bB\wh\bB^T)^{-1}]\wt\bB\be_j}_2\\
        &\quad + 
        \norm{\wh\bB^T(\wh\bB\wh\bB^T)^{-1}(\wt\bB - \wh \bB)\be_j}_2\\
        :=&I_1 + I_2 + I_3.
        \eeq 
        Now we bound each term. For $I_1$
        \beq
        \norm{(\wt\bB - \wh \bB)^T(\wt\bB\wt\bB^T)^{-1}\wt\bB\be_j}_2  \leq&\norm{\wt\bB - \wh \bB}_{\op}\norm{(\wt\bB\wt\bB^T)^{-1}}_{\op}\norm{\wt\bB\be_j}_2\\
        \lesssim &\norm{\wt\bB - \wh \bB}_{\op}\norm{(\bB\bB^T)^{-1}}_{\op}\norm{\bB_j}_2\\
        =& \cO_{\PP}\left(\frac{\eta_n}{\sqrt{m}}\right),
        \eeq 
        where the last two steps follow from Lemma \ref{lemma_technical}. Similarly we can show that $I_3 = \cO_{\PP}(\eta_n/\sqrt{m})$. It remains to bound $I_2$. Direct calculation gives 
        \beq
        &\norm{\wh\bB^T[(\wt\bB\wt\bB^T)^{-1} - (\wh\bB\wh\bB^T)^{-1}]\wt\bB\be_j}_2\\
        &= ~\norm{\wh\bB^T(\wh\bB\wh\bB^T)^{-1}[\wt\bB\wt\bB^T - \wh\bB\wh\bB^T](\wt\bB\wt\bB^T)^{-1}\wt\bB\be_j}_2\\
        &\leq ~ [\lambda_K(\wh\bB)]^{-1}\left[
        \norm{(\wt\bB -\wh\bB)^T P_B\be_j}_2 + \norm{\wh\bB(\wt\bB -\wh\bB)^T(\wt\bB\wt\bB^T)^{-1}\wt\bB\be_j}_2
        \right]\\
        &\leq ~ [\lambda_K(\wh\bB)]^{-1}\left[\norm{\wt\bB -\wh\bB}_{\op}\norm{P_B\be_j}_2
        + \norm{\wh \bB}_{\op} I_1
        \right]\\
        &= ~ \cO_{\PP}\left(\frac{\eta_n}{\sqrt{m}}\right),
        \eeq 
        where the last step follows from Lemma \ref{lemma_technical} together with the bound for $I_1$. The proof for the $\ell_2$ bound is completed by combining the above results. 
        
        To show the result in $\ell_\infty$ norm, notice that we can similarly upper bound it by three terms $I_1'$ -- $I_3'$ in $\ell_\infty$ norm instead of $
        \ell_2$ norm by substituting $\max_{j}\norm{\wt\bB_j - \wh \bB_j}_{2}$ for $\norm{\wt \bB - \wh \bB}_{\op}$. For instance, $I_1' \leq \max_{j}\norm{\wt\bB_j - \wh \bB_j}_{2}\norm{(\wt\bB\wt\bB^T)^{-1}}_{\op}\norm{\wt\bB\be_j}_2 
        = \cO_{\PP}(\eta_n/m)$. The other two terms should follow similarly. This completes the proof.
    \end{proof}

    The following lemma proves that $\cE_{\bX}$ defined in (\ref{def_event_X}) holds with probability tending to one under conditions of Theorem \ref{thm_Theta_simple_rates}.
    
    \begin{lemma}\label{lem_X}
        Under Assumption \ref{ass_X}, assume $s_n \le C n/\log p$ for some large constant $C>0$ and $\|\Cov(Z)\|_{\op} = \cO(1)$. Then 
        \[
            \lim_{n\to\infty} \PP(\cE_{\bX}) = 1.
        \]
    \end{lemma}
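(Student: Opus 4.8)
The plan is to write the event $\cE_{\bX}$ in (\ref{def_event_X}) as the intersection of its four constituent events and to show that each holds with probability tending to one; a union bound then gives $\lim_{n\to\infty}\PP(\cE_{\bX})=1$. Throughout I use that the rows $X_i$ of $\bX$ are i.i.d. $N_p(\b0,\Sigma)$ with $\max_j\Sigma_{jj}\le C$, $\lambda_{\min}(\Sigma)\ge c$ and $\sup_{|S|\le s_n}\lambda_{\max}(\Sigma_{SS})\le C$ from Assumption \ref{ass_X}, and that the scaling hypothesis $s_n\le Cn/\log p$ gives $s_n\log p=\cO(n)$ and hence $\log p=\cO(n)$.

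First I would dispatch the two moment-type bounds. For $\max_{j}\wh\Sigma_{jj}$, note $n\wh\Sigma_{jj}/\Sigma_{jj}\sim\chi^2_n$, so a standard chi-square tail bound (equivalently Lemma \ref{lem_bernstein} applied to $\sum_i X_{ij}^2$) gives $\wh\Sigma_{jj}\le\Sigma_{jj}(1+C_0(\sqrt{\log p/n}+\log p/n))$ on an event of probability $1-2p^{-1}$; a union bound over $j\in[p]$ together with $\log p=\cO(n)$ and $\max_j\Sigma_{jj}\le C$ yields $\max_j\wh\Sigma_{jj}\le C$ with probability tending to one. For the operator-norm term I use $n^{-1}\|\bX\bA\|_{\op}^2=\|\bA^T\wh\Sigma\bA\|_{\op}$ and observe that the vectors $\bA^T X_i\in\RR^K$ are i.i.d. $N(\b0,\bA^T\Sigma\bA)$. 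Since $Z=\bA^TX+W$ gives $\Cov(Z)=\bA^T\Sigma\bA+\sw\succeq\bA^T\Sigma\bA$ with $\sw\succeq\b0$, we get $\|\bA^T\Sigma\bA\|_{\op}\le\|\Cov(Z)\|_{\op}=\cO(1)$. As $K$ is fixed, $\bA^T\wh\Sigma\bA$ is the sample covariance of $n$ i.i.d. $K$-dimensional Gaussians, so Lemma \ref{lem_op_norm_diff} (applied to $\bA^TX$ in place of $W$) gives $\|\bA^T\wh\Sigma\bA-\bA^T\Sigma\bA\|_{\op}=\cO_\PP(\sqrt{K/n})=o_\PP(1)$; hence $n^{-1}\|\bX\bA\|_{\op}^2\le\|\Cov(Z)\|_{\op}+o_\PP(1)=\cO_\PP(1)$, which is the desired bound.

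Next, for $n^{-1/2}\|\bX\bTheta\|_{2,1}$, recall that $\bTheta$ has at most $s_n$ nonzero rows, collected in a set $S$ with $|S|\le s_n$, and each row obeys $\|\bTheta_{j\cdot}\|_1\le M_n$. Writing the $i$-th row of $\bX\bTheta$ as $X_i^T\bTheta=\sum_{j\in S}X_{ij}\bTheta_{j\cdot}$, the triangle inequality gives $\|(\bX\bTheta)_{i\cdot}\|_1\le\sum_{j\in S}|X_{ij}|\,\|\bTheta_{j\cdot}\|_1\le M_n\|X_{i,S}\|_1$, so by definition of the $(2,1)$-norm, $\|\bX\bTheta\|_{2,1}^2\le M_n^2\sum_{i=1}^n\|X_{i,S}\|_1^2$. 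Cauchy--Schwarz on the inner sum yields $\|X_{i,S}\|_1^2\le s_n\sum_{j\in S}X_{ij}^2$, whence $\sum_i\|X_{i,S}\|_1^2\le s_n\cdot n\sum_{j\in S}\wh\Sigma_{jj}\le s_n^2\,n\,\max_j\wh\Sigma_{jj}$. Combining this with $\max_j\wh\Sigma_{jj}\le C$ from the previous step gives $n^{-1/2}\|\bX\bTheta\|_{2,1}\lesssim M_n s_n$, i.e. exactly the quantity that enters the rate (\ref{rate_Theta_td_simp}) through the term $s_nM_n/m$; this step requires no new probabilistic input beyond the diagonal bound.

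The main obstacle is the restricted-eigenvalue bound $\kappa(s_n,4)\ge c$. At the population level it is immediate: $\lambda_{\min}(\Sigma)\ge c$ forces $\min_{\Delta\ne\b0}\|\Sigma^{1/2}\Delta\|_2/\|\Delta\|_2\ge\sqrt c$, so the population restricted eigenvalue over any cone $\cC(S,4)$ exceeds $\sqrt c$. The difficulty is transferring this to the sample Gram matrix $\wh\Sigma=n^{-1}\bX^T\bX$ uniformly over all supports $S$ with $|S|\le s_n$ and all directions in $\cC(S,4)$. I would invoke the standard Gaussian restricted-eigenvalue concentration results (e.g. Raskutti--Wainwright--Yu or Rudelson--Zhou), which show that when $s_n\log p=\cO(n)$ the empirical RE constant $\kappa(s_n,4)$ stays within a constant factor of its population value on an event of probability at least $1-c_1e^{-c_2n}$; this is precisely where the scaling $s_n\le Cn/\log p$ is used. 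Intersecting the four events and letting $n\to\infty$ (so that $m,p\to\infty$) completes the proof.
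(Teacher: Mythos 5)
Your overall strategy---splitting $\cE_{\bX}$ into its four constituent events and handling each separately---is exactly the paper's, and three of the four pieces are fine. The restricted-eigenvalue bound via Rudelson--Zhou under $s_n\log p\lesssim n$ is precisely what the paper cites; the bound on $\max_j\wh\Sigma_{jj}$ via Bernstein/chi-square tails plus a union bound matches the paper; and your treatment of $n^{-1/2}\|\bX\bA\|_{\op}$ (writing $n^{-1}\|\bX\bA\|_{\op}^2=\|\bA^T\wh\Sigma\bA\|_{\op}$, using $\Cov(Z)=\bA^T\Sigma\bA+\sw\succeq\bA^T\Sigma\bA$, and concentrating the $K$-dimensional sample covariance) is a clean, self-contained alternative to the paper's citation of an external lemma.

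However, there is a genuine gap in the $\|\bX\bTheta\|_{2,1}$ step. The event $\cE_{\bX}$ in (\ref{def_event_X}) requires $n^{-1/2}\|\bX\bTheta\|_{2,1}\le C'M_n\sqrt{s_n}$, but your chain of inequalities only delivers $n^{-1/2}\|\bX\bTheta\|_{2,1}\lesssim M_n s_n$: the Cauchy--Schwarz step $\|X_{i,S}\|_1^2\le s_n\sum_{j\in S}X_{ij}^2$ followed by $\sum_{j\in S}\wh\Sigma_{jj}\le s_n\max_j\wh\Sigma_{jj}$ costs a full factor of $s_n$ rather than $\sqrt{s_n}$, because it never exploits the joint geometry of the columns $\bX_S$. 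Your claim that $M_ns_n$ is "exactly the quantity that enters the rate" is not correct: the term $s_nM_n/m$ in (\ref{rate_Theta_td_simp}) is obtained by feeding $n^{-1/2}\|\bX\bTheta\|_{2,1}\lesssim M_n\sqrt{s_n}$ into the bound $\|\bX\bTheta(\wh P_B-P_B)\be_1\|_2/\sqrt{n}\le n^{-1/2}\|\bX\bTheta\|_{2,1}\,\|(\wh P_B-P_B)\be_1\|_\i$ and then through (\ref{def_lbd3_td}); with your weaker bound the resulting rate would degrade to $s_n^{3/2}M_n/m$. The paper's proof avoids this loss by writing $\|\bX\bTheta\|_{2,1}=\|\bX_S\bTheta_{S\cdot}\|_{2,1}\le\|\bX_S\|_{\op}\|\bTheta_{S\cdot}\|_{2,1}$, bounding $\|\bTheta_{S\cdot}\|_{2,1}\le\sqrt{s_n}\,\|\bTheta\|_{\i,1}\le\sqrt{s_n}M_n$, and invoking the uniform restricted operator-norm bound $\sup_{|S|\le s_n}n^{-1}\lambda_1(\bX_S^T\bX_S)=\cO_\PP(1)$ from Rudelson--Zhou --- which is where the assumption $\sup_{|S|\le s_n}\lambda_{\max}(\Sigma_{SS})\le C$ (unused in your argument) is essential. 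You should replace your entrywise bound with this operator-norm argument to recover the $M_n\sqrt{s_n}$ scaling demanded by the event.
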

    \begin{proof}
        When the rows of $\bX\Sigma^{-1/2}$ are i.i.d. sub-Gaussian random vector with bounded sub-Gaussian constant, provided that $\lambda_{\min}(\Sigma) \ge c_0$ for some constant $c_0>0$ and $s_n\log p \le Cn$ for some large constant $C>0$, \cite{rz13} shows that $\kappa(s_n,4) \ge c$ holds with probability $1-2n^{-c'n}$. 
         \cite{rz13} also  shows that 
        \begin{equation}\label{bd_op_XsXs}
            \sup_{S\subseteq[p]: |S|\le s_n}{1\over n}\lambda_1(\bX_S^T\bX_S) = \cO_{\PP}(1)
        \end{equation}
        provided that $\sup_{S\subseteq[p]: |S|\le s_n}\Sigma_{SS} =\cO(1)$.
        By applying Lemma \ref{lem_bernstein} with an union bound over $1\le j\le m$ and invoking $\max_{1\le j\le m}\Sigma_{jj} \le C$ from Assumption \ref{ass_X}, we have
        $$
            \max_{1\le j\le m}\wh\Sigma_{jj} \le \max_{1\le j\le m}\left(\Sigma_{jj} + |\wh\Sigma_{jj}-\Sigma_{jj}|\right) \le C'
        $$ 
        with probability $1 - 2(p\vee n)^{-1}$. For $\|\bX\bTheta\|_{2,1}$, since $\bTheta_{S^c\cdot} = \b0$, we have 
        \begin{align*}
            {1\over \sqrt n}\|\bX\bTheta\|_{2,1} =  {1\over \sqrt n}\|\bX_S\bTheta_{S\cdot}\|_{2,1}& \le  {1\over \sqrt n}\|\bX_S\|_{\op} \|\bTheta_{S\cdot}\|_{2,1}\\ &\overset{(\ref{bd_op_XsXs})}{=} \cO_{\PP}\left(\sqrt{s_n}\|\bTheta\|_{\i,1}\right)\overset{(\ref{def_space_Theta})}{=}
            \cO_{\PP}\left(M_n\sqrt{s_n}\right).
        \end{align*}
        Finally,  
        \begin{equation}\label{bd_XA_op}
            {1\over \sqrt n}\|\bX\bA\|_{\op} = \cO_{\PP}(1)
        \end{equation} 
        has been proved in \citet[Lemma 12]{bing2020adaptive}.
     \end{proof}    
     
     Under Assumption \ref{ass_X}, the following Lemma characterizes the estimation error of $\wh\bomega_1$ defined in (\ref{def_est_omega}) using (\ref{formula_nodewise}), as well as the order of $\wh\tau_1^2$ in (\ref{def_tau_1}). It is proved in \cite{vandegeer2014}. Recall that $s_\Omega = \norm{\bOmega_1}_0$.
    
    \begin{lemma}\label{lemma_nodewise}
    Under Assumption \ref{ass_X}, assume $s_\Omega \log p = o(n)$. By choosing $\wt \lambda \asymp \sqrt{\log p/n}$ in (\ref{formula_nodewise}), we have $1/\wh\tau_1^2 = \cO_{\PP}(1)$, 
    \[
        |\wh\bomega_1^T\wh\Sigma\wh\bomega_1 - \Omega_{11}| = \cO_{\PP}\left(\sqrt{\frac{s_{\Omega}\log p}{n}}\right),\qquad
        \norm{\be_1 - \wh\Sigma \wh\bomega_1}_{\i} = \cO_{\PP}\left(\sqrt{\frac{\log p}{n}}\right).
    \]
    \end{lemma}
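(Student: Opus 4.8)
Since this is the standard node-wise Lasso analysis of \cite{vandegeer2014}, the plan is to translate the three claims into statements about the regression of $\bX_1$ on $\bX_{-1}$. Let $\bgamma_1^\star = (\Sigma_{-1,-1})^{-1}\Sigma_{-1,1}$ be the population regression coefficient, with residual vector $\bm{\eta} = \bX_1 - \bX_{-1}\bgamma_1^\star$ and conditional variance $\tau_1^2 = \Sigma_{11} - \Sigma_{1,-1}(\Sigma_{-1,-1})^{-1}\Sigma_{-1,1}$. The block-inversion formula gives the two identities I will use repeatedly: $\Omega_{11} = 1/\tau_1^2$ and $\bOmega_1 = \tau_1^{-2}(1, -\bgamma_1^{\star T})^T$, so that $\|\bgamma_1^\star\|_0 \le s_\Omega - 1$ and, since $\tau_1^2 = 1/\Omega_{11}\in[\lambda_{\min}(\Sigma), \Sigma_{11}]\subseteq[c,C]$ under Assumption \ref{ass_X}, $\tau_1^2$ is bounded away from $0$ and $\infty$. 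Likewise $\|\bgamma_1^\star\|_2 \le \tau_1^2\|\bOmega\|_{\op}=\cO(1)$, whence $\|\bgamma_1^\star\|_1 = \cO(\sqrt{s_\Omega})$.

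\textbf{Lasso rates.} First I would invoke the standard oracle analysis for \eqref{formula_nodewise}. The design $\bX_{-1}$ satisfies a restricted eigenvalue condition: the bound $\kappa(s_n,4)\ge c$ on $\bX$ (valid under Assumption \ref{ass_X}, cf.\ Lemma \ref{lem_X}) is inherited by the column-deleted matrix $\bX_{-1}$. Since $\bm\eta$ is sub-Gaussian and, by Gaussianity, independent of $\bX_{-1}$, a union bound over the $p-1$ coordinates gives $\|n^{-1}\bX_{-1}^T\bm\eta\|_\infty = \cO_\PP(\sqrt{\log p/n})$, which is exactly why the choice $\wt\lambda\asymp\sqrt{\log p/n}$ dominates the noise. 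The usual basic inequality argument then yields, with probability tending to one,
\[
\|\wh\bgamma_1 - \bgamma_1^\star\|_1 = \cO_\PP\!\left(s_\Omega\sqrt{\tfrac{\log p}{n}}\right),\qquad \tfrac1n\|\bX_{-1}(\wh\bgamma_1 - \bgamma_1^\star)\|_2^2 = \cO_\PP\!\left(\tfrac{s_\Omega\log p}{n}\right).
\]
Combining the first of these with $\|\bgamma_1^\star\|_1=\cO(\sqrt{s_\Omega})$ and $s_\Omega\log p = o(n)$ gives $\|\wh\bgamma_1\|_1 = \cO_\PP(\sqrt{s_\Omega})$.

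\textbf{The $\ell_\infty$ (KKT) identity.} The cleanest step is the exact decomposition of $\be_1 - \wh\Sigma\wh\bomega_1$. Writing $\wh\br = \bX_1 - \bX_{-1}\wh\bgamma_1$, definition \eqref{def_est_omega} gives $\wh\Sigma\wh\bomega_1 = \wh\tau_1^{-2}\, n^{-1}\bX^T\wh\br$, and since $n^{-1}\bX_1^T\wh\br = \wh\tau_1^2$ by \eqref{def_tau_1}, its first coordinate is exactly $1$, so that
\[
\be_1 - \wh\Sigma\wh\bomega_1 = \Bigl(0,\; -\tfrac{1}{\wh\tau_1^2}\,\tfrac1n\bX_{-1}^T\wh\br\Bigr)^T.
\]
The KKT conditions for \eqref{formula_nodewise} give $\|n^{-1}\bX_{-1}^T\wh\br\|_\infty \le \wt\lambda/2$, so once $1/\wh\tau_1^2=\cO_\PP(1)$ is established this yields $\|\be_1 - \wh\Sigma\wh\bomega_1\|_\infty \le \wt\lambda/(2\wh\tau_1^2) = \cO_\PP(\sqrt{\log p/n})$, the third claim.

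\textbf{Consistency of $\wh\tau_1^2$ and the variance bound.} It remains to control $\wh\tau_1^2$. Substituting $\wh\br = \bm\eta - \bX_{-1}(\wh\bgamma_1-\bgamma_1^\star)$ into $\wh\tau_1^2 = n^{-1}\bX_1^T\wh\br$ and using $\bX_1 = \bX_{-1}\bgamma_1^\star + \bm\eta$, I would split $\wh\tau_1^2 - \tau_1^2$ into $(n^{-1}\bm\eta^T\bm\eta - \tau_1^2) = \cO_\PP(n^{-1/2})$, a term $\bgamma_1^{\star T}(n^{-1}\bX_{-1}^T\bm\eta) = \cO_\PP(\sqrt{s_\Omega\log p/n})$ via $\|\bgamma_1^\star\|_1$ times the deviation bound, a term bounded by $n^{-1/2}\|\bX_{-1}\bgamma_1^\star\|_2\cdot n^{-1/2}\|\bX_{-1}(\wh\bgamma_1-\bgamma_1^\star)\|_2 = \cO_\PP(\sqrt{s_\Omega\log p/n})$ by Cauchy–Schwarz, and a higher-order cross term $\cO_\PP(s_\Omega\log p/n)$. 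Hence $|\wh\tau_1^2-\tau_1^2| = \cO_\PP(\sqrt{s_\Omega\log p/n}) = o_\PP(1)$, so $\wh\tau_1^2\ge c/2$ w.h.p., giving $1/\wh\tau_1^2=\cO_\PP(1)$ and $|1/\wh\tau_1^2 - \Omega_{11}| = \cO_\PP(\sqrt{s_\Omega\log p/n})$. Finally, using $\wh\bomega_1^T\wh\Sigma\wh\bomega_1 = \wh\bomega_1^T\be_1 - \wh\bomega_1^T(\be_1-\wh\Sigma\wh\bomega_1) = 1/\wh\tau_1^2 - \wh\bomega_1^T(\be_1-\wh\Sigma\wh\bomega_1)$ together with $\|\wh\bomega_1\|_1=\cO_\PP(\sqrt{s_\Omega})$ and the $\ell_\infty$ bound, the remainder is $\cO_\PP(\sqrt{s_\Omega}\cdot\sqrt{\log p/n}) = \cO_\PP(\sqrt{s_\Omega\log p/n})$, which combined with the bound on $|1/\wh\tau_1^2-\Omega_{11}|$ gives the second claim. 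I expect the main obstacle to be the sharp $\sqrt{s_\Omega\log p/n}$ control of $\wh\tau_1^2-\tau_1^2$: it hinges on recognizing that the $\bm\eta$–$\bX_{-1}$ cross term is genuinely higher order (using orthogonality $\EE[\eta_1 X_{-1}]=\b0$), and on verifying that the restricted eigenvalue property passes from $\bX$ to $\bX_{-1}$.
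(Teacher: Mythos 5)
Your proposal is correct and is essentially a faithful reconstruction of the standard node-wise Lasso analysis of van de Geer et al.\ (2014), which is exactly what the paper relies on: the paper gives no proof of this lemma and simply cites that reference. The only point worth tightening is that the restricted eigenvalue condition you invoke must hold at sparsity level $s_\Omega$ rather than $s_n$ (Lemma \ref{lem_X} is stated for $s_n$), but under Assumption \ref{ass_X} and $s_\Omega \log p = o(n)$ the same Rudelson--Zhou argument delivers it, so this is cosmetic.
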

    
    The following lemma provides upper bounds for $\|\be_1-\wh\Sigma\wh\bomega_1)^T\bA\|_2$.
    
    \begin{lemma}\label{lemma_nodewise_A}
        Under conditions of Lemma \ref{lemma_nodewise} and $\|\Cov(Z)\|_{\op}=\cO(1)$, one has 
        \[
            \|(\be_1 - \wh\Sigma\wh\bomega_1)^T\bA\|_2 = \cO_{\PP}\left(
                \sqrt{s_\Omega\log p\over n}
            \right)
        \]
    \end{lemma}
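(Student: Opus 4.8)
The plan is to exploit the population identity $\Sigma\bOmega_1 = \be_1$ (which holds since $\bOmega_1 = \Sigma^{-1}\be_1$) to rewrite the residual vector $\be_1 - \wh\Sigma\wh\bomega_1$ as a sum of three pieces and control the inner product of each with the columns of $\bA$ separately. Adding and subtracting $\wh\Sigma\bOmega_1$ and then $\Sigma(\bOmega_1-\wh\bomega_1)$ gives
\[
    \be_1 - \wh\Sigma\wh\bomega_1 = \underbrace{(\Sigma-\wh\Sigma)\bOmega_1}_{=:\,\bu_1} + \underbrace{\Sigma(\bOmega_1-\wh\bomega_1)}_{=:\,\bu_2} + \underbrace{(\wh\Sigma-\Sigma)(\bOmega_1-\wh\bomega_1)}_{=:\,\bu_3},
\]
so that $\|(\be_1-\wh\Sigma\wh\bomega_1)^T\bA\|_2 \le \|\bu_1^T\bA\|_2 + \|\bu_2^T\bA\|_2 + \|\bu_3^T\bA\|_2$. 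I expect the middle term $\bu_2$ to be dominant, contributing the claimed rate $\sqrt{s_\Omega\log p/n}$, while the other two are of strictly smaller order.

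Because $K$ is fixed, for $\bu_1$ and $\bu_3$ it suffices to bound each coordinate $\bA_k^T\bu_\ell$, $k\in[K]$, and then invoke $\|\cdot\|_2\le\sqrt K\|\cdot\|_\infty$. For $\bu_1$, both $\bA_k$ and $\bOmega_1$ are deterministic, so $\bA_k^T(\Sigma-\wh\Sigma)\bOmega_1 = n^{-1}\sum_i\bigl[(\bA_k^TX_i)(\bOmega_1^TX_i) - \bA_k^T\Sigma\bOmega_1\bigr]$ is a centered average of sub-exponential variables (products of Gaussians under Assumption \ref{ass_X}). Since $\bA_k^T\Sigma\bA_k \le [\Cov(Z)]_{kk}\le\|\Cov(Z)\|_{\op}=\cO(1)$ and $\bOmega_1^T\Sigma\bOmega_1=\Omega_{11}=\cO(1)$, Bernstein's inequality (Lemma \ref{lem_bernstein}) yields $\bA_k^T\bu_1 = \cO_\PP(n^{-1/2})$, and hence $\|\bu_1^T\bA\|_2=\cO_\PP(n^{-1/2})$ with no union bound over $p$. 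For $\bu_3$, I would bound $|\bA_k^T\bu_3|\le\|\bA_k^T(\wh\Sigma-\Sigma)\|_\infty\,\|\bOmega_1-\wh\bomega_1\|_1$; a union bound over the $p$ entries combined with the same Bernstein argument gives $\|\bA_k^T(\wh\Sigma-\Sigma)\|_\infty=\cO_\PP(\sqrt{\log p/n})$, while the node-wise Lasso $\ell_1$-rate $\|\bOmega_1-\wh\bomega_1\|_1=\cO_\PP(s_\Omega\sqrt{\log p/n})$ from \cite{vandegeer2014} yields $\|\bu_3^T\bA\|_2=\cO_\PP(s_\Omega\log p/n)=o_\PP(\sqrt{s_\Omega\log p/n})$, the final step using $s_\Omega\log p=o(n)$.

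The dominant term $\bu_2^T\bA = (\bOmega_1-\wh\bomega_1)^T\Sigma\bA$ I would control by Cauchy--Schwarz in the $\Sigma$-inner product: writing it as $(\Sigma^{1/2}(\bOmega_1-\wh\bomega_1))^T(\Sigma^{1/2}\bA)$,
\[
    \|\bu_2^T\bA\|_2 \le \|\Sigma^{1/2}\bA\|_{\op}\,\|\Sigma^{1/2}(\bOmega_1-\wh\bomega_1)\|_2.
\]
The first factor is $\cO(1)$ since $\|\Sigma^{1/2}\bA\|_{\op}^2 = \|\bA^T\Sigma\bA\|_{\op} = \|\Cov(Z,X)\Sigma^{-1}\Cov(X,Z)\|_{\op} = \|\Cov(\bA^TX)\|_{\op}\le\|\Cov(Z)\|_{\op}=\cO(1)$, where the inequality uses $\Cov(Z)=\Cov(\bA^TX)+\sw\succeq\Cov(\bA^TX)$ (recall $W$ is uncorrelated with $X$). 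The second factor is exactly the prediction-norm error of the node-wise Lasso, and the analysis of \cite{vandegeer2014} gives $\|\Sigma^{1/2}(\bOmega_1-\wh\bomega_1)\|_2^2 = (\bOmega_1-\wh\bomega_1)^T\Sigma(\bOmega_1-\wh\bomega_1) = \cO_\PP(s_\Omega\log p/n)$. Collecting the three bounds then yields the claim. The main obstacle is precisely the control of $\bu_2$: the crude estimates $\|\bA_k\|_\infty\|\bOmega_1-\wh\bomega_1\|_1$ and $\|\bA_k\|_2\|\bOmega_1-\wh\bomega_1\|_2$ both fail, the former losing an extra $\sqrt{s_\Omega}$ factor and the latter because $\|\bA_k\|_2$ may grow with $p$; the key is that pairing $\bA$ with the estimation error in the $\Sigma$-geometry simultaneously caps the $\bA$-side by $\|\Cov(Z)\|_{\op}$ and converts the error-side into the sharp prediction-norm rate $\sqrt{s_\Omega\log p/n}$.
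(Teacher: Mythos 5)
Your decomposition is a refinement of the paper's: the paper writes $(\be_1 - \wh\Sigma\wh\bomega_1)^T\bA = \bomega_1^T(\Sigma-\wh\Sigma)\bA + (\bomega_1 - \wh\bomega_1)^T\wh\Sigma\bA$, and your $\bu_2 + \bu_3$ is exactly its second term with $\wh\Sigma$ split as $\Sigma + (\wh\Sigma - \Sigma)$. Your treatment of $\bu_1$ coincides with the paper's (coordinate-wise Bernstein for products of sub-Gaussians with variance proxies $\Omega_{11}$ and $[\Cov(Z)]_{kk}$, then $\sqrt K$). Where you genuinely diverge is the dominant term: the paper keeps the \emph{empirical} pairing and bounds $\|(\bomega_1-\wh\bomega_1)^T\wh\Sigma\bA\|_2 \le n^{-1/2}\|\bX\bA\|_{\op}\cdot n^{-1/2}\|\bX(\wh\bomega_1-\bomega_1)\|_2$, where the first factor is $\cO_{\PP}(1)$ and the second is controlled by splitting $\wh\bomega_1-\bomega_1$ into the $\wh\bgamma_1-\bgamma_1$ part and the $1/\wh\tau_1^2 - 1/\tau_1^2$ part, each handled by the node-wise-lasso rates of \cite{vandegeer2014} in the \emph{empirical} prediction norm. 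You instead work in the population $\Sigma$-geometry, which buys you the clean deterministic bound $\|\Sigma^{1/2}\bA\|_{\op}^2 \le \|\Cov(Z)\|_{\op}$ (the correct population analogue of the paper's $n^{-1}\|\bX\bA\|_{\op}^2=\cO_{\PP}(1)$) at the cost of needing $\|\Sigma^{1/2}(\bOmega_1-\wh\bomega_1)\|_2^2 = \cO_{\PP}(s_\Omega\log p/n)$. That last bound is true under the stated assumptions, but it is not literally what \cite{vandegeer2014} state (they give $\ell_1$, $\ell_2$, and \emph{empirical} prediction-norm rates), and since Assumption \ref{ass_X} does not bound $\|\Sigma\|_{\op}$ you cannot get it from the $\ell_2$ rate alone; you need an empirical-to-population norm transfer for vectors in the lasso cone (e.g., the Rudelson--Zhou transfer principle already invoked in Lemma \ref{lem_X}), applied separately to the $\wh\bgamma_1$ and $\wh\tau_1$ components as the paper does. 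With that step made explicit your argument closes; your handling of $\bu_3$ via $\|\bA_k^T(\wh\Sigma-\Sigma)\|_\infty\|\bOmega_1-\wh\bomega_1\|_1 = \cO_{\PP}(s_\Omega\log p/n) = o_{\PP}(\sqrt{s_\Omega\log p/n})$ is correct and is the price you pay for moving to the population Gram matrix, a cross term the paper never has to face.
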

    \begin{proof}
       Use $\be_1 = \Sigma\bomega_1$ to obtain 
       \begin{equation}\label{eqn_decomp}
           (\be_1 - \wh\Sigma\wh\bomega_1)^T\bA = \bomega_1^T(\Sigma - \wh\Sigma)\bA + (\bomega_1-\wh\bomega_1)^T\wh\Sigma\bA. 
       \end{equation}
       For the first term, plugging $\bA = \Sigma^{-1}\Cov(X,Z)$ into the expression yields
       \begin{align*}
           \|\bomega_1^T(\Sigma - \wh\Sigma)\bA\|_2^2 = \sum_{k=1}^K\left(
           \bomega_1^T\Sigma^{1/2}\left(\bI_p - {1\over n} \bar\bX^T\bar\bX\right)
           \Sigma^{-1/2}\Cov(X,Z)\be_k\right)^2
       \end{align*}
       where $\bar\bX = \bX\Sigma^{-1/2}$. Notice that 
       \[
        \bomega_1^T\Sigma^{1/2}\left(\bI_p - {1\over n} \bar\bX^T\bar\bX\right)
           \Sigma^{-1/2}\Cov(X,Z)\be_k = {1\over n}\sum_{i=1}^n\left(\EE[U_i^TV_i] - U_iV_i
           \right)
       \]
       where $U_i = \bar\bX_{i\cdot}^T\Sigma^{1/2}\bomega_1$ is $\sqrt{\Omega_{11}}$ sub-Gaussian and $V_i = \bar\bX_{i\cdot}^T\Sigma^{-1/2}\Cov(X,Z)\be_k$ is $$
        \sqrt{\be_k^T \Cov(Z,X)\Sigma^{-1}\Cov(X,Z)\be_k} \le \sqrt{\Cov(Z_k)}
        $$ sub-Gaussian. 
       An application of Lemma \ref{lem_bernstein} with an union bound over $1\le k\le K$ gives 
       \[
          \left|\bomega_1^T\Sigma^{1/2}\left(\bI_p - {1\over n} \bar\bX^T\bar\bX\right)
           \Sigma^{-1/2}\Cov(X,Z)\be_k\right| = \cO\left(
           \sqrt{\Omega_{11}\Cov(Z_k) \over  n}
           \right)
       \]
       uniformly over $1\le k\le K$,
       with probability $1-O(n^{-1})$. Using (\ref{bd_Omega_11}) and $\|\Cov(Z)\|_{\op}=\cO(1)$ further yields 
       \beq\label{bd_Sigma_diff}
        \|\bomega_1^T(\Sigma - \wh\Sigma)\bA\|_2 = \cO_{\PP}\left(
        1/\sqrt{n}
        \right).
       \eeq
       Regarding the second term in (\ref{eqn_decomp}), one has 
       $$
        \|(\bomega_1-\wh\bomega_1)^T\wh\Sigma\bA\|_2 \le {1\over \sqrt{n}}\|\bX\bA\|_{\op}{1\over \sqrt n}\|\bX(\wh\bomega_1 - \bomega_1)\|_2 \overset{(\ref{bd_XA_op})}{=}  \cO_{\PP}(1) \cdot {1\over \sqrt n}\|\bX(\wh\bomega_1 - \bomega_1)\|_2.
       $$
       Recall from (\ref{def_est_omega}) that 
       \beq\label{bd_omega_diff}
            \wh\bomega_1^T = \wh\tau_1^{-2} \begin{bmatrix}
           1 & -\wh\bgamma_1^T
           \end{bmatrix}.
       \eeq
       Following \cite{vandegeer2014}, we define $\bgamma_1 = \argmin_{\bgamma\in\RR^{p-1}}\EE[\|\bX_1-\bX_{-1}\bgamma\|_2^2]$ and $\tau_1^2 = \EE[\|\bX_1-\bX_{-1}\bgamma_1\|_2^2]/n = \Omega_{11}^{-1}$ such that 
       $$
           \bomega_1^T = \tau_1^{-2} \begin{bmatrix}
           1 & -\bgamma_1^T
           \end{bmatrix}.
        $$
        Triangle inequality yields
       \begin{align*}
           {1\over \sqrt n}\|\bX(\wh\bomega_1 - \bomega_1)\|_2 &\le  {1\over \sqrt n}{\|\bX_{-1}(\wh\bgamma_1- \bgamma_1)\|_2 \over \wh\tau_1^2} +  {1\over \sqrt n}\|\bX_1 - \bX_{-1}\bgamma_1\|_2\left|{1\over \wh\tau_1^2}- {1\over \tau_1^2}\right|. 
       \end{align*}
       Using the results in \cite{vandegeer2014} yields 
       \[
       {1\over \sqrt n}\|\bX_{-1}(\wh\bgamma_1- \bgamma_1)\|_2 = \cO_{\PP}\left(\sqrt{s_\Omega\log p\over n}\right),\quad \left|{1\over \wh\tau_1^2}- {1\over \tau_1^2}\right| =  \cO_{\PP}\left(\sqrt{s_\Omega\log p\over n}\right).
       \]
       Together with 
       \[
        {1\over \sqrt n}\|\bX_1 - \bX_{-1}\bgamma_1\|_2 = \tau_1^2 {1\over \sqrt n}\|\bX\bomega_1\|_2 = \cO_{\PP}\left(\tau_1^2\sqrt{\bomega_1^T \Sigma \bomega_1}\right) = \cO_{\PP}(1)
       \]
       from (\ref{bd_Omega_11}), we conclude 
       \[
        {1\over \sqrt n}\|\bX(\wh\bomega_1 - \bomega_1)\|_2 = \cO_{\PP}\left(\sqrt{s_\Omega\log p\over n}\right).
       \]
       The proof is completed by combining the above display with (\ref{bd_Sigma_diff}) and (\ref{bd_omega_diff}).
    \end{proof}

        \subsection{Lemmas used in the proof of Theorem \ref{thm_B_asn}}

       Recall that $\bH_2 = \bB\wh\bB^T(\wh\bB\wh\bB^T)^{-1}$ and $\bQ$ is defined in (\ref{def_Q}). 
       The following lemma shows that $\bH_2$ converges to $\bQ^{-1}$ in probability.

      \begin{lemma}\label{lem_H2}
           Under conditions of Theorem \ref{thm_B_asn}, $\bH_2$ converges to $\bQ^{-1}$ in probability. 
       \end{lemma}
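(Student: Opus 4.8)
The plan is to work with the two exact identities that tie the rotation matrix $\bH_2$ to the singular structure of $\wh\bepsilon$, and then to show that $\bH_2$ asymptotically satisfies a pair of relations that uniquely identify $\bQ^{-1}$. First I would record the exact facts coming from (\ref{def_est_BW}) and (\ref{def_svd_epsilon}): the normalization $\frac{1}{m}\wh\bB\wh\bB^T = \bD_K^2$ and the eigen-equation $\frac{1}{nm}\wh\bepsilon^T\wh\bepsilon\,\wh\bB^T = \wh\bB^T\bD_K^2$, together with the identity $\bH_0^T = (\frac{1}{n}\bW^T\bW)\bH_2$ obtained by comparing (\ref{def_H0}) and (\ref{def_H2}). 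Writing $\wh\bepsilon = \bepsilon + \bDelta$ with $\bDelta = \bX\bF - \bX\wh\bF$ and $\bepsilon = \bW\bB$, and using Theorem \ref{thm_rates_B} in the form $\wh\bB = \bH_0\bB + (\wh\bB - \wt\bB)$ with $\frac{1}{\sqrt m}\|\wh\bB - \wt\bB\|_{\op} = o_{\PP}(1)$ from (\ref{bd_B_diff_op}), I would reduce every quantity to the identifiable pieces $\frac{1}{n}\bW^T\bW$ and $\Sigma_B := \frac{1}{m}\bB\bB^T$.

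Next I would derive two asymptotic relations for $\bH_2$. Multiplying the eigen-equation on the left by $\frac{1}{m}\wh\bB$ and substituting $\frac{1}{m}\wh\bepsilon\,\wh\bB^T = \bW\bH_2\bD_K^2 + o_{\PP}(1)$ gives, after cancelling $\bD_K^2$ on $\cE_D$, the first relation $\bH_2^T(\frac{1}{n}\bW^T\bW)\bH_2 = \bI_K + o_{\PP}(1)$, hence $\bH_2^T\sw\bH_2 = \bI_K + o_{\PP}(1)$ by $\frac{1}{n}\bW^T\bW \to \sw$ (Lemma \ref{lem_W_eigens}). For the second relation I would use $\frac{1}{m}\wh\bB\wh\bB^T = \bH_0\Sigma_B\bH_0^T + o_{\PP}(1) = \bD_K^2$ together with $\bH_0 = \bH_2^T(\frac{1}{n}\bW^T\bW)$ to obtain $\bH_2^T\sw\Sigma_B\sw\bH_2 = \Lambda_0 + o_{\PP}(1)$, where $\Lambda_0 = \lim \bD_K^2$ is the diagonal matrix of the top-$K$ eigenvalues of $\sw\Sigma_B$. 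The convergence $\bD_K^2 \to \Lambda_0$ itself follows from Weyl's inequality applied to $\frac{1}{nm}\wh\bepsilon^T\wh\bepsilon$ and $\frac{1}{nm}\bepsilon^T\bepsilon = \frac{1}{m}\bB^T(\frac{1}{n}\bW^T\bW)\bB$, whose nonzero spectrum equals that of $(\frac{1}{n}\bW^T\bW)\Sigma_B \to \sw\Sigma_B$, the perturbation $\frac{1}{nm}\|\wh\bepsilon^T\wh\bepsilon - \bepsilon^T\bepsilon\|_{\op} \lesssim \sqrt{r_n} = o_{\PP}(1)$ being controlled on $\cE_F$ via Assumption \ref{ass_initial}.

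Finally I would invoke a compactness-plus-uniqueness argument. On $\cE_D$ and the events of Lemma \ref{lemma_technical} the matrix $\bH_2$ is bounded, so along any subsequence it has a limit $\bH_2^\infty$, which by continuity satisfies $\bH_2^{\infty T}\sw\bH_2^\infty = \bI_K$ and $\bH_2^{\infty T}\sw\Sigma_B\sw\bH_2^\infty = \Lambda_0$. Setting $P = \sw^{1/2}\bH_2^\infty$, the first relation says $P$ is orthogonal and the second says $P^T(\sw^{1/2}\Sigma_B\sw^{1/2})P = \Lambda_0$, so $P$ is the orthogonal matrix diagonalizing $\sw^{1/2}\Sigma_B\sw^{1/2}$ with eigenvalues in decreasing order, which pins down $\bH_2^\infty = \sw^{-1/2}P = \bQ^{-1}$ (up to the sign convention matching that of $\bV_K$), with $\bQ$ as in (\ref{def_Q}). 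Since the limit agrees along every subsequence, $\bH_2 \to \bQ^{-1}$ in probability, and one checks directly that $\bQ^T\bQ = \sw$, i.e. $(\bQ^T)^{-1}\sw\bQ^{-1} = \bI_K$, as used in the proof of Theorem \ref{thm_B_asn}. I expect the main obstacle to be the control of the error matrix $\bDelta$ throughout the reduction: the cross terms $\frac{1}{m}\bDelta\wh\bB^T$, $\frac{1}{nm}\bepsilon^T\bDelta$ and $\frac{1}{nm}\bDelta^T\bDelta$ must be shown negligible in operator norm, which relies on the uniform column-wise rate of Theorem \ref{thm_rates_B}, the bound (\ref{bd_B_diff_op}) and $r_n = o(1)$, all guaranteed under the conditions of Theorem \ref{thm_B_asn}. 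A secondary delicacy is the possible non-distinctness of the eigenvalues of $\sw\Sigma_B$, which would leave $R_0$ (and the sign of $\bQ$) ambiguous, this is resolved by tying the choice of $R_0$ to the estimated singular vectors $\bV_K$, exactly as in \cite{bai2020simpler}.
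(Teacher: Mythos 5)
Your overall strategy---deriving the two asymptotic relations $\bH_2^T\sw\bH_2 = \bI_K + o_\PP(1)$ and $\bH_2^T\sw\Sigma_B\sw\bH_2 = \Lambda_0 + o_\PP(1)$ and then identifying the limit through uniqueness of the orthogonal diagonalization of $\sw^{1/2}\Sigma_B\sw^{1/2}$---is a legitimate reorganization of the Bai--Ng argument, and it genuinely differs from the paper's route: the paper first proves $\bH_1 := n^{-1}\wh\bW^T\bW \to \bQ$ by expanding the eigen-equation for $\wh\bW$, and then shows $\bH_0 = \bH_1 + o_\PP(1)$ and $\bH_0 = \bH_2^{-1} + o_\PP(1)$. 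Your algebra linking $\bH_0$, $\bH_2$ and $\bD_K$ (via $\bH_0^T = (n^{-1}\bW^T\bW)\bH_2$ and $m^{-1}\wh\bB\wh\bB^T = \bD_K^2$) is correct, and your two relations do pin down $\bQ^{-1}$ modulo the points below.

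There is, however, a genuine gap: you set $\bepsilon = \bW\bB$, whereas in this paper $\bepsilon = \bW\bB + \bE$, so that $\wh\bepsilon = \bW\bB + \bE + \bDelta$. As a consequence the identity $\frac{1}{nm}\bepsilon^T\bepsilon = \frac{1}{m}\bB^T(\frac{1}{n}\bW^T\bW)\bB$ you rely on for the convergence $\bD_K^2\to\Lambda_0$ is false, and your list of remainder terms omits the entire family involving the idiosyncratic noise $\bE$---terms such as $n^{-2}m^{-1}\bW^T\bE\bE^T\wh\bW$, $n^{-1}m^{-1}\bW^T\bE\bB^T$ and $n^{-1}m^{-2}\wh\bB\bE^T\bE\wh\bB^T$, whose control (via Lemma \ref{lem_quad_terms}, Lemma \ref{lem_op_norm} and $\tr(\se)=\cO(m)$) constitutes most of the paper's actual proof. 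These terms are indeed negligible under the stated assumptions, but your argument does not go through as written without them. A secondary weakness is that your two quadratic relations determine $P = \sw^{1/2}\bH_2^{\infty}$ only up to the signs of its columns even when the eigenvalues in $\Lambda_0$ are distinct, so ``the limit agrees along every subsequence'' does not follow from them alone; one needs an additional ingredient that fixes the signs (for instance the positive-definiteness of the limit of $n^{-1}\wh\bW^T\bW$, which is exactly what the paper's direct computation of $\bH_1$ supplies). You flag this delicacy but defer it rather than resolve it.
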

       \begin{proof}
            We prove the result by the same reasoning as \citet[Lemmas 1 \& 3]{bai2020simpler}. We first prove 
            \beq\label{WhatWconverge}
                \bH_1 = {1\over n}\wh\bW^T\bW \to \bQ,\quad \textrm{in probability,}
            \eeq
            and then show $\bH_2 = \bH_1^{-1} + o_\PP(1)$.
            Following the argument in \citet[Lemma 1]{bai2020simpler} and by expanding $\wh\bepsilon = \bW\bB+\bE+\bDelta$ with $\bDelta = \wh\bepsilon - \bepsilon$, we arrive at 
            \begin{align*}
                &{1\over n} \bW^T\wh\bW \bD_K^2\\
                & = {\bW^T\bW \over n} {\bB\bB^T \over m}{\bW^T\wh \bW \over n} + {\bW^T\bE\bE^T\wh\bW \over n^2m} + {\bW^T\bE\bB^T\over nm}{\bW^T\wh\bW\over n} + {\bW^T\bW\over n}{\bB\bE^T\wh\bW\over nm}\\
                &\quad +{1\over n^2m}\left(
                    \bW^T\bDelta \bepsilon^T \wh\bW + \bW^T\bDelta \bDelta^T \wh\bW + \bW^T\bepsilon \bDelta^T \wh\bW 
                \right).
            \end{align*}
            With $\wt\bW = \bW\bH_0^{-1}$, notice 
            \[
                {\bB\bE^T\wh\bW\over nm} = {\bB\bE^T\wt \bW\over nm} + {\bB\bE^T(\wh\bW-\wt\bW)\over nm}
            \]
            and 
            \[
                    {\bW^T\bE\bE^T\wh\bW \over n^2m} = {\bW^T\bE\bE^T\wt \bW \over n^2m} + {\bW^T\bE\bE^T(\wh\bW-\wt\bW) \over n^2m}.
            \]
            By arguments in \citet[Lemma 1]{bai2020simpler} and Lemma \ref{lem_W_frob}, one has 
            \[
                 {\bW^T\bE\bE^T\wh\bW \over n^2m} + {\bW^T\bE\bB^T\over nm}{\bW^T\wh\bW\over n} + {\bW^T\bW\over n}{\bB\bE^T\wh\bW\over nm} = o_\PP(1).
            \]
            Furthermore, by Lemma \ref{lem_quad_terms_Delta} and Lemma \ref{lem_W_eigens}, 
            \[
                {1\over n^2m}
                    \|\bW^T\bDelta \bepsilon^T \wh\bW\|_F \le {1\over \sqrt n}\|\bW\|_{\op} {1\over nm}\|\bDelta\bepsilon^T\|_F =\cO_\PP\left({1\over n\sqrt m}\max_{j\in[m]}\|\bDelta\bepsilon_j\|_2\right) = o_\PP(1).
            \]
            Using similar arguments yields
            \[
                {1\over n^2m}\left(
                    \bW^T\bDelta \bepsilon^T \wh\bW + \bW^T\bDelta \bDelta^T \wh\bW + \bW^T\bepsilon \bDelta^T \wh\bW 
                \right) =  o_\PP(1),
            \]
            and, therefore, 
            \[
                {\bW^T\wh\bW \over n} \bD_K^2 = {\bW^T\bW \over n} {\bB\bB^T \over m}{\bW^T\wh \bW \over n} + o_\PP(1).
            \]
            Finally, recalling $\Lambda_0$ from (\ref{def_Q}), note that $\bD_K^2\to \Lambda_0$ in probability. To see this, since \[
                \lambda_j(\Lambda_0) = {1\over m}\lambda_j(\bB\se \bB^T),
            \]
            for any $1\le j\le K$, 
            Weyl's inequality yields
            \[
                \left|\lambda_j(\bD_K^2) - \lambda_j(\Lambda_0)\right| \le {1\over m}\left\|
                   {1\over n}\wh\bepsilon^T \wh \bepsilon -  \bB \Sigma_W \bB^T 
                \right\|_{\op}.
            \]
            By the proof of Theorem \ref{thm_rates_B} together with Lemma \ref{lem_W_eigens}, it is easy to derive 
            \[
                \left|\lambda_j(\bD_K^2) - \lambda_j(\Lambda_0)\right| = o_\PP(1),\qquad \forall j\in [K],
            \]
            such that $\bD_K \to \Lambda_0$ in probability. Then
            the arguments in \citet[Lemma 1]{bai2020simpler} yield (\ref{WhatWconverge}). 
            It remains to prove 
            \[
                \bH_2^{-1}  =  \bH_1 + o_\PP(1).
            \]
            We prove this by using the same arguments in \citet[Lemma 3]{bai2020simpler} of showing that 
            $\bH_0 = \bH_1 + o_\PP(1)$ and $\bH_0 = \bH_2^{-1}+o_\PP(1)$, where we recall that 
            \[
                \bH_0^T = {1\over n}\bW^T\bW {1\over m}\bB\wh\bB^T\bD_K^{-2}.
            \]
            
            To prove $\bH_0= \bH_2^{-1}+o_\PP(1)$, notice that 
            \[
                \bD_K^{-1}\wh\bB\left({1\over nm}\wh\bepsilon^T \wh\bepsilon\right) \wh\bB^T \bD_K^{-1} = m \bD_K^{2}.
            \]
            Further expanding the left hand side by $\wh\bepsilon = \bW\bB+\bE+\bDelta$ with $\bDelta = \wh\bepsilon -\bepsilon$ yields
            \begin{align*}
                m\bD_K^2 &= \bD_K^{-1}{1\over m}\wh\bB\bB^T{1\over n}\bW^T\bW \bB\wh\bB^T \bD_K^{-1} + 
                2\bD_K^{-1}\left({1\over m}\wh\bB\bB^T\right)\left({1\over n}\bW^T\bE\wh\bB^T\right)\bD_K^{-1}\\
                &\quad + \bD_K^{-1}\left({1\over nm}\wh\bB\bE^T\bE\wh\bB^T\right)\bD_K^{-1} + 2\bD_K^{-1}\left({1\over nm}\wh\bB\bDelta^T \bepsilon\wh\bB^T\right)\bD_K^{-1}\\
                &\quad + \bD_K^{-1}\left({1\over nm}\wh\bB\bDelta^T \bDelta\wh\bB^T\right)\bD_K^{-1}.
            \end{align*}
            Since $\bH_2 = \bB\wh\bB^T(\wh\bB\wh\bB^T)^{-1} = \bB\wh\bB^T / m$, we conclude 
            \begin{align*}
                \bH_0^{-1} &= \bH_2 + 2\bH_0^{-1}\bD_K^{-1}\left({1\over m}\wh\bB\bB^T\right)\left({1\over nm}\bW^T\bE\wh\bB^T\right)\bD_K^{-1}\\
                &\quad + \bH_0^{-1}\bD_K^{-1}\left({1\over nm^2}\wh\bB\bE^T\bE\wh\bB^T\right)\bD_K^{-1} + 2\bH_0^{-1}\bD_K^{-1}\left({1\over nm^2}\wh\bB\bDelta^T \bepsilon\bB^T\right)\bD_K^{-1}\\
                &\quad + \bH_0^{-1}\bD_K^{-1}\left({1\over nm^2}\wh\bB\bDelta^T \bDelta\bB^T\right)\bD_K^{-1}.
            \end{align*}
            To show the last four terms on the right hand side are negligible, by Lemma \ref{lem_D_K} and \ref{lemma_technical}, one has 
            \begin{align*}
              &\left\|\bH_0^{-1}\bD_K^{-1}\left({1\over m}\wh\bB\bB^T\right)\left({1\over nm}\bW^T\bE\wh\bB^T\right)\bD_K^{-1}\right\|_F\\
                &\lesssim \left\|
                {1\over m}\wh\bB\bB^T
                \right\|_{\op} \left\|{1\over nm}\bW^T\bE\right\|_F \|\wh\bB\|_{\op}^2\\
                & \lesssim {1\over n\sqrt{m}}\left\|\bW^T\bE\right\|_F
            \end{align*}
            with probability tending to one. Since 
            \[
                 {1\over n\sqrt{m}}\left\|\bW^T\bE\right\|_F \le  {\sqrt{K}\over n}\max_{k\in[K],j\in [m]}\|\bW_k^T\bE_j\|_2 = \cO_\PP\left(\sqrt{\log m \over n}\right) = o_\PP(1)
            \]
            from Lemma \ref{lem_bernstein} with an union bound over $k\in[K]$ and $j\in [m]$ and $\log m = o(n)$, we have 
            \beq\label{bd_WE_F}
            {1\over n\sqrt{m}}\left\|\bW^T\bE\right\|_F = o_\PP(1).
            \eeq
            By similar arguments, we have 
            \[
                \left\|
                \bH_0^{-1}\bD_K^{-1}\left({1\over nm^2}\wh\bB\bE^T\bE\wh\bB^T\right)\bD_K^{-1}\right\|_{F} = \cO_\PP\left( {1\over nm}\|\bE\|_{\op}^2 \right) =  \cO_\PP\left({n + m \over nm}\right) = o_\PP(1)
            \]
            by also using Lemma \ref{lem_op_norm} and $\tr(\se) = \cO(m)$. Furthermore, invoke Lemma \ref{lem_quad_terms_Delta} to obtain
            \begin{align*}
                \|\bH_0^{-1}\bD_K^{-1}\left({1\over nm^2}\wh\bB\bDelta^T \bepsilon\wh\bB^T\right)\bD_K^{-1}\|_F &\lesssim {1\over nm}\|\bDelta^T \bepsilon\|_F \le {1\over n\sqrt{m}}\max_{j\in [m]}\|\bDelta^T \bepsilon_j\|_2 = o(1)
            \end{align*}
            and 
            \[
                \|\bH_0^{-1}\bD_K^{-1}\left({1\over nm^2}\wh\bB\bDelta^T \bDelta\wh\bB^T\right)\bD_K^{-1}\|_F \lesssim {1\over n\sqrt{m}}\max_{j\in [m]}\|\bDelta^T \bDelta_j\|_2 = o(1)
            \]
            with probability tending to one. Collecting terms concludes 
            $
                \bH_0^{-1} = \bH_2 + o_\PP(1),
            $
            or equivalently, 
            $\bH_0 = \bH_2^{-1} + o_\PP(1).$
            
            We proceed to show $\bH_0 = \bH_1 + o_\PP(1)$. From the basic equality 
            $\wh\bepsilon = \bW\bB+\bE+\bDelta$ and $\wh\bepsilon \wh\bB^T\bD_K^{-2} = m\sqrt{n}\bU_K = m\wh \bW$, we have 
            \begin{align*}
                 {1\over n} \bW^T\wh\bW & = {1\over nm}\bW^T \wh\bepsilon\wh\bB^T\bD_K^{-2}\\
                 &= {1\over n}\bW^T\bW {1\over m}\bB\wh\bB^T\bD_K^{-2} + {1\over nm}\bW^T\bE\wh\bB^T\bD_K^{-2} + {1\over nm}\bW^T\bDelta \wh\bB^T\bD_K^{-2},
            \end{align*}
            which leads to 
            \begin{align*}
                 \bH_1 & = \bH_0 + {1\over nm}\bD_K^{-2}\wh\bB\bE^T\bW + {1\over nm}\bD_K^{-2}\wh\bB\bDelta^T\bW.
            \end{align*}
            Previous arguments and (\ref{bd_WE_F}) give
            \[
                {1\over nm}\|\bD_K^{-2}\wh\bB\bE^T\bW\|_F = \cO_\PP\left({1\over n\sqrt m}\|\bE^T\bW\|_F\right) = o_\PP(1)
            \]
            and
            \[
                {1\over nm}\|\bD_K^{-2}\wh\bB\bDelta^T\bW\|_F = \cO_\PP\left(
                {1\over n\sqrt m}\|\bDelta^T\bW\|_F
                \right).
            \]
            Invoke Lemma \ref{lem_W_eigens} and Assumption \ref{ass_initial} to conclude 
            \[
                {1\over n\sqrt m}\|\bDelta^T\bW\|_F \le {1\over \sqrt n}\|\bW\|_{\op} {1\over \sqrt{nm}}\|\bDelta\|_F = o_\PP(1).
            \]
            We have finished the proof of $\bH_1 = \bH_0 + o_\PP(1) = \bH_2^{-1} + o_\PP(1)$, completing the proof. 
       \end{proof}

    \section{Auxiliary lemmas}\label{sec_proof_aux}
	
	The following lemma is used in our analysis. The tail inequality is for a quadratic form of sub-Gaussian random vectors. It is a slightly simplified version of Lemma 30 in \cite{Hsu2014} and is proved in \cite{bing2020adaptive}.
	\begin{lemma}\label{lem_quad}
		Let $\xi\in \RR^d$ be a $\gamma_\xi$ sub-Gaussian random vector. For all symmetric positive semi-definite matrices $H$, and all $t\ge 0$, 
		\[
		\PP\left\{
		\xi^T H \xi > \gamma_\xi^2\left(
		\sqrt{{\rm tr}(H)}+ \sqrt{2\|H\|_{\op}t}
		\right)^2
		\right\} \le e^{-t}.
		\] 
	\end{lemma}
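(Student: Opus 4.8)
The plan is to deduce the bound from the Hsu--Kakade--Zhang tail inequality for quadratic forms of sub-Gaussian vectors \citep{Hsu2014,bing2020adaptive} and then simplify the resulting expression. The core of the argument is a Gaussianization step that reduces the moment generating function (MGF) of $\xi^T H \xi$ to that of a Gaussian chaos, whose MGF factorizes. Writing the spectral decomposition $H = \sum_{i=1}^d \lambda_i v_i v_i^T$ with $\lambda_i \ge 0$ and orthonormal $v_i$, and setting $y_i = v_i^T\xi$, we have $\xi^T H\xi = \sum_i \lambda_i y_i^2$. For any admissible $\eta>0$, the elementary identity $\exp(a^2/2) = \EE_g[\exp(ag)]$ with $g\sim N(0,1)$, applied coordinatewise, gives for a standard Gaussian vector $g=(g_1,\dots,g_d)$ independent of $\xi$,
\[
    \exp\!\Big(\eta \sum_i \lambda_i y_i^2\Big) = \EE_g\Big[\exp\Big(\sum_i \sqrt{2\eta\lambda_i}\,g_i y_i\Big)\Big].
\]
Taking $\EE_\xi$, applying Fubini, and noting $\sum_i \sqrt{2\eta\lambda_i}\,g_i y_i = \langle u,\xi\rangle$ with $u = \sum_i \sqrt{2\eta\lambda_i}\,g_i v_i$, the sub-Gaussian assumption yields $\EE_\xi[\exp(\langle u,\xi\rangle)] \le \exp(\tfrac12\gamma_\xi^2\|u\|_2^2) = \exp(\gamma_\xi^2\eta\sum_i\lambda_i g_i^2)$, where the last equality uses orthonormality of the $v_i$. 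Hence
\[
    \EE\big[\exp(\eta\,\xi^T H\xi)\big] \le \EE_g\Big[\exp\Big(\gamma_\xi^2\eta\sum_i \lambda_i g_i^2\Big)\Big],
\]
so that, in the MGF order, $\xi^T H\xi$ is dominated by the Gaussian quadratic form $\gamma_\xi^2\sum_i\lambda_i g_i^2$.

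The second step is the standard Chernoff/Laurent--Massart bound for the Gaussian chaos $\sum_i \lambda_i g_i^2$. Since the Chernoff bound depends only on the MGF, the domination above transfers the Gaussian tail estimate directly to $\xi^T H\xi$ with its threshold scaled by $\gamma_\xi^2$, giving
\[
    \PP\Big\{\xi^T H\xi > \gamma_\xi^2\big(\tr(H) + 2\sqrt{\tr(H^2)\,t} + 2\|H\|_{\op}t\big)\Big\} \le e^{-t},\qquad \forall t\ge 0.
\]
Finally I would simplify the bracket using $\tr(H^2) = \sum_i\lambda_i^2 \le (\max_i\lambda_i)\sum_i\lambda_i = \|H\|_{\op}\tr(H)$, so that $2\sqrt{\tr(H^2)t}\le 2\sqrt{\tr(H)}\sqrt{\|H\|_{\op}t}$ and therefore
\[
    \tr(H) + 2\sqrt{\tr(H^2)t} + 2\|H\|_{\op}t \le \big(\sqrt{\tr(H)} + \sqrt{2\|H\|_{\op}t}\big)^2,
\]
the last inequality holding because the cross term $2\sqrt{\tr(H)\|H\|_{\op}t}$ is bounded by $2\sqrt 2\sqrt{\tr(H)\|H\|_{\op}t}$. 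Since the simplified threshold is at least the original one, the event with the larger threshold is contained in the event with the smaller one, and taking probabilities yields exactly the stated inequality.

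The main obstacle is the MGF domination in the first step: the Gaussianization identity is precisely what converts the intractable sub-Gaussian quadratic form into a Gaussian chaos whose MGF is the product $\prod_i(1-2\gamma_\xi^2\eta\lambda_i)^{-1/2}$. The only care needed is to restrict $\eta$ to the range $\gamma_\xi^2\eta < 1/(2\|H\|_{\op})$ so that this product is finite; the subsequent optimization over $\eta$ that produces the Laurent--Massart form is routine. Everything after the domination is a black-box application of a well-known Gaussian chi-squared tail bound followed by the elementary algebraic simplification above.
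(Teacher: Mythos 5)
Your proof is correct and follows precisely the Gaussianization-plus-Laurent--Massart argument behind Lemma 30 of \cite{Hsu2014}, which is exactly what the paper relies on: the paper offers no proof of its own for this lemma, simply citing \cite{Hsu2014} and \cite{bing2020adaptive}. The MGF domination via the identity $\exp(a^2/2)=\EE_g[\exp(ag)]$, the admissibility restriction $\gamma_\xi^2\eta<1/(2\|H\|_{\op})$, and the final simplification using $\tr(H^2)\le\|H\|_{\op}\tr(H)$ are all handled correctly.
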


	The following lemma provides an upper bound on the operator norm of $\bG H \bG^T$ where  $\bG\in \cR^{n\times d}$ is a random matrix and its rows are independent sub-Gaussian random vectors. It is proved in \cite{bing2020prediction}.
	\begin{lemma}\label{lem_op_norm}
	    Let $\bG$ be $n$ by $d$ matrix whose rows are independent $\gamma$ sub-Gaussian  random vectors with identity covariance matrix. Then for all symmetric positive semi-definite matrices $H$, 
		\[
		\PP\left\{{1\over n}\| \bG H \bG^T \|_{{\rm op}} \le \gamma^2\left( \sqrt{{\rm tr}(H) \over n} + \sqrt{6\|H\|_{\op}}
		\right)^2\right\} \ge  1 -  e^{-n}
		\]
	\end{lemma}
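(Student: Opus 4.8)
The plan is to control $\|\bG H \bG^T\|_{\op}$ through its variational characterization together with the pointwise tail bound in Lemma~\ref{lem_quad} and an $\varepsilon$-net discretization. Since $\bG H \bG^T$ is symmetric positive semi-definite, its operator norm equals its largest eigenvalue, so that
\[
\|\bG H \bG^T\|_{\op} = \max_{u \in S^{n-1}} u^T \bG H \bG^T u ,
\]
where $S^{n-1}$ is the unit sphere in $\RR^n$. First I would fix a unit vector $u$ and observe that $\xi_u := \bG^T u = \sum_{i=1}^n u_i \bG_{i\cdot} \in \RR^d$ is a linear combination of the independent rows of $\bG$; by the definition of sub-Gaussianity and independence, for any $a \in \RR^d$ one has $\EE[\exp(\langle a, \xi_u\rangle)] = \prod_{i=1}^n \EE[\exp(u_i\langle a, \bG_{i\cdot}\rangle)] \le \exp(\tfrac12 \|u\|_2^2 \|a\|_2^2 \gamma^2)$, so $\xi_u$ is $\gamma$ sub-Gaussian (using $\|u\|_2 = 1$) with covariance $\sum_i u_i^2 \bI_d = \bI_d$. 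Consequently $u^T \bG H \bG^T u = \xi_u^T H \xi_u$, and Lemma~\ref{lem_quad} yields, for every $t \ge 0$,
\[
\PP\Bigl\{ \xi_u^T H \xi_u > \gamma^2\bigl(\sqrt{\tr(H)} + \sqrt{2\|H\|_{\op} t}\bigr)^2 \Bigr\} \le e^{-t} .
\]

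Next I would pass from a single $u$ to the supremum over $S^{n-1}$ via a $\delta$-net $\cN$ with $|\cN| \le (1 + 2/\delta)^n$. The standard comparison for positive semi-definite matrices gives $\max_{u\in S^{n-1}} u^T \bG H \bG^T u \le (1-2\delta)^{-1}\max_{v \in \cN} v^T \bG H \bG^T v$. A union bound over $\cN$ of the displayed tail inequality then shows that, outside an event of probability at most $(1+2/\delta)^n e^{-t}$,
\[
\|\bG H \bG^T\|_{\op} \le (1-2\delta)^{-1}\gamma^2\bigl(\sqrt{\tr(H)} + \sqrt{2\|H\|_{\op} t}\bigr)^2 .
\]
Choosing $\delta$ to be a fixed constant and $t = c\, n$ with $c$ large enough that $\log(1+2/\delta) < c - 1$ makes $(1+2/\delta)^n e^{-t} \le e^{-n}$, so the bound holds with probability at least $1 - e^{-n}$, and the scaling $t \asymp n$ matches the $\sqrt{6 n \|H\|_{\op}}$ that appears after multiplying the target through by $n$.

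The main obstacle is the bookkeeping of the absolute constant: the generic net comparison inserts the multiplicative factor $(1-2\delta)^{-1} > 1$ in front of the entire square, which inflates the deterministic $\tr(H)$ term and spoils the clean additive form $\gamma^2(\sqrt{\tr(H)/n} + \sqrt{6\|H\|_{\op}})^2$. To avoid this I would separate the mean from the fluctuation: since $\EE[\bG H \bG^T] = \tr(H)\,\bI_n$, one has $\|\bG H \bG^T\|_{\op} = \tr(H) + \lambda_{\max}\bigl(\bG H \bG^T - \tr(H)\bI_n\bigr)$, and the net argument is applied only to the centered matrix $\bG H \bG^T - \tr(H)\bI_n$, whose pointwise quadratic forms $\xi_v^T H \xi_v - \tr(H)$ have mean zero. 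This confines the net's multiplicative loss to the fluctuation terms of order $\sqrt{\tr(H)\|H\|_{\op} t}$ and $\|H\|_{\op} t$; a careful choice of $\delta$ and $t \asymp n$ then reassembles the two contributions into the single square $\gamma^2(\sqrt{\tr(H)} + \sqrt{6n\|H\|_{\op}})^2$, which equals $n\gamma^2(\sqrt{\tr(H)/n} + \sqrt{6\|H\|_{\op}})^2$. Dividing by $n$ gives the stated inequality.
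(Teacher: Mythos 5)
Your opening steps are sound: for fixed unit $u$, $\xi_u=\bG^Tu$ is indeed $\gamma$ sub-Gaussian, Lemma \ref{lem_quad} gives the correct pointwise tail, and the union bound over a $\delta$-net of $S^{n-1}$ with $t\asymp n$ is the natural skeleton. You also correctly identify the true obstacle — the net comparison inflates the $\tr(H)$ term, and the stated bound leaves \emph{zero} multiplicative slack in that term. (Note the paper itself offers no argument to compare against: it defers the proof entirely to the cited reference \cite{bing2020prediction}.)

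The gap is that your proposed repair is vacuous. Subtracting $\tr(H)\bI_n$ changes nothing in the net argument: on the unit sphere, $u^T\bigl(\bG H\bG^T-\tr(H)\bI_n\bigr)u = u^T\bG H\bG^Tu-\tr(H)$ is a \emph{constant shift} of the function being maximized, so the oscillation the net must control, $|u^TMu-v^TMv|\le 2\|u-v\|\,\|\bG H\bG^T\|_{\op}$, is identical to the uncentered case; since $\|\bG H\bG^T\|_{\op}\ge \max_i \bG_{i\cdot}^TH\bG_{i\cdot}\approx \tr(H)$, the comparison error is still of order $\delta\,\tr(H)$. Moreover, the clean one-sided comparison $\lambda_{\max}(M)\le(1-2\delta)^{-1}\max_{v\in\cN}v^TMv$ that your plan needs holds only for positive semi-definite $M$; the centered matrix has $\lambda_{\min}(M)\ge-\tr(H)$ with near equality whenever $\bG H\bG^T$ is singular (e.g.\ $d<n$), so the valid comparison carries an additive error $2\delta\|M\|_{\op}\approx 2\delta\,\tr(H)$. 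Quantitatively the approach cannot close: the failure probability $e^{-n}$ forces $t\ge n\bigl(1+\log(1+2/\delta)\bigr)$, while keeping the constant $6$ in the $\|H\|_{\op}$ term forces $t\le 3n$, hence $\delta\ge 2/(e^2-1)>0.3$; the resulting net error $\gtrsim \tr(H)/3$ must then be absorbed by the only available slack, which is $O\bigl(\gamma^2\sqrt{n\,\tr(H)\|H\|_{\op}}\bigr)$ — impossible once $\tr(H)\gg \gamma^4 n\|H\|_{\op}$ (take $H=\bI_d$ with $d\gg n$). A secondary mismatch: Lemma \ref{lem_quad} controls exceedance over $\gamma^2\tr(H)$, not over the mean $\tr(H)$, so your ``mean-zero'' forms still retain a $(\gamma^2-1)\tr(H)$ term that the net factor inflates. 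What your argument does prove is the weaker bound $\frac1n\|\bG H\bG^T\|_{\op}\le C\gamma^2\bigl(\sqrt{\tr(H)/n}+\sqrt{C'\|H\|_{\op}}\bigr)^2$ with probability $1-e^{-n}$ for absolute constants $C,C'>1$ — enough for every application in this paper, but not the lemma as stated, whose exact constants are beyond any fixed-$\delta$ net argument of this type.
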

	
	Another useful concentration inequality of the operator norm of the random matrices with i.i.d. sub-Gaussian rows is stated in the following lemma. This is an immediate result of \citet[Remark 5.40]{vershynin_2012}.
	
	\begin{lemma}\label{lem_op_norm_diff}
 		Let $\bG$ be $n$ by $d$ matrix whose rows are i.i.d. $\gamma$ sub-Gaussian  random vectors with covariance matrix $\Sigma_Y$. Then for every $t\ge 0$, with probability at least  $1-2e^{-ct^2}$,
 		\[
 		\left\|	{1\over n}\bG^T \bG - \Sigma_Y\right\|_{{\rm op}}\le \max\left\{\delta, \delta^2\right\} \left\|\Sigma_Y\right\|_{{\rm op}},
 		\]
 		with $\delta = C\sqrt{d/n}+ t/\sqrt n$ where $c = c(\gamma)$ and $C=C(\gamma)$ are positive constants depending on $\gamma$.
 	\end{lemma}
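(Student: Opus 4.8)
The plan is to reduce the claim to the isotropic case and then invoke the standard net-plus-Bernstein argument. Write each row of $\bG$ as $Y_i = \Sigma_Y^{1/2}X_i$, where $X_i = \Sigma_Y^{-1/2}Y_i$ is isotropic ($\Cov(X_i)=\bI_d$) and $\gamma$ sub-Gaussian, and let $\bG_0$ be the $n\times d$ matrix with rows $X_i$. Then one has the algebraic identity
\[
    {1\over n}\bG^T\bG - \Sigma_Y = \Sigma_Y^{1/2}\left({1\over n}\bG_0^T\bG_0 - \bI_d\right)\Sigma_Y^{1/2},
\]
so submultiplicativity of the operator norm gives $\|n^{-1}\bG^T\bG - \Sigma_Y\|_{\op}\le \|\Sigma_Y\|_{\op}\,\|n^{-1}\bG_0^T\bG_0 - \bI_d\|_{\op}$. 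It therefore suffices to prove $\|n^{-1}\bG_0^T\bG_0 - \bI_d\|_{\op}\le \max\{\delta,\delta^2\}$ with probability at least $1-2e^{-ct^2}$ for the isotropic matrix $\bG_0$, which is precisely the content of \citet[Theorem~5.39]{vershynin_2012}.

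For completeness I would establish the isotropic bound as follows. Since $n^{-1}\bG_0^T\bG_0 - \bI_d$ is symmetric, its operator norm equals $\sup_{x\in S^{d-1}}\bigl|n^{-1}\|\bG_0 x\|_2^2 - 1\bigr|$. Fixing a unit vector $x$, the quantity $n^{-1}\|\bG_0 x\|_2^2 - 1 = n^{-1}\sum_{i=1}^n(\langle X_i,x\rangle^2 - 1)$ is an average of i.i.d.\ centered random variables; because $\langle X_i,x\rangle$ is $\gamma$ sub-Gaussian, its square is sub-exponential, and Bernstein's inequality for sub-exponential variables yields a pointwise tail of the form $2\exp\bigl(-c\,n\min\{\epsilon^2,\epsilon\}\bigr)$ at deviation level $\epsilon$.

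The remaining step is to upgrade this pointwise control to a uniform bound over the sphere. I would fix a $1/4$-net $\cN$ of $S^{d-1}$ with $|\cN|\le 9^d$ and use the standard approximation inequality $\|M\|_{\op}\le 2\max_{x\in\cN}|x^T M x|$ valid for symmetric $M$. A union bound over $\cN$ inflates the failure probability by $9^d$, which is absorbed by the $C\sqrt{d/n}$ part of $\delta$ (so that $n\delta^2\gtrsim d\log 9$), while the $t/\sqrt n$ part of $\delta$ supplies the advertised $e^{-ct^2}$ probability; matching the two Bernstein regimes $\min\{\epsilon^2,\epsilon\}$ at $\epsilon=\max\{\delta,\delta^2\}$ produces the stated form. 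The main obstacle is bookkeeping rather than conceptual: one must track carefully how the sub-Gaussian (small-deviation) and sub-exponential (large-deviation) regimes of Bernstein's inequality combine with the net cardinality $d\log 9$, so that the final exponent is clean in $t^2$ and the radius is exactly $\max\{\delta,\delta^2\}\|\Sigma_Y\|_{\op}$.
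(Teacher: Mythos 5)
Your proposal is correct and is essentially the paper's own proof: the paper disposes of this lemma by citing \citet[Remark 5.40]{vershynin_2012}, and the derivation behind that remark is exactly your argument --- whiten the rows, apply the isotropic bound $\|n^{-1}\bG_0^T\bG_0-\bI_d\|_{\op}\le\max\{\delta,\delta^2\}$ (Theorem 5.39, proved by the $1/4$-net plus sub-exponential Bernstein computation you sketch), and restore $\|\Sigma_Y\|_{\op}$ via $\|\Sigma_Y^{1/2}M\Sigma_Y^{1/2}\|_{\op}\le\|\Sigma_Y\|_{\op}\|M\|_{\op}$. The one point worth flagging is that for $c$ and $C$ to depend on $\gamma$ alone, the sub-Gaussian hypothesis should be read as applying to the whitened rows $\Sigma_Y^{-1/2}Y_i$ (which is how the paper states Assumption \ref{ass_error}, under which the lemma is invoked in Lemma \ref{lem_W_eigens}); under the literal reading, your whitening step inflates the sub-Gaussian parameter by $\lambda_{\min}(\Sigma_Y)^{-1/2}$ and the constants would inherit that dependence.
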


    The deviation inequalities of the inner product of two random vectors with independent sub-Gaussian elements are well-known; we state the one in \cite{bing2020inference} for completeness. 
	
	\begin{lemma}\cite[Lemma 10]{bing2020inference}\label{lem_bernstein}
		Let $\{X_t\}_{t=1}^n$ and $\{Y_t\}_{t=1}^n$ be any two sequences, each with zero mean independent $\gamma_x$ sub-Gaussian and $\gamma_y$ sub-Gaussian elements. Then, for some absolute constant $c>0$, we have 
		\[
		\PP\left\{{1\over n}\left|\sum_{t=1}^n\left(X_t Y_t - \EE[X_t Y_t]\right)\right| \le \gamma_x \gamma_y t \right\}\ge 1-2\exp\left\{-c\min\left( t^2,t \right)n\right\}.
		\]
		In particular, when $\log N\le n$, one has
		\[
		\PP\left\{{1\over n}\left|\sum_{t=1}^n\left(X_t Y_t - \EE[X_t Y_t]\right)\right| \le C\sqrt{\log N \over n} \right\}\ge 1-2N^{-c}
		\]
		where $c \ge 2$ and $C = C(\gamma_x,\gamma_y,c)$ are some positive constants.
	\end{lemma}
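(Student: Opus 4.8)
The plan is to recognize that the summands $X_tY_t$ are sub-exponential and then to invoke Bernstein's inequality for a sum of independent sub-exponential random variables. First I would record the elementary fact that a product of two sub-Gaussian variables is sub-exponential: writing $\|\cdot\|_{\psi_2}$ and $\|\cdot\|_{\psi_1}$ for the sub-Gaussian and sub-exponential Orlicz norms, one has $\|X_tY_t\|_{\psi_1}\le \|X_t\|_{\psi_2}\,\|Y_t\|_{\psi_2}\lesssim \gamma_x\gamma_y$. This can be seen either from the Orlicz-norm inequality $\|UV\|_{\psi_1}\le\|U\|_{\psi_2}\|V\|_{\psi_2}$, or by a moment computation: Cauchy--Schwarz gives $\EE|X_tY_t|^k\le(\EE|X_t|^{2k})^{1/2}(\EE|Y_t|^{2k})^{1/2}\lesssim(\gamma_x\gamma_y\,2k)^k$, so that $(\EE|X_tY_t|^k)^{1/k}\lesssim \gamma_x\gamma_y\,k$, which is exactly the linear-in-$k$ moment growth characterizing a sub-exponential variable of parameter order $\gamma_x\gamma_y$. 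Centering preserves this up to an absolute constant, so $Z_t:=X_tY_t-\EE[X_tY_t]$ is mean-zero with $\|Z_t\|_{\psi_1}\lesssim\gamma_x\gamma_y$. Only Cauchy--Schwarz is used here, so no independence of $X_t$ and $Y_t$ within a given index is required; the independence that matters is across $t$, which I read as the pairs $(X_t,Y_t)$ being independent over $t\in[n]$ (automatic once the two sequences are jointly independent across $t$).

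Next I would apply the standard Bernstein inequality for independent mean-zero sub-exponential variables: there is an absolute constant $c_0>0$ such that, with $K:=\max_t\|Z_t\|_{\psi_1}\lesssim\gamma_x\gamma_y$,
\[
\PP\left\{\left|\sum_{t=1}^n Z_t\right|\ge u\right\}\le 2\exp\left\{-c_0\min\left(\frac{u^2}{nK^2},\frac{u}{K}\right)\right\}.
\]
Taking $u=n\gamma_x\gamma_y\,t$ and substituting $K\asymp\gamma_x\gamma_y$ turns the two competing terms inside the minimum into $nt^2$ and $nt$, up to an absolute factor absorbed into $c_0$, giving
\[
\PP\left\{\frac{1}{n}\left|\sum_{t=1}^n\bigl(X_tY_t-\EE[X_tY_t]\bigr)\right|> \gamma_x\gamma_y\,t\right\}\le 2\exp\left\{-c\,\min(t^2,t)\,n\right\}
\]
for a suitable $c>0$, which is the first displayed bound after passing to the complementary event.

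For the ``in particular'' statement I would specialize $t=C'\sqrt{\log N/n}$ for a constant $C'$ to be chosen. Under $\log N\le n$ we have $\sqrt{n\log N}\ge\log N$, hence $n\min(t^2,t)=\min\bigl(C'^2\log N,\;C'\sqrt{n\log N}\bigr)\ge \min(C'^2,C')\,\log N$, so the tail is at most $2\exp\{-c\min(C'^2,C')\log N\}=2N^{-c\min(C'^2,C')}$. Choosing $C'$ large enough that $c\min(C'^2,C')\ge 2$ yields probability at least $1-2N^{-c}$ with $c\ge2$, while the threshold reads $\gamma_x\gamma_y C'\sqrt{\log N/n}=C\sqrt{\log N/n}$ with $C=C(\gamma_x,\gamma_y,c)$, as claimed.

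The only genuinely non-routine step is the first: verifying that the centered product is sub-exponential with parameter of order $\gamma_x\gamma_y$ and tracking that this parameter enters the Bernstein exponent so as to reproduce exactly the scale $\gamma_x\gamma_y\,t$ in the deviation and the clean $\min(t^2,t)$ dependence. Everything downstream is a mechanical substitution into Bernstein's inequality plus a choice of constants. Since this is a standard deviation bound, the cleanest exposition simply cites the product-of-sub-Gaussians lemma and the sub-exponential Bernstein inequality (e.g.\ from Vershynin's text) and fills in the constant bookkeeping, which is precisely the route taken in \cite{bing2020inference}.
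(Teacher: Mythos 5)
Your proof is correct. The paper itself gives no argument for this lemma --- it is imported verbatim from \cite{bing2020inference} with the remark that such deviation bounds are well known --- and your route (product of sub-Gaussians is sub-exponential via the $\psi_1$--$\psi_2$ Orlicz inequality or the moment-growth characterization, then Bernstein's inequality for independent mean-zero sub-exponential summands, then the substitution $t=C'\sqrt{\log N/n}$ using $\log N\le n$ to collapse the minimum) is exactly the standard derivation that the cited source relies on, with the constant bookkeeping handled correctly.
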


\end{document}